\def\llncs{0}
\def\fullpage{1}
\def\anonymous{0}
\def\authnote{0}
\def\notxfont{0}
\def\submission{0}

\def\inclappndx{1}
\def\compressed{0}

\ifnum\submission=1
\def\llncs{1}
\else
\def\inclappndx{1}
\fi

\ifnum\llncs=1
\documentclass[envcountsect,runningheads]{llncs}
\else
	\documentclass[letterpaper,hmargin=1.05in,vmargin=1.05in,11pt]{article}
			\ifnum\fullpage=1
		\usepackage{fullpage}
		\fi
\fi

\usepackage[%
  colorlinks=true,
  citecolor=blue,
  pagebackref=true
]{hyperref}

\usepackage{amsmath, amsfonts, amssymb, mathtools,amscd}

\usepackage{amsthm}

\usepackage{lmodern}
\usepackage[T1]{fontenc}
\usepackage[utf8]{inputenc}

\usepackage{etex}

\usepackage{arydshln} 
\usepackage{url}
\usepackage{ifthen}
\usepackage{bm}
\usepackage{multirow}
\usepackage[dvips]{graphicx}
\usepackage[usenames]{color}
\usepackage{xcolor,colortbl} 
\usepackage{threeparttable}
\usepackage{comment}
\usepackage{paralist,verbatim}
\usepackage{cases}
\usepackage{booktabs}
\usepackage{braket}
\usepackage{cancel} 
\usepackage{ascmac} 
\usepackage{framed}
\ifnum\compressed=0
    \usepackage{authblk}
\fi
\usepackage{pifont}
\usepackage{qcircuit}
\definecolor{darkblue}{rgb}{0,0,0.6}
\definecolor{darkgreen}{rgb}{0,0.5,0}
\definecolor{maroon}{rgb}{0.5,0.1,0.1}
\definecolor{dpurple}{rgb}{0.2,0,0.65}

\usepackage[capitalise,noabbrev]{cleveref}
\usepackage[absolute]{textpos}
\usepackage[final]{microtype}
\usepackage[absolute]{textpos}
\usepackage{everypage}
\DeclareMathAlphabet{\mathpzc}{OT1}{pzc}{m}{it}

\usepackage{algorithmic}
\usepackage{algorithm}
\usepackage{here}

\newtheoremstyle{thicktheorem}%
{\topsep}
{\topsep}
{\itshape}{}%
{\bfseries}%
{.}
{ }%
{\thmname{#1}\thmnumber{ #2}%
		\thmnote{ (#3)}%
}

\newtheoremstyle{remark}
{\topsep}
{\topsep}
	{}
	{}
	{}
	{.}
	{ }
	{\textit{\thmname{#1}}\thmnumber{ #2}
			\thmnote{ (#3)}%
	}

\ifnum\llncs=0
	\theoremstyle{thicktheorem}
	\newtheorem{theorem}{Theorem}[section]
	\newtheorem{lemma}[theorem]{Lemma}
	\newtheorem{corollary}[theorem]{Corollary}
	
	\newtheorem{definition}[theorem]{Definition}

	\theoremstyle{remark}
	
	\newtheorem{remark}[theorem]{Remark}

\else
\fi

\Crefname{MyClaim}{Claim}{Claims}

	\crefname{theorem}{Theorem}{Theorems}
	\crefname{assumption}{Assumption}{Assumptions}
	\crefname{construction}{Construction}{Constructions}
	\crefname{corollary}{Corollary}{Corollaries}
	\crefname{conjecture}{Conjecture}{Conjectures}
	\crefname{definition}{Definition}{Definitions}
	\crefname{exmaple}{Example}{Examples}
	\crefname{experiment}{Experiment}{Experiments}
	\crefname{counterexample}{Counterexample}{Counterexamples}
	\crefname{lemma}{Lemma}{Lemmata}
	\crefname{observation}{Observation}{Observations}
	\crefname{proposition}{Proposition}{Propositions}
	\crefname{remark}{Remark}{Remarks}
	\crefname{claim}{Claim}{Claims}
	\crefname{fact}{Fact}{Facts}
	\crefname{note}{Note}{Notes}

\ifnum\llncs=1
 \crefname{appendix}{App.}{Appendices}
 \crefname{section}{Sec.}{Sections}
\else
\fi

\ifnum\llncs=1
\pagestyle{plain}
\renewcommand*{\backref}[1]{}
\else
	\renewcommand*{\backref}[1]{(Cited on page~#1.)}
	\ifnum\notxfont=1
	\else
		\usepackage{newtxtext}
	\fi
\fi
\ifnum\authnote=0  
\newcommand{\mor}[1]{}
\newcommand{\barak}[1]{}
\newcommand{\takashi}[1]{} 

\else
\newcommand{\mor}[1]{$\ll$\textsf{\color{red} Tomoyuki: { #1}}$\gg$}
\newcommand{\takashi}[1]{$\ll$\textsf{\color{orange} Takashi: { #1}}$\gg$}
\newcommand{\barak}[1]{$\ll$\textsf{\color{purple} Barak: { #1}}$\gg$}

\fi

\newcommand*{\qreg}[1]{{\color{gray}{\mathsf{#1}}}}

\newcommand{\calY}{\mathcal{Y}}
\newcommand{\calX}{\mathcal{X}}

\newcommand{\Tr}{\mathrm{Tr}}



\newcommand{\NP}{\mathbf{NP}}















\newcommand{\la}{\leftarrow}
\newcommand{\ra}{\rightarrow}

\newcommand{\concat}{\|}

\newcommand{\cA}{\mathcal{A}}
\newcommand{\cB}{\mathcal{B}}

\newcommand{\cH}{\mathcal{H}}

\newcommand{\cK}{\mathcal{K}}

\newcommand{\cV}{\mathcal{V}}



\def\makeuppercase#1{
\expandafter\newcommand\csname tl#1\endcsname{\widetilde{#1}}
}

\def\makelowercase#1{
\expandafter\newcommand\csname tl#1\endcsname{\widetilde{#1}}
}




\newcommand{\regC}{\mathbf{C}}

\newcommand{\regR}{\mathbf{R}}


\newcommand{\secp}{\lambda}


\newcommand{\A}{\entity{A}}
\newcommand{\B}{\entity{B}}







\newcommand*{\algo}[1]{\ensuremath{\mathsf{#1}}}

\newcommand*{\entity}[1]{\mathcal{#1}}



\newenvironment{boxfig}[2]{\begin{figure}[#1]\fbox{\begin{minipage}{0.97\linewidth}
                        \vspace{0.2em}
                        \makebox[0.025\linewidth]{}
                        \begin{minipage}{0.95\linewidth}
            {{
                        #2 }}
                        \end{minipage}
                        \vspace{0.2em}
                        \end{minipage}}}{\end{figure}}



\newcommand{\bit}{\{0,1\}}





\newcommand{\fail}{\mathtt{fail}}


\newcommand{\setup}{\algo{Setup}}

\newcommand{\TD}{\algo{TD}}




\newcommand{\negl}{{\mathsf{negl}}}






\newcommand{\poly}{{\mathrm{poly}}}

\newcommand{\lang}{\mathcal{L}}
\newcommand{\rel}{\mathcal{R}}
\newcommand{\pro}{P}
\newcommand{\ver}{V}
\newcommand{\execution}[2]{\langle #1, #2 \rangle}
\newcommand{\OUT}{\mathsf{OUT}}

\makeatletter
\DeclareRobustCommand
  \myvdots{\vbox{\baselineskip4\p@ \lineskiplimit\z@
    \hbox{.}\hbox{.}\hbox{.}}}
\makeatother

\title{
Unconditionally Secure Commitments with Quantum Auxiliary Inputs 
}

\ifnum\anonymous=1
\ifnum\llncs=1
\author{\empty}\institute{\empty}
\else
\author{}
\fi
\else
%
%
\ifnum\llncs=1
\author{
Tomoyuki Morimae\inst{1} 
\and Barak Nehoran\inst{2}
\orcidID{0000-0001-7371-0829} 
\and Takashi Yamakawa\inst{3,4,1} 
\orcidID{0000-0003-1712-3026}
}
\institute{
    Yukawa Institute for Theoretical Physics, Kyoto University, Kyoto, Japan 
    \email{tomoyuki.morimae@yukawa.kyoto-u.ac.jp}\\
    \and 
    Princeton University, Princeton, NJ, USA\\ 
    \email{bnehoran@princeton.edu}\\
    \url{https://www.cs.princeton.edu/~bnehoran/}
    \and 
    NTT Social Informatics Laboratories, Tokyo, Japan
    \and 
    NTT Research Center for Theoretical Quantum Information, Atsugi, Japan
    \email{takashi.yamakawa@ntt.com}\\
    \url{https://sites.google.com/view/takashiyamakawa}
}
\else
%
%
\author[1]{Tomoyuki Morimae}
\author[2]{Barak Nehoran}
\author[3,4,1]{Takashi Yamakawa}
\affil[1]{{\small Yukawa Institute for Theoretical Physics, Kyoto University, Kyoto, Japan}\authorcr{\small tomoyuki.morimae@yukawa.kyoto-u.ac.jp}}
\affil[2]{{\small 
Princeton University, Princeton, NJ, USA
}\authorcr{\small bnehoran@princeton.edu}}
\affil[3]{{\small NTT Social Informatics Laboratories, Tokyo, Japan}\authorcr{\small 
takashi.yamakawa@ntt.com}}
\affil[4]{{\small NTT Research Center for Theoretical Quantum Information, Atsugi, Japan}}

\fi 
\fi

\date{}

\begin{document}

\maketitle

\begin{abstract}
We show the following unconditional results on quantum commitments in two related yet different models:  
\begin{enumerate}
\item We revisit the notion of quantum auxiliary-input commitments introduced by Chailloux, Kerenidis, and Rosgen (Comput. Complex. 2016) where  both the committer and receiver take the same quantum state, which is determined by the security parameter, 
as quantum auxiliary inputs. 
We show that computationally-hiding and statistically-binding quantum auxiliary-input commitments exist unconditionally, i.e., without relying on any unproven assumption, while Chailloux et al. assumed a complexity-theoretic assumption,
${\bf QIP}\not\subseteq{\bf QMA}$. 
On the other hand, we observe that achieving both statistical hiding and statistical binding at the same time is impossible even in the quantum auxiliary-input setting. 
To the best of our knowledge, this is the first example of unconditionally proving computational security of 
any form of (classical or quantum) commitments 
for which statistical security is impossible. 
As intermediate steps toward our construction, we introduce and unconditionally construct post-quantum sparse pseudorandom distributions and quantum auxiliary-input EFI pairs which may be of independent interest.
\item  
We introduce a new model which we call the common reference quantum state (CRQS) model where both the committer and receiver take the same quantum state that is randomly sampled by an efficient setup algorithm. 
We unconditionally prove that there exist statistically hiding and statistically binding commitments in the CRQS model, circumventing the impossibility in the plain model.  
\end{enumerate}
We also discuss their applications to zero-knowledge proofs, oblivious transfers, and multi-party computations.
\if0
Statistically-hiding and statistically-binding commitments are known to be impossible even in quantum cryptography.
In this paper, we introduce a new notion that we call the common reference quantum state (CRQS) model, 
and show that such commitments are possible in the CRQS model.
The CRQS model is a natural quantum analog of the common reference string model where the setup algorithm distributes many copies of the same quantum state.
\fi
\if0
We study quantum cryptographic protocols in the common reference quantum state (CRQS) model, which is a natural quantum analog of the common reference string model. In the CRQS model, the setup algorithm first generates many copies of the same state and then distributes a copy of them to every party. We construct information-theoretically secure commitments, oblivious transfers, and multi-party (quantum) computation protocols in the CRQS model. 
We also show similar unconditional results for cryptographic protocols with quantum auxiliary inputs.   
\fi
\end{abstract}

\if0

\ifnum\submission=1
\else
\clearpage
\newpage
\setcounter{tocdepth}{2}
\tableofcontents
\fi

\fi


\section{Introduction}
Commitments are one of the most fundamental primitives in cryptography. 
A committer commits a bit string to a receiver, and later it is opened.
The receiver cannot learn the committed message until it is opened (hiding),
and the committer cannot change the message once it is committed (binding).
It is easy to see that in classical cryptography, it is impossible to achieve both statistical hiding and statistical binding 
at the same time (i.e., hiding and binding hold against computationally unbounded adversaries).\footnote{Let $\mathsf{Comm}(b,x)$ be the commitment of the bit $b$ with the decommitment $x$.
Because it is statistically hiding, there should be another $x'$ such that $\mathsf{Comm}(b\oplus 1,x')=\mathsf{Comm}(b,x)$.
However, then, an unbounded malicious committer can compute $x'$ to break the binding.}
It is likewise well-known that the no-go holds even in quantum cryptography where quantum computing and quantum communication are possible~\cite{LoChau97,Mayers97}.
Thus, there has been an extensive body of research on quantum commitments with computational hiding or binding where the computational power of the adversary is assumed to be quantum polynomial-time (QPT) ~\cite{EC:DumMaySal00,EC:CreLegSal01,KO09,KO11,YWLQ15,C:MorYam22,C:AnaQiaYue22,EC:HhaMorYam23}. 

This raises the question of what is the minimal complexity assumption that would imply quantum commitments. 
In the classical setting, the existence of commitments is known to be equivalent to the existence of one-way functions~\cite{SIAMCOMP:HILL99,Nao91}. 
In the quantum setting, Brakerski, Canetti, and Qian~\cite{ITCS:BraCanQia23} (based on an earlier work by Yan~\cite{AC:Yan22}) recently showed that the existence of quantum commitments is equivalent to the existence of \emph{EFI pairs}, which are pairs of two efficiently generatable quantum states that are statistically far but computationally indistinguishable. 
Some useful quantum cryptographic primitives such as oblivious transfers (OTs) and multiparty computations (MPCs)
can be constructed from EFI pairs~\cite{C:BCKM21b,EC:GLSV21,C:MorYam22,C:AnaQiaYue22}. Moreover, various quantum cryptographic primitives imply EFI pairs including pseudorandom states generators~\cite{C:JiLiuSon18}, private-key quantum money with pure banknotes~\cite{C:JiLiuSon18,cryptoeprint:2023/1620},
one-way states generators with pure output states~\cite{C:MorYam22,cryptoeprint:2023/1620},  
and one-time-secure secret-key encryption~\cite{cryptoeprint:2022/1336}.
The ultimate goal would be to prove the existence of EFI pairs \emph{unconditionally}, i.e., without relying on any unproven assumption, which would lead to the unconditional existence of commitments (and, consequently, 
OTs and MPCs).\footnote{One might mistakenly assume that because statistically-secure commitments are impossible, commitments and EFI pairs may not exist unconditionally. However, this is not correct: The unconditional existence (i.e., existence without any assumption) of a cryptographic primitive, and its statistical security (i.e., security against unbounded adversaries), are not the same and should not be confused. Here, we mean the unconditional existence (without relying on an unproven complexity assumption) of commitments that are \emph{computationally} secure (i.e., either hiding or binding is secure only against QPT adversaries).}

There has been recent progress in understanding the computational complexity of EFI pairs. For instance, their complexity cannot be captured within traditional complexity classes of classical inputs and outputs~\cite{cryptoeprint:2023/1602}, and instead requires the study of unitary complexity classes~\cite{BEMPQY23}.
It is yet unclear, however, if known barriers to proving complexity separations from the classical theory may reappear in similar form in the quantum case (though the barriers are not directly implied), or if the distinctive nature of unitary complexity classes allows such results to be within reach of present techniques. In this work, we demonstrate that the complexity assumptions previously thought necessary can in fact be removed from commitments of a special non-standard form: those that take quantum auxiliary inputs.

\paragraph{\bf (Quantum) auxiliary-input commitments.}
Chailloux, Kerenidis, and Rosgen~\cite{CKR16} studied computationally-secure quantum commitments in an \emph{auxiliary-input} setting where the honest 
committer and receiver can take classical or quantum auxiliary inputs (which are not necessarily efficiently generatable) for executing the protocol.\footnote{Quantum auxiliary inputs are called \emph{quantum advice} in \cite{CKR16}. We use ``auxiliary input'' and ``advice'' interchangeably.} Such auxiliary-input versions of cryptographic primitives have often been studied when we only assume worst-case assumptions that do not imply conventional notions of cryptographic primitives~\cite{OW93,SIAM:Vadhan06,ITCS:Nanashima21}.  
The paper \cite{CKR16} obtained feasibility results on classical and quantum auxiliary-input quantum commitments based on some worst-case complexity assumptions.  
Specifically, they showed that: 
\begin{itemize}
\item if $\mathbf{QSZK}\nsubseteq \mathbf{QMA}$, then there are classical auxiliary-input quantum commitments (that are statistically binding and computationally hiding), and
\item if $\mathbf{QIP}\nsubseteq \mathbf{QMA}$, then there are quantum auxiliary-input commitments\footnote{When we consider the quantum auxiliary-input setting, we omit ``quantum'' before ``commitments'' since it is clear that they must be quantum commitments given that the auxiliary input is quantum.}
 (both statistically
hiding and computationally binding as well as statistically binding
and computationally hiding).
\end{itemize}
Brakerski, Canetti, and Qian~\cite{ITCS:BraCanQia23} gave an alternative construction of classical auxiliary-input quantum commitments assuming $\mathbf{QCZK}\nsubseteq \mathbf{BQP}$.  
This is, however, not a viable path to proving the unconditional existence of quantum auxiliary-input commitments, since proving any of these complexity-theoretic separations \emph{unconditionally} seems still out of reach of the current knowledge of complexity theory.
In this work, we show, perhaps surprisingly, that such complexity-theoretic assumptions can be removed completely.

\subsection{Our Results}
\paragraph{\bf Quantum auxiliary-input commitments.}
We show that quantum auxiliary-input commitments exist \emph{unconditionally}, i,e., without relying on any unproven assumption. 
\begin{theorem}
There exist quantum auxiliary-input commitments that satisfy computational hiding and statistical binding.
\end{theorem}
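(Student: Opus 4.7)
The plan is to construct the commitment via the two intermediate primitives promised in the excerpt: \emph{post-quantum sparse pseudorandom distributions} (sparse PRDs) and \emph{quantum auxiliary-input EFI pairs}. First I would unconditionally construct a sparse PRD, namely a family $\{D_\lambda\}$ of distributions on $\{0,1\}^{n(\lambda)}$ whose support $S_\lambda$ has size $s(\lambda)$ that is super-polynomial but sub-exponential in $\lambda$, yet which is computationally indistinguishable from the uniform distribution by every polynomial-size quantum adversary with polynomial-size quantum advice. Existence follows from the probabilistic method: for a uniformly random sparse $S_\lambda$ of the right size, a Chernoff/Hoeffding bound shows that any fixed polynomial-size quantum adversary distinguishes $U_{S_\lambda}$ from $U$ with exponentially small probability; a union bound over adversaries, after discretizing gate parameters and covering the continuum of quantum advice states by an $\epsilon$-net of at most doubly exponential size, finishes the argument provided $s(\lambda)$ is chosen large enough.

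Next I would build a quantum auxiliary-input EFI pair. Let the quantum auxiliary input be $|\xi_\lambda\rangle = |S_\lambda|^{-1/2}\sum_{x \in S_\lambda}|x\rangle$, the uniform superposition over the sparse set. Define $\rho_0$ as the computational-basis measurement outcome of one copy of $|\xi_\lambda\rangle$ and $\rho_1$ as the maximally mixed state on $n(\lambda)$ qubits. The trace distance is $1 - s(\lambda)/2^{n(\lambda)}$, negligibly close to $1$ by sparsity. For computational indistinguishability I would reduce to the sparse PRD: an adversary who distinguishes $\rho_0$ from $\rho_1$ using $|\xi_\lambda\rangle^{\otimes \poly}$ yields a distinguisher between $U_{S_\lambda}$ and $U$, with the advice copies simulated in the hybrid by additional samples from the challenge distribution. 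Taking $s(\lambda)$ super-polynomial ensures that SWAP-test-style coherent attacks, whose per-query advantage is $O(1/s(\lambda))$, cannot accumulate a noticeable edge.

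Finally I would invoke the Brakerski--Canetti--Qian black-box transformation \cite{ITCS:BraCanQia23} from EFI pairs to commitments. The transformation lifts directly to the auxiliary-input model by threading $|\xi_\lambda\rangle$ through both the honest protocol and the security reduction; the resulting commitment is computationally hiding (from EFI indistinguishability) and statistically binding (from EFI statistical distance). The hardest step will be the sparse PRD construction, since polynomial-size quantum advice is continuous and so the classical union bound must be preceded by a careful $\epsilon$-net argument whose cardinality blows up doubly exponentially in the advice length; this forces $s(\lambda)$ into the sub-exponential regime with enough slack for concentration to overwhelm the net. A related subtle point in the EFI reduction is verifying that no quantum attack exploiting the auxiliary state beyond basis measurement (e.g., SWAP tests or partial tomography) extracts polynomial advantage, which is precisely what the quantum-advice-robust sparse PRD guarantees.
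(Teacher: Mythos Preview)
Your high-level pipeline (sparse PRD $\to$ auxiliary-input EFI $\to$ commitment) matches the paper's, but two of the three steps have genuine gaps as you describe them.

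For the sparse PRD, scalar Chernoff plus a union bound over an $\epsilon$-net of quantum advice does not close. For $q$-qubit advice the net has cardinality on the order of $(1/\epsilon)^{2^{q+1}}$; beating this with Chernoff forces $s(\lambda)\gg 2^{q(\lambda)}$, and since $q$ must range over \emph{all} polynomials in $\lambda$ (equivalently, over some super-polynomial cutoff $f(\lambda)$), no fixed $s(\lambda)$---certainly no sub-exponential one---suffices without making $n(\lambda)$ super-polynomial and hence the EFI states non-polynomial-size. The paper explicitly flags this obstacle and sidesteps it via the quantum-advice PRG bound of \cite{EC:Liu23}, whose distinguishing advantage $O((S/N)^{1/3})$ depends only \emph{polynomially} on the advice length $S$; an averaging argument over random length-doubling functions $H:\{0,1\}^{\lfloor\lambda/2\rfloor}\to\{0,1\}^\lambda$ then pins down a single good $H^*_\lambda$. (The paper also remarks that a matrix Chernoff bound would work, since it controls the worst-case advice via an operator-norm deviation and pays only a single-exponential dimension factor $2^q$; but scalar Chernoff plus an $\epsilon$-net does not.)

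For the commitment, the BCQ transformation does \emph{not} lift black-box. The BCQ receiver verifies by applying the projector $\{|\psi_b\rangle\langle\psi_b|,I-|\psi_b\rangle\langle\psi_b|\}$, which requires an efficient preparation circuit for $|\psi_b\rangle$; here $|\psi_0\rangle$ purifies the sparse-support mixture and is not efficiently preparable, so merely ``threading $|\xi_\lambda\rangle$ through'' does not make the receiver efficient. The paper's fix is to place $m=\omega(\log\lambda)$ fresh copies of $|\psi_0\rangle$ and $|\psi_1\rangle$ in the auxiliary input and have the receiver run $m$ parallel SWAP tests against them; since a single SWAP test accepts an arbitrary state with probability at least $1/2$, amplification is essential, and establishing (extractor-based) statistical binding for the repeated-SWAP receiver requires a separate analysis via \cite{JACM:HarMon13} and the template of \cite{cryptoeprint:2020/1488}. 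Your proposal addresses neither the verification-efficiency issue nor the need for amplification.
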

We stress that the unconditional construction does not mean that our construction satisfies statistical hiding and statistical binding. 
Rather, we prove that our construction satisfies \emph{computational} hiding and statistical binding without relying on any unproven assumption.   
Indeed, we show that it is impossible to achieve both statistical hiding and statistical binding simultaneously even in the quantum auxiliary-input setting by extending the impossibility result in the plain model~\cite{LoChau97,Mayers97}.
To the best of our knowledge, this is the first example of 
any form of (classical or quantum) commitments 
for which statistical security is impossible but computational security is proven unconditionally.

\paragraph{\bf Post-quantum sparse pseudorandom distributions and quantum auxiliary-input EFI.}
As intermediate steps to show the above theorem, we introduce two new primitives, namely,
a post-quantum version of sparse pseudorandom distributions defined in~\cite{DBLP:journals/rsa/GoldreichK92} 
and a quantum auxiliary-input version of EFI pairs,\footnote{Quantum auxiliary-input EFI pairs are mentioned in \cite{ITCS:BraCanQia23} without a formal definition.} and show their existence unconditionally. 
\begin{theorem}
There exist post-quantum sparse pseudorandom distributions and
quantum auxiliary-input EFI pairs. 
\end{theorem}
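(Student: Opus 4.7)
The plan is to establish the two claims in sequence, with the sparse pseudorandom distribution furnishing the randomness from which the EFI pair is directly built.

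For post-quantum sparse pseudorandom distributions, I would adapt the probabilistic / diagonalization template of Goldreich--Krawczyk to QPT adversaries. Fix a super-polynomial size $t(n) = n^{\omega(1)}$, consider a uniformly random subset $S_n \subseteq \{0,1\}^n$ of size $t(n)$, and let $D_n$ be uniform on $S_n$. For each fixed polynomial-size quantum circuit $C$, a Chernoff--Hoeffding concentration over the random choice of $S_n$ shows that $C$'s advantage in distinguishing $D_n$ from $U_n$ exceeds $\varepsilon$ only with probability $\exp\bigl(-\Omega(t(n)\varepsilon^2)\bigr)$. I would then discretize the QPT class via Solovay--Kitaev, reducing every QPT distinguisher (up to negligible error) to a polynomial-size circuit over the Clifford$+T$ gate set; since there are only $2^{O(s \log s)}$ such circuits of size $s$, a union bound is feasible once $t(n)$ grows faster than every polynomial. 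A standard diagonalization over $n$ then pins down a concrete sequence $\{S_n\}_n$ giving the required post-quantum sparse pseudorandom distribution.

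For the quantum auxiliary-input EFI pair, I would take the auxiliary input to be the subset state $|\phi_n\rangle := \tfrac{1}{\sqrt{|S_n|}}\sum_{x \in S_n} |x\rangle$, where $S_n$ is the support of the sparse distribution above. Define $\rho_0$ by measuring one copy of $|\phi_n\rangle$ in the computational basis, which yields $\mathbb{E}_{x \sim D_n}[|x\rangle\langle x|]$ supported on $S_n$, and define $\rho_1$ by measuring $|+\rangle^{\otimes n}$, which consumes no auxiliary input and yields $I/2^n$. Both are QPT-generatable given the auxiliary input, and the statistical farness $\tfrac{1}{2}\|\rho_0 - \rho_1\|_1 = 1 - |S_n|/2^n$ is negligibly close to $1$. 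It remains to argue computational indistinguishability: for any polynomial $k$, no QPT adversary given $|\phi_n\rangle^{\otimes k}$ together with the challenge can distinguish $\rho_0$ from $\rho_1$ non-negligibly.

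The main obstacle is this last indistinguishability, because the adversary now holds quantum side information about $S_n$. My plan is to prove a technical lemma stating that, for uniformly random $S$ of super-polynomial size $t$ and every fixed $z \in \{0,1\}^n$, the trace distance $\bigl\| \mathbb{E}_{S \ni z}[|S\rangle\langle S|^{\otimes k}] - \mathbb{E}_{S \not\ni z}[|S\rangle\langle S|^{\otimes k}] \bigr\|_1$ is $O(k/t)$, hence negligible for polynomial $k$. The intuition is that $k$ copies of a subset state reveal at most $O(k)$ of its elements (via measurement and its quantum refinements), so whether any single $z$ lies in $S$ is essentially information-theoretically hidden. Re-running the Goldreich--Krawczyk probabilistic + diagonalization argument against the broader class of QPT distinguishers that additionally take $|\phi_n\rangle^{\otimes k}$ as quantum advice then produces a fixed $S_n$ for which $(\rho_0, \rho_1)$ is a bona fide quantum auxiliary-input EFI pair. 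Pinning down the precise trace-distance bound, and verifying that concentration and the union bound still go through in the presence of quantum advice, is the technically delicate step.
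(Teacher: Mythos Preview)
Your plan has a genuine gap in the first part that the paper explicitly identifies. The paper's definition of post-quantum pseudorandomness requires security against non-uniform QPT distinguishers \emph{with quantum advice} $\rho_\secp$. Your Chernoff-then-union-bound argument discretizes the circuit via Solovay--Kitaev and counts $2^{O(s\log s)}$ circuits of size $s$, but this only covers classical non-uniformity. Once the adversary carries an $s$-qubit advice state, even an $\epsilon$-net of such states has size doubly exponential in $s$, so the union bound blows up. The paper says this verbatim and sidesteps it by invoking Liu's bound on random oracles against adversaries with quantum advice (\Cref{cor:QROM_PRG}): for random $H:\{0,1\}^{\lfloor \secp/2\rfloor}\to\{0,1\}^\secp$, any adversary with $S$-qubit advice depending on $H$ distinguishes $H(x)$ from uniform with advantage $O((S/N)^{1/3})$. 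An averaging/min-max step then fixes a single $H^*$ that is good against \emph{all} polynomial-size advice simultaneously. (The paper also remarks that a matrix Chernoff bound on the empirical density operator, in the style of Kretschmer, gives an alternative proof; that would be the right way to rescue the probabilistic-method template, but it is not the scalar Chernoff plus circuit-enumeration you describe.)

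Your EFI construction inherits this gap and also overcomplicates matters because of a misreading of the definition. A quantum auxiliary-input EFI pair is simply a pair of $\poly(\secp)$-qubit states that are statistically far and computationally indistinguishable to QPT adversaries with arbitrary quantum advice; there is no requirement that the states be efficiently generated from some auxiliary input. So the paper just sets $\xi_{\secp,0}=\sum_y D_\secp(y)\ket{y}\bra{y}$ and $\xi_{\secp,1}=I/2^\secp$ and is done: statistical farness is sparseness, and computational indistinguishability against arbitrary quantum advice is exactly the post-quantum pseudorandomness of $D_\secp$. Your subset-state detour and the proposed lemma about $\|\mathbb{E}_{S\ni z}[\ket{S}\bra{S}^{\otimes k}]-\mathbb{E}_{S\not\ni z}[\ket{S}\bra{S}^{\otimes k}]\|_1$ are unnecessary, and in any case only address the specific advice $\ket{\phi_n}^{\otimes k}$ rather than arbitrary $\rho_\secp$, which is what the definition demands.
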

A sparse pseudorandom distribution is a (not necessarily efficiently samplable) classical distribution supported by a sparse subset but indistinguishable from the uniform distribution against
classical non-uniform polynomial-time distinguishers.
Goldreich and Krawczyk 
\cite{DBLP:journals/rsa/GoldreichK92} showed the existence of such distributions unconditionally.
We extend sparse pseudorandom distributions to the post-quantum setting where the distinguishers are QPT with quantum advice,
and construct them unconditionally.
Quantum auxiliary-input EFI pairs are the same as the EFI pairs except that the two states
are not necessarily efficiently generatable. We construct quantum auxiliary-input EFI pairs from post-quantum sparse pseudorandom distributions,
which shows the unconditional existence of quantum auxiliary-input EFI pairs.
Interestingly, in the classical case, it is not known 
how to construct (even auxiliary-input) classical commitments from classical sparse pseudorandom distributions.

\paragraph{\bf Application to zero-knowledge.}
As an application of our quantum auxiliary-input commitments, we plug them into Blum's Hamiltonicity protocol~\cite{Blu87} to obtain the following theorem.
\begin{theorem}
There exist zero-knowledge proofs for $\mathbf{NP}$ in the quantum auxiliary-input setting with non-uniform simulation (with quantum advice) and soundness error $1/2$.
\end{theorem}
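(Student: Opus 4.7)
The plan is to instantiate Blum's classical three-round Hamiltonicity protocol \cite{Blu87} using the quantum auxiliary-input commitment from the previous theorem in place of the classical statistically-binding, computationally-hiding commitment. Since Hamiltonicity is $\mathbf{NP}$-complete, this yields a zero-knowledge proof for all of $\mathbf{NP}$. All three parties---prover, verifier, and simulator---receive the same quantum auxiliary state $\rho_\lambda$ required by the commitment scheme.

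On common input a graph $G$ with a Hamilton cycle known to the prover, the protocol proceeds as follows. The prover samples a uniformly random vertex permutation $\pi$ and bit-by-bit commits to the entries of the permuted adjacency matrix $M = \pi(G)$ using the auxiliary-input commitment. The verifier sends a uniform challenge $c \in \{0,1\}$. If $c = 0$, the prover opens every commitment and reveals $\pi$; the verifier accepts iff $M = \pi(G)$. If $c = 1$, the prover opens only the $n$ entries of $M$ corresponding to a Hamilton cycle; the verifier accepts iff the opened values form a cycle. Completeness is immediate. For soundness, statistical binding of the commitment guarantees that, except with negligible probability, each commitment determines a unique bit. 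Conditioned on binding, if $G$ is not Hamiltonian then no matrix can simultaneously be a permutation of $G$ and contain a Hamilton cycle, so at most one of the two challenges admits a valid opening. This gives soundness error $1/2 + \negl(\lambda)$, which the stated bound of $1/2$ absorbs.

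For zero-knowledge I would apply Watrous's quantum rewinding lemma. The simulator, on auxiliary inputs $\rho_\lambda$ and the verifier's advice $\sigma$, samples a guess $c' \in \{0,1\}$ and prepares a commitment it can correctly open in the case the verifier's challenge equals $c'$: for $c' = 0$ it commits to $\pi(G)$ for a random $\pi$, and for $c' = 1$ it commits to the adjacency matrix of a random Hamilton cycle together with zeros elsewhere. It then runs the verifier coherently; if the measured challenge equals $c'$, it produces the transcript, otherwise it rewinds. The computational hiding of the auxiliary-input commitment---which holds against non-uniform QPT distinguishers carrying both $\rho_\lambda$ and $\sigma$---ensures that the probability of $c = c'$ is negligibly close to $1/2$ and, critically, nearly independent of $c'$. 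Watrous's oblivious rewinding then outputs a simulated view statistically close to the real interaction.

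The main obstacle is ensuring that quantum rewinding behaves correctly in the presence of the quantum auxiliary inputs on both sides. Since $\rho_\lambda$ is a fixed state presented to the parties at the start (and not produced or consumed by the verifier's measurement), one can treat the pair $(V^*, \sigma \otimes \rho_\lambda)$ as the monolithic quantum machine being rewound, and Watrous's analysis applies unchanged. The quantitative hypothesis Watrous needs---that the success probability of the guess is within $\negl(\lambda)$ of a fixed value independent of $c'$---is precisely what non-uniform computational hiding of our commitment delivers; verifying this reduction carefully, and confirming that the hiding property transfers when the distinguisher additionally holds $\sigma$, is the one place where care is required but where no new ideas beyond the standard post-quantum Blum analysis are needed.
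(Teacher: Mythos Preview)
Your overall plan—instantiate Blum's Hamiltonicity protocol with the quantum auxiliary-input commitment—is exactly what the paper does, and your soundness sketch is essentially right (the paper phrases it via the extractor-based binding property, running the inefficient extractor to pin down the committed adjacency matrix and then arguing as in the classical case, but this is morally your ``each commitment determines a unique bit'').

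Where you diverge from the paper is in the zero-knowledge argument. The paper does \emph{not} use Watrous rewinding. Instead, it exploits the non-uniform simulation model directly: the simulator's quantum advice is taken to be $\secp$ independent copies of \emph{both} the verifier's advice and the commitment auxiliary input $\ket{\psi_\secp}$. The simulator then runs up to $\secp$ independent attempts, each time consuming one fresh copy of everything, and halts on the first success; hiding guarantees each attempt succeeds with probability $1/2 \pm \negl(\secp)$, so failure after $\secp$ attempts is negligible. This sidesteps rewinding entirely.

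Your Watrous-based route may also be workable, but your justification underestimates the subtlety. You write that $\rho_\lambda$ is ``not produced or consumed'' and can simply be folded into the verifier being rewound—but the simulator's own copy of the auxiliary input \emph{is} consumed, by the committing step itself (committing to $b$ sends registers of $\ket{\psi_{\secp,b}}^{\otimes m}$ into the commitment). For Watrous to apply you must treat the simulator's auxiliary input as part of the input state to $Q$ and argue that $Q^\dagger$ coherently restores it; you also need only the per-input version of the rewinding lemma, since the success-probability bound holds for the specific state $\sigma \otimes \ket{\psi_\secp}^{\otimes p(\secp)}$ and not for arbitrary states on those registers. The paper flags exactly this concern (see the discussion of why the BCKM oblivious-transfer compiler fails in the auxiliary-input setting because ``there may not be an efficient way to generate the quantum auxiliary input, which prevents us from applying Watrous' rewinding lemma'') and chooses the many-copies approach precisely to avoid having to make that argument carefully.
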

We can also use our quantum auxiliary-input commitments to instantiate the 3-coloring protocol of \cite{SICOMP:GolMicRac89} and the quantum $\Sigma$-protocol for $\mathbf{QMA}$ of \cite{SIAM:BG22}.  
On the other hand, unfortunately, we do not know how to instantiate the construction of OTs of ~\cite{C:BCKM21b} using our quantum auxiliary-input commitments.  
This is due to the fact that there may not be an efficient way to generate the quantum auxiliary-input, which prevents us from applying Watrous' rewinding lemma~\cite{SIAM:Watrous09} that is used in the security proof in ~\cite{C:BCKM21b}.\footnote{A knowledgeable reader may wonder why we could prove quantum zero-knowledge even though the original proof in the plain model also uses Watrous' rewinding lemma. 
This is because we can avoid using Watrous' rewinding lemma in the fully non-uniform simulation setting (see \Cref{sec:overview}).
}

\paragraph{\bf Commitments in the CRQS model.} 
Our result on quantum auxiliary-input commitments is theoretically interesting. However, the fact that there is no efficient way to generate the quantum auxiliary input makes it unlikely to have uses in real-world applications.
We therefore consider an alternative model that involves only efficiently generatable states. 
Specifically, we introduce a new notion that we call the {\it common reference quantum state (CRQS) model},\footnote{\cite{cryptoeprint:2022/435} also introduced a model which they call the CRQS model, but their model is different from ours. 
In their CRQS model, parties may take arbitrarily entangled quantum states as setup. In our opinion, this is a quantum analog of the correlated randomness model~\cite{TCC:IKMOP13} rather than the common reference string model.
} 
where an efficient setup algorithm randomly samples a classical key $k$ and then
distributes many copies of a (pure)\mor{why you restrict to pure?}\takashi{I feel that's more natural analog of classical CRS.
If we allow mixed, then that seems to be a generalization of classical common reference \emph{distribution} model.
}\mor{But pure state also represents some distribution. Anyway, feel free to go ahead. Or may be you add some footnote?} quantum state $\ket{\psi_k}$ associated with the key $k$. 
It is a natural quantum analog of the common reference string model in classical cryptography.

At first glance, the CRQS model may look similar to the quantum auxiliary-input setting since in both settings, the committer and receiver receive some quantum state as a resource for executing the protocol. However, the crucial differences are that:
\begin{enumerate}
\item quantum auxiliary input may not be efficiently generatable, but a CRQS must be efficiently generated by the setup algorithm, and  
\item whereas quantum auxiliary input must be a fixed quantum state that is determined by the security parameter, the CRQS is instead associated with a randomly sampled classical key $k$ that is hidden from the adversary.\footnote{Later we will see that this property enables the CRQS model to realize statistically-secure commitments (i.e., both hiding and binding are statistical), a feat which is not possible in the auxiliary-input setting.}
\end{enumerate}
Thus, the two settings are incomparable. Nonetheless, we can use a similar idea to construct commitments in the CRQS model unconditionally:
\begin{theorem}
There exist commitments in the CRQS model that satisfy statistical hiding against adversaries that are given bounded polynomial number of copies of the CRQS $\ket{\psi_k}$ and statistical binding even against adversaries that are given the classical key $k$.\footnote{The honest receiver requires only a single copy of the CRQS $\ket{\psi_k}$, but we consider hiding security even against a malicious receiver that collects many copies of $\ket{\psi_k}$. Similarly, the honest committer requires only a single copy of the CRQS $\ket{\psi_k}$, but we consider binding security even against a malicious committer that knows the secret key $k$. These relaxations just make the theorem stronger.}   
\end{theorem}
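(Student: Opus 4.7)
The plan is to construct the statistically-hiding, statistically-binding CRQS commitment by exploiting the quantum no-cloning phenomenon: while the honest parties each receive one copy of $\ket{\psi_k}$, a malicious receiver holding only polynomially many copies cannot information-theoretically recover the classical key $k$, even though $k$ itself is a short classical string. This information-theoretic hiding of $k$ (made possible because the adversary cannot clone) will be the source of statistical hiding, and the fact that $k$ uniquely determines the commitment structure will yield statistical binding even against a committer who knows $k$.

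At a high level, I would use a family $\{\ket{\psi_k}\}_k$ (for example, random subspace states indexed by a uniformly sampled subspace $k$, or states drawn from a sufficiently high-order unitary $t$-design) with the property that $\mathbb{E}_k [\,\ket{\psi_k}\!\bra{\psi_k}^{\otimes t}\,]$ is statistically close to $\mathbb{E}_{\ket{\varphi}\sim\mathrm{Haar}}[\,\ket{\varphi}\!\bra{\varphi}^{\otimes t}\,]$ for every $t = \poly(\lambda)$. The setup samples $k$ uniformly and distributes copies of $\ket{\psi_k}$. The commit protocol uses the committer's single copy of $\ket{\psi_k}$ (together with the bit $b$) to produce a commitment message that is statistically close to some canonical distribution independent of $b$ when $k$ is unknown, but uniquely determines $b$ when $k$ is known. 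In spirit, this follows the same template as our quantum auxiliary-input commitment, except that the ``advice'' is now efficiently samplable and its classical label $k$ is hidden from the receiver, allowing the hiding guarantee to be upgraded from computational to statistical.

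The analysis splits into two parts. For statistical hiding, I would bound the trace distance between the joint view of a malicious receiver with $t = \poly(\lambda)$ copies of $\ket{\psi_k}$ under $b=0$ and under $b=1$ by reducing (via the $t$-design / symmetric-subspace property above) to the case where $\ket{\psi_k}$ is replaced by a Haar-random state, and then showing that in the Haar-random experiment $k$ retains sufficient min-entropy conditioned on the $t$ copies to statistically mask $b$. For statistical binding, I would argue that no quantum strategy of the committer---even one that knows $k$ and holds arbitrary quantum side information---can open a single commitment to both $b = 0$ and $b = 1$ with non-negligible probability; this will reduce to a structural property of the commit message's dependence on $k$ (e.g., that the verification predicates for $b=0$ and $b=1$, which are functions of $k$, have essentially disjoint sets of accepting messages).

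The main obstacle I expect is the statistical hiding bound: unlike in the auxiliary-input construction, where the hiding reduction was computational and relied on the sparse pseudorandom distribution, here the argument must be genuinely information-theoretic and withstand arbitrary entangled measurements across all $\poly(\lambda)$ copies of $\ket{\psi_k}$. Getting a tight bound will likely require either working with exact $t$-designs for a $t$ larger than any polynomial the adversary can use (handled by a standard guessing game between parameters), or invoking symmetric-subspace estimates to relate the mixed state over $k$ to the Haar average. Once the hiding bound is in place, the binding analysis should be comparatively direct, since with $k$ fixed the commitment behaves as a classical (key-dependent) encoding that can be chosen to be injective in $b$.
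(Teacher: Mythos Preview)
Your high-level template---reuse the quantum auxiliary-input construction but make the reference state efficiently samplable and hide its classical label $k$---matches the paper. However, your proposed instantiation and hiding analysis diverge from what the paper actually does, and your hiding argument as stated has a gap.

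The paper does \emph{not} use subspace states or unitary $t$-designs, and the hiding proof does \emph{not} go through Haar indistinguishability or a min-entropy-of-$k$ argument. Instead, the CRQS is built from a $2m(t+1)$-wise independent hash family $H_k:\{0,1\}^\lambda\to\{0,1\}^{2\lambda}$: one sets $\ket{\psi_{k,0}}\propto\sum_x\ket{H_k(x)}\ket{x\|0^\lambda}$ and $\ket{\psi_{k,1}}\propto\sum_y\ket{y}\ket{y}$, and the CRQS is $\ket{\psi_k}=\ket{\psi_{k,0}}^{\otimes m}\otimes\ket{\psi_{k,1}}^{\otimes m}$. To commit to $b$ the committer sends the first register of each copy of $\ket{\psi_{k,b}}$; the receiver verifies by SWAP tests against its own copies, exactly as in the QAI scheme. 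For $t$-copy statistical hiding, the paper first uses Zhandry's result that a $2q$-wise independent function perfectly simulates a random function against $q$ quantum queries (here $q=m(t+1)$, the total number of CRQS copies generated) to replace $H_k$ by a truly random $H$, and then invokes Liu's theorem that a random oracle is a \emph{statistical} PRG even against unbounded adversaries with polynomial-size quantum advice about $H$---the adversary's $t$ CRQS copies play the role of that advice. Statistical extractor-based binding reuses the QAI analysis verbatim: for every fixed $k$ the reduced states $\Tr_{\qreg{Y}}\ket{\psi_{k,b}}\bra{\psi_{k,b}}$ have fidelity at most $2^{-\lambda}$, so the same extractor and SWAP-test calculation go through.

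The gap in your sketch is the hiding step. Saying ``$k$ retains sufficient min-entropy conditioned on $t$ copies'' does not by itself make the commitments to $0$ and $1$ statistically close: the receiver's task is not to recover $k$ but to distinguish two specific $k$-dependent mixed states, and a leftover-hash style min-entropy argument does not bound that trace distance. You would need to specify concretely what plays the role of $\ket{\psi_{k,0}}$ versus $\ket{\psi_{k,1}}$ in your $t$-design/subspace family \emph{and} a lemma that actually yields the required trace-distance bound against an adversary holding $t$ copies plus the commitment register. The paper's contribution here is precisely to observe that the hash-based states above supply exactly this via the Zhandry+Liu combination, turning what was computational hiding in the QAI setting into statistical hiding once $k$ is randomized and hidden.
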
 
We remark that our scheme in the CRQS model satisfies both statistical hiding and statistical binding simultaneously unlike the case in the quantum auxiliary-input setting, where the hiding is only computational. This is made possible by the assumption that the number of copies of the CRQS given to the hiding adversary (malicious receiver) is bounded. We explain why the no-go result of~\cite{LoChau97,Mayers97} does not extend to this setting in \Cref{sec:overview}. 
It is worth noting that the no-go result \emph{does} extend to quantum commitments in the (classical) common reference string (CRS) model\footnote{In the CRS model, a bit string $x$ is sampled by the setup algorithm, and the same $x$ is sent to both the committer and receiver.}, which demonstrates a fundamental difference between the CRQS and CRS models. More generally, we show that the no-go result extends to the correlated randomness model,\footnote{In the correlated randomness model, a pair of bit strings $(x,y)$ is sampled by the setup algorithm, $x$ is sent to the committer, and $y$ is sent to the receiver.} if the correlation of the strings sent to the committer and the receiver is too strong or too weak. 
See 
\ifnum\inclappndx=0
the full version of this paper~\cite{cryptoeprint:2023/1844}
\else
\Cref{sec:nogo_correlatedRS} 
\fi
for details.

\paragraph{\bf Applications to OTs and MPCs.}
As an application of our quantum commitments in the CRQS model, we can plug them into the compiler of \cite{C:BCKM21b} to obtain the following theorem: 
\begin{theorem} 
There exist maliciously simulation-secure OTs in the CRQS model that is statistically secure against adversaries that are given bounded number of copies of CRQS.
\end{theorem}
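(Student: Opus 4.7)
The plan is to instantiate the BCKM21 compiler~\cite{C:BCKM21b} with our CRQS commitments from the previous theorem. That compiler is a black-box construction that takes a statistically binding and (at least) computationally hiding quantum commitment scheme and outputs a maliciously simulation-secure OT in the plain model. Since our CRQS commitments are, in fact, \emph{statistically} binding and \emph{statistically} hiding (the latter against adversaries holding a bounded number of copies of $\ket{\psi_k}$), I would aim to show that the compiler, when lifted to the CRQS model in the natural way, produces an OT whose security is statistical against the analogous bounded-copy class of adversaries. The lift itself is straightforward: the OT setup algorithm runs the commitment setup once to sample $(k,\ket{\psi_k})$ and hands out a sufficient polynomial number of copies of $\ket{\psi_k}$ to both parties; each invocation of the underlying commitment inside the compiler consumes one copy, and otherwise the protocol is unchanged.

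Correctness is immediate: the honest OT parties only ever use the honest commit/open procedures, each of which needs a single copy of $\ket{\psi_k}$, so correctness transfers from the CRQS commitment scheme plus the correctness of the BCKM21 compiler. For the security analysis I would follow BCKM21 essentially verbatim for both sender-security and receiver-security, replacing each appeal to hiding/binding of the commitment with the corresponding statistical-hiding/statistical-binding statement from our CRQS theorem. The two critical points to verify are (i) that the simulator constructed by the BCKM21 reduction can be realized efficiently in the CRQS model, and (ii) that the copy budget is respected throughout.

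Point (i) is actually easier here than in the quantum auxiliary-input setting: because the CRQS is sampled by an efficient setup, the simulator can itself run the setup, take the key $k$ as simulation trapdoor when needed for binding, and hand the adversary as many honestly-prepared copies of $\ket{\psi_k}$ as it expects. In particular, Watrous' rewinding lemma~\cite{SIAM:Watrous09} applies without issue, since rewinding re-executes the adversary on the same residual state rather than demanding fresh copies of a state that cannot be re-prepared (which was the obstruction in the quantum auxiliary-input case). For point (ii), I would fix in advance a polynomial $p(\lambda)$ that upper-bounds the total number of CRQS copies consumed across all commitments executed in one run of the OT protocol by either party, and have the CRQS setup distribute $p(\lambda)$ copies. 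Every reduction to hiding then queries at most $p(\lambda)$ copies, well within the bounded-copy regime in which our CRQS commitments are statistically hiding.

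The main obstacle I anticipate is not any single step but rather the careful bookkeeping required to track which registers of the CRQS each subprotocol consumes during rewinding, and to ensure that the malicious-receiver simulator in BCKM21, which internally performs a super-position-style extraction using the committed value, remains well-defined once the commitment is defined relative to a CRQS rather than in the plain model. I expect this to reduce to a routine rewriting, because statistical binding of our scheme holds even against adversaries given $k$, which is strictly stronger than what the BCKM21 extraction argument needs; all other steps are syntactic replacements of plain-model commitments by CRQS commitments.
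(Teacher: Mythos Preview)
Your proposal is correct and takes essentially the same approach as the paper: plug the CRQS commitments into the BCKM21 compiler, observing that the simulator can sample $k$ itself and thereby efficiently prepare arbitrarily many copies of $\ket{\psi_k}$, which is exactly what makes Watrous' rewinding applicable here (in contrast to the quantum auxiliary-input setting). The paper additionally emphasizes that it is the \emph{extractor-based} formulation of binding (following \cite{C:AnaQiaYue22}) that interfaces cleanly with the BCKM21 analysis, a point you touch on implicitly when discussing the malicious-receiver extraction but could state more directly.
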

We note that we can apply the compiler of \cite{C:BCKM21b} unlike in the quantum auxiliary-input setting since the CRQS $\ket{\psi_k}$ can be efficiently generated given the classical key $k$, which can be sampled by the simulator itself.\footnote{Note that we allow the simulator to ``program'' the CRQS similarly to conventional simulation-based security in the classical CRS model. See 
\ifnum\inclappndx=0
the full version of this paper~\cite{cryptoeprint:2023/1844}
\else
\Cref{def:OT_CRQS} 
\fi
for the precise definition of the simulation-based security for OTs in the CRQS model.}   
Moreover it is known that OTs imply MPCs for classical functionalities~\cite{C:IshPraSah08} or even for quantum functionalities~\cite{EC:DGJMS20} in a black-box manner. Thus, we believe that similar constructions work in the CRQS model, which would lead to  statistically maliciously simulation-secure MPCs in the CRQS model. However, for formally showing them, we have to carefully reexamine these constructions to make sure that they work in the CRQS model as well. This is out of the scope of this work, and we leave it to future work. \takashi{I'm not super familiar with MPC and cannot state these results confidently.}

\if0
\takashi{The following is the old version.}
assume that the committer who wants to commit a bit $b\in\bit$ first generates a quantum state $\ket{\Psi_b}_{\qreg{R},\qreg{C}}$ over two registers $\qreg{R},\qreg{C}$ and
then sends only the register $\qreg{C}$ to the receiver. This is the end of the commitment phase.
In the opening phase, the committer sends the register $\qreg{R}$ and $b$ to the receiver.
The receiver verifies whether the state over the registers $\qreg{C}$ and $\qreg{R}$ is $\ket{\Psi_b}$ or not\footnote{This is possible as follows: let $\ket{\Psi_b}=Q_b\ket{0...0}$ with a certain unitary $Q_b$. Then,
the receiver first applies $Q_b^\dagger$ and then measures all qubits in the computational basis. If all results are 0, the receiver accepts.}, and if so, the reciever accepts the bit $b$.
If such a quantum bit commitment scheme is statistically hiding, 
$\Tr_{\qreg{R}}(\ket{\Psi_0}\bra{\Psi_0}_{\qreg{R},\qreg{C}})$ should be statistically close to
$\Tr_{\qreg{R}}(\ket{\Psi_1}\bra{\Psi_1}_{\qreg{R},\qreg{C}})$, where
$\Tr_{\qreg{R}}$ is the partial trace over the register $\qreg{R}$.
However, then,
due to Uhlmann's theorem~\cite{uhlmann1976transition}, there exists a unitary on $\qreg{R}$ that can approximately convert $\ket{\Psi_0}$ to $\ket{\Psi_1}$,
which then breaks the statistical binding.
\fi

\subsection{Technical Overview}\label{sec:overview}
\paragraph{\bf Post-quantum sparse pseudorandom distributions.} 
We start from recalling the notion of sparse pseudorandom distributions introduced by Goldreich and Krawczyk~\cite{DBLP:journals/rsa/GoldreichK92}. We say that a family $\{D_\secp\}_{\secp\in \mathbb{N}}$ of distributions $D_\secp$ over $\bit^\secp$ is a sparse pseudorandom distribution if it satisfies the following:
\begin{itemize}
\item[{\bf Sparseness:}] The support of $D_\secp$ is sparse, i.e., $\frac{|\mathsf{Supp}(D_\secp)|}{2^\secp}=\negl(\secp)$, where $\mathsf{Supp}(D_\secp)$ denotes the support of $D_\secp$;
\item[{\bf Pseudorandomness:}] A string sampled from $D_\secp$ is computationally indistinguishable from a uniformly random $\secp$-bit string against non-uniform \emph{classical} polynomial-time distinguishers.
\end{itemize}
We stress that $D_\secp$ is not assumed to be efficiently samplable. 
Goldreich and Krawczyk~\cite{DBLP:journals/rsa/GoldreichK92} gave an unconditional proof for the existence of sparse pseudorandom distributions. 
Their proof is based on a simple counting argument where they rely on the fact that the number of possible classical circuits of size $s$ is at most $2^{s^2}$. 

We consider the post-quantum version of sparse pseudorandom distributions where we require pseudorandomness against non-uniform QPT distinguishers. 
If we only consider QPT distinguishers with \emph{classical} advice, then a similar counting argument works. However, when we consider QPT distinguishers with \emph{quantum} advice, then such a simple counting argument no longer works since the number of possible $s$-qubit states is \emph{double exponential} rather than exponential in $s$ even if we consider the $\epsilon$-net with sufficiently small $\epsilon$. \takashi{Is there any good citation for this?}\mor{I don't know.}  
Looking ahead, we need pseudorandomness against QPT distinguishers with quantum advice for our construction of quantum auxiliary-input commitments, and thus we first need to resolve this problem. 

Our key observation is that we can rely on recent progress on quantum random oracle model with quantum auxiliary-input~\cite{AC:HhaXagYam19,ITC:ChuLiaQia20,FOCS:CGLQ20,EC:Liu23} to resolve the issue. 
In particular, Liu proved that a length doubling random function $H:\bit^{\lfloor \secp/2 \rfloor}\rightarrow \bit^{\secp}$ is a PRG against adversaries that take polynomial-size quantum advice $\rho_H$ that depends on $H$ and make polynomially many quantum queries to $H$, i.e., the adversary cannot distinguish $H(x)$ for uniformly random $x\gets \bit^{\lfloor \secp/2 \rfloor}$ from a uniformly random string $y\gets \bit^{\secp}$. 
In particular, we will only need the result in the setting where the adversary does not make any query to $H$. 
By an averaging argument, this implies that for any fixed (possibly unbounded-time) adversary $\cA$,  
 there exists $H^*$ such that $H^*(x)$ for uniformly random $x\gets \bit^{\lfloor \secp/2 \rfloor}$ is indistinguishable from a uniformly random string $y\gets \bit^{\secp}$ for $\cA$ with any polynomial-size quantum advice. 
In particular, if $\cA$ is a quantum universal Turing machine, we can encode any non-uniform QPT computation into the quantum advice of $\cA$.   
Thus, by using the corresponding function $H^*$, we can see that the distribution of $H^*(x)$ for $x\gets\bit^{\lfloor \secp/2 \rfloor}$ is computationally indistinguishable from the uniformly random distribution over $\bit^{\secp}$ against any non-uniform QPT distinguishers with quantum advice. 
Moreover, it is clear that the distribution is sparse since the size of the support of the distribution is at most $2^{\lfloor \secp/2\rfloor}$ whereas the size of the whole space is $2^{\secp}$. 
This means that the distribution is a post-quantum sparse pseudorandom distribution. 
\takashi{The above explanation might be sloppy. Can we improve this?}\barak{I agree. I will think about if there is a good way to reword this.}

\paragraph{\bf Quantum auxiliary-input EFI pairs.}
Our next step is rather conceptual: We regard post-quantum sparse pseudorandom distributions as giving us an instance of quantum auxiliary-input EFI pairs. Here, quantum auxiliary-input EFI pairs are defined similarly to EFI pairs except that we do not require the states to be efficiently generatable but only require that the states are polynomial-size.\footnote{This is equivalent to requiring  that the states can be generated by a non-uniform QPT algorithm with quantum advice since any polynomial-size quantum state can be generated by the trivial 
algorithm that takes the state itself as quantum advice and simply outputs it.} That is, they are pairs of two (not necessarily efficiently generatable) quantum states that are statistically far but computationally indistinguishable. 
If $\{D_\secp\}_{\secp\in \mathbb{N}}$ is a post-quantum sparse pseudorandom distribution, then the following two states form a quantum auxiliary-input EFI pair:
\begin{align*}
    \xi_{\secp,0}\coloneqq\sum_{y\in \bit^\secp}D_\secp(y)\ket{y}\bra{y},~~~\xi_{\secp,1}\coloneqq\frac{1}{2^\secp}\sum_{y\in \bit^\secp}\ket{y}\bra{y}.
\end{align*}
Indeed, they are statistically far by the sparseness and
computationally indistinguishable by the post-quantum pseudorandomness. 

\paragraph{\bf Quantum auxiliary-input commitments.}
We then convert quantum auxiliary-input EFI pairs into quantum auxiliary-input commitments. 
In the plain model (where there is no auxiliary input), it is known that EFI pairs imply quantum commitments (and in fact, they are equivalent) as folllows~\cite{ITCS:BraCanQia23}. 
Let $(\xi_0,\xi_1)$ be an EFI pair. We can assume that the fidelity between $\xi_0$ and $\xi_1$ is $\negl(\secp)$ without loss of generality since otherwise we can exponentially decrease the fidelity by taking many copies of the state. 
For $b\in \bit$, 
let $\ket{\psi_b}$ be a purification of $\xi_b$ over registers $\qreg{X}$ and $\qreg{Y}$ where $\qreg{X}$ is the original register for $\xi_b$ and $\qreg{Y}$ is the register for purification. 
Then, a quantum commitment scheme is constructed as follows: to commit to a bit $b$, the committer generates $\ket{\psi_b}$ and  
sends register $\qreg{X}$ to the receiver as a commitment. 
In the reveal phase, the committer sends the corresponding purificaton register $\qreg{Y}$ along with the revealed bit $b$. 
Then the receiver applies a projective measurement $\{\ket{\psi_b}\bra{\psi_b},I-\ket{\psi_b}\bra{\psi_b}\}$ on the state in registers $(\qreg{X},\qreg{Y})$ and accepts if the first outcome is obtained, i.e., the state is projected onto $\ket{\psi_b}$.  
Then the hiding and binding of the protocol follows from computational indistinguishability and statistical farness of the EFI pair, respectively.

Our idea is to apply a similar construction in the quantum auxiliary-input setting. 
However, it does not directly work. First, if $(\xi_0,\xi_1)$ is a quantum auxiliary-input EFI pair, then $\ket{\psi_b}$ may not be efficiently generatable, and thus the committing procedure may not be efficiently done.\footnote{In our particular construction of quantum auxiliary-input EFI pairs, $\ket{\psi_1}$ is efficiently generatable, but $\ket{\psi_0}$ is not. 
We consider a more general setting where neither of them is efficiently generatable. 
\takashi{I added a footnote.}
} To resolve this, we can include $(\ket{\psi_0},\ket{\psi_1})$ in the quantum 
auxiliary input of the commitment scheme so that the committer can make use of it. 

The second issue is that the projective measurement $\{\ket{\psi_b}\bra{\psi_b},I-\ket{\psi_b}\bra{\psi_b}\}$  may not be efficiently realized because again, $\ket{\psi_b}$ may not be efficiently generatable. 
To resolve this issue, our idea is to let the receiver perform the SWAP test between the state sent from the committer and another fresh copy of $\ket{\psi_b}$ given as the receiver's quantum auxiliary input and accept if the test accepts.  
Note that the SWAP test between two states $\rho$ and $\sigma$ is accepted with probability 
$\frac{1+\Tr(\rho\sigma)}{2}$. Thus, the SWAP test by the receiver somewhat checks if the state sent from the committer is close to  $\ket{\psi_b}$ or not. However, this is not sufficient: malicious committers who sent other states can be accepted with probability at least 1/2, which breaks any meaningful notion of binding. 
Nevertheless, we can amplify this by running
the above protocol $m=\omega(\log \secp)$ times in parallel by including $m$ copies of  $(\ket{\psi_0},\ket{\psi_1})$ in the quantum auxiliary input. We can expect the binding error to be reduced to $2^{-m}=\negl(\secp)$ while preserving the hiding property by a standard hybrid argument.

\paragraph{\bf On definition of binding.}
While the proof of computational hiding is straightforward, proving binding  
is technically not as straightforward as one would expect. In the beginning, it is already unclear how we should define binding in the quantum auxiliary-input setting.  
The previous work~\cite{CKR16} adopted the notion that is often called \emph{sum-binding} and has been traditionally used in many works on quantum commitments~\cite{EC:DumMaySal00,EC:CreLegSal01,KO09,KO11,C:MorYam22}.\footnote{To our knowledge, the term ``sum-binding'' was used in \cite{EC:Unruh16} for the first time.} Sum-binding requires that after finishing the commit phase, 
if we let $p_b$ be the probability that the committer can reveal the commitment to $b$ and that is accepted by the receiver for $b\in \bit$, then we have $p_0+p_1\le 1+\negl(\secp)$. While this ensures some non-trivial security, Unruh~\cite{EC:Unruh16} pointed out that it is not sufficient if we want to use the quantum commitment as a building block of other cryptographic primitives such as zero-knowledge proofs and OTs.\footnote{In the plain model (where there is no quantum auxiliary-input), it is known that any (possibly interactive) quantum commitments can be compiled into a non-interactive ``canonical form''~\cite{AC:Yan22} and sum-binding for canonical form quantum commitments is sufficient for those applications~\cite{AC:Yan22,C:AnaQiaYue22,TQC:DAS23}. However, this does not seem to work in the quantum auxiliary-input setting.} 
For this reason, we adopt an extractor-based definition of binding introduced in~\cite{C:AnaQiaYue22}. 
It very roughly requires that there is a possibly inefficient extractor that extracts the committed bit from a commitment. This definition enables us to define a \emph{classical} bit committed in a quantum commitment, which is quite useful in security proofs of other protocols. 
In \Cref{sec:QAI}, we show that our construction satisfies the extractor-based binding by using an analysis of parallelly repeated SWAP tests~\cite{JACM:HarMon13} 
along with a similar template to those used in the proof of the extractor-based binding in the plain model~\cite{C:AnaQiaYue22,AC:Yan22}.

\paragraph{\bf Application to zero-knowledge proofs.}
To demonstrate applicability of our quantum auxiliary-input commitments, we use them to instantiate Blum's Hamiltonicity protocol~\cite{Blu87}.
This yields an unconditional construction of computational zero-knowledge proofs for $\mathbf{NP}$ with soundness error $1/2$  in the quantum auxiliary-input setting, i.e., the honest prover and honest verifier are allowed to take a common quantum auxiliary input.   
The proof of soundness is straightforward once we assume the extractor-based binding for the commitment: It ensures that we can ``extract'' the classical committed bits from commitments, after which the analysis is identical to the classical case. Note that we are proving statistical soundness and thus it does not matter that the extractor is inefficient. 
On the other hand, there is some subtlety in the proof, and actually also in the definition, of the zero-knowledge property. 
Recall that the standard quantum zero-knowledge property~\cite{SIAM:Watrous09} requires that there is a simulator that efficiently simulates the malicious verifier's view by using the same (possibly quantum) auxiliary-input as the verifier. 
We argue that this is unlikely to hold if 
we instantiate Blum's protocol using our quantum auxiliary-input commitments. 
In the protocol, the prover sends commitments to bits derived from the witness of the statement being proven and later reveals some of them. 
In our quantum auxiliary-input commitment scheme, after sending a commitment and then revealing the committed bit, the receiver gets $m$ copies of state $\ket{\psi_b}$ that is included in the quantum auxiliary input.  
Since $\ket{\psi_b}$ is not necessarily efficiently generatable, this already gives some ``knowledge'' to the verifier, i.e., 
the simulator cannot efficiently simulate $\ket{\psi_b}$. 
We believe that the above is just a definitional issue: Since the honest prover and receiver use the quantum auxiliary input, it is reasonable to allow the simulator to use them too.  In particular, we define the zero-knowledge property by using a \emph{non-uniform} simulator whose quantum auxiliary input can depend on the malicious verifier's quantum auxiliary input.\footnote{In this definition, the simulator is fully non-uniform, i.e., it can take arbitrary polynomial-size quantum auxiliary input. A more conservative definition would be to allow the simulator to only take the quantum auxiliary inputs of the malicious verifier and the honest prover, but we do not know how to prove such a variant.} 
While this is weaker than the conventional definition of zero-knowledge, it still ensures meaningful security, in particular, it implies witness indistinguishability. 

Defining the zero-knowledge property as above allows the proof to follow through. Remark that the simulator's quantum advice can include arbitrarily many copies of the quantum auxiliary input of the malicious verifier and the honest prover. This allows the simulator to circumvent the rewinding issue and instead use fresh quantum auxiliary input whenever it runs the prover and verifier. 


\paragraph{\bf Commitments in the CRQS model.}
Next, we explain how to modify our construction to one in the CRQS model. 
If we concretely write down our quantum auxiliary-input commitment scheme, 
its quantum auxiliary-input consists of $m$ copies of 
$\ket{\psi_{0}}=\sum_{x\in \bit^\secp}\ket{H(x)}\ket{x||0^\secp}$ and $\ket{\psi_{1}}=\sum_{y\in \bit^{2\secp}}\ket{y}\ket{y}$ 
where $H:\bit^\secp\rightarrow \bit^{2\secp}$ is an appropriately chosen function.\mor{Should we make it $H:\bit^{\secp/2}\to\bit^\secp$ for the consistency with other parts?} \takashi{I think it's fine.}
Since $H$ may not be efficiently computable, the quantum auxiliary input may not be efficiently generatable. 
However, we observe that we could actually use uniformly random $H$ if we allow the quantum auxiliary input to be randomized.  
Thus, our idea is to use many copies of $\ket{\psi_0}$ and $\ket{\psi_1}$ for a uniformly random $H$ as a CRQS.\footnote{Though $\ket{\psi_1}$ is efficiently generatable, we include it in the CRQS for the ease of presentation.} 
Since the CRQS must be efficiently generatable, we have to make sure that $H$ is efficiently computable. 
While this is not possible if $H$ is a completely random function, 
it is known that a $2q$-wise independent function is perfectly indistinguishable from a uniformly random function if the function is given as an oracle and the distinguisher is allowed to make at most $q$ quantum queries~\cite{C:Zhandry12}. Therefore, as long as the number of copies of the CRQS that is given to the adversary is at most $t$, then we can simulate the random function in quantum polynomial-time by using a $2m(t+1)$-wise independent function, which is computable in polynomial-time for any polynomial $t$.\footnote{Note that we should use $2m(t+1)$-wise independent function instead of $2mt$ since we need to also consider one copy of the CRQS that is used by the honest committer.}
This is the idea for our construction in the CRQS model.   
Remarkably, we can achieve statistical hiding instead of computational hiding unlike the quantum auxiliary-input setting. 
This crucially relies on the fact that the random classical key $k$ is hidden from the adversary, in particular, the malicious receiver (see also the next paragraph).  
So we can achieve both statistical hiding and statistical binding simultaneously in the CRQS model.

\paragraph{\bf Why the standard no-go does not apply in the CRQS model.} 
In the plain model, it is well-known that it is impossible to simultaneously satisfy both statistical hiding and statistical binding~\cite{LoChau97,Mayers97}, and such a result can also be
extended to the quantum auxiliary-input setting.
However, in the CRQS model, this no-go theorem does not apply.

To see why this is the case, consider an attempt to apply the argument of the no-go theorem to the CRQS model.
In the CRQS model, the setup algorithm samples a hidden key $k\gets\mathcal{K}_\secp$ and sends $\ket{\psi_k}$ to the committer and receiver.
Assume that the honest committer wants to commit a bit $b\in\bit$.
On input the CRQS, $\ket{\psi_k}$, the committer generates a state
$\ket{\Psi(\psi_k,b)}_{\qreg{R},\qreg{C}}$ over two registers 
$\qreg{R}$ and $\qreg{C}$, and sends 
$\qreg{C}$ to the receiver.
The receiver's state is then
\begin{align}
\sigma_b\coloneqq\frac{1}{|\mathcal{K}_\secp|}\sum_{k\in\mathcal{K}_\secp}
\Tr_{\qreg{R}}[\ket{\Psi(\psi_k,b)}\bra{\Psi(\psi_k,b)}_{\qreg{R},\qreg{C_1}}]
\otimes\ket{\psi_k}\bra{\psi_k}_{\qreg{C_2}}.
\end{align}
Assume that the commitment scheme satisfies statistical hiding. Then $\TD(\sigma_0,\sigma_1)=\negl(\secp)$.
However, due to Uhlmann's theorem~\cite{uhlmann1976transition}, there exists a unitary on the registers that purify $\sigma_b$ 
(in this case, the registers $\qreg{S}\cup\qreg{R}$ of the setup algorithm and the committer)
that transforms the purified version of $\sigma_0$
\begin{align}
\frac{1}{\sqrt{|\mathcal{K}_\secp|}}
\sum_{k\in\mathcal{K}_\secp}
\ket{k}_{\qreg{S}}\otimes
\ket{\Psi(\psi_k,0)}_{\qreg{R},\qreg{C_1}}
\otimes \ket{\psi_k}_{\qreg{C_2}}
\end{align}
to a state that is negligibly close to a purified version of $\sigma_1$.
However, notice that unlike in the plain model, this does not necessarily mean that the unbounded committer can break binding, because the committer does not have access to the register $\qreg{S}$. In other words, the committer does not have the ability to modify the classical key sampled by the setup algorithm.
It is for this reason that the prohibition against statistically binding, statistically hiding commitments fails to hold in this setting.
\barak{Should we also mention here the no-go for the classical CRS model?}
\takashi{either is fine for me as that's already mentioned in the introduction.}

\paragraph{\bf Applications in the CRQS model.} 
Finally, we argue applications of our quantum commitments in the CRQS model. 
We observe that we can simply plug our scheme into the compiler of \cite{C:BCKM21b} to obtain an OT in the CRQS model. 
We remark that we can apply this compiler unlike in the quantum auxiliary-input setting because the CRQS $\ket{\psi_k}$ is efficiently generatable given the classical key $k$, which can be generated by the simulator itself.

\if0 
\takashi{The following is the old one.}
We briefly explain our construction of statistically-hiding and statistically-binding quantum bit commitments
in the CRQS model. 
Let $\{H_k:\bit^\secp\to\bit^{2\secp}\}_{k\in \mathcal{K}_\secp}$ be the family of 4-wise independent hash functions,
where $\secp$ is the security parameter and $\mathcal{K}_\secp$ is the key space.
The setup algorithm first samples $k\gets\mathcal{K}_\secp$, 
and then generates $\ket{\psi_k}^{\otimes 2}$,
where
$\ket{\psi_k}\coloneqq \frac{1}{\sqrt{2^\secp}}\sum_{x\in\bit^\secp}\ket{H_k(x)}\ket{x\|0^\secp}$.
One copy of $\ket{\psi_k}$ is provided to the committer and the other copy $\ket{\psi_k}$ is provided to the receiver.

With this CRQS model, our bit commitment works as follows.
If the committer wants to commit bit $b=0$, it sends only the first register of $\ket{\psi_k}$ to the receiver.
If the committer wants to commit bit $b=1$, it sends only the first register of the state $\ket{\phi}\coloneqq\frac{1}{\sqrt{2^{2\secp}}}\sum_{y\in\bit^{2\secp}}\ket{y}\ket{y}$ to the receiver.
This is the end of the commit phase.
In the opening phase, the committer sends the second register and $b$.
If $b=1$, the receiver checks whether the state is $\ket{\phi}$ or not by applying the projection $\{\ket{\phi}\bra{\phi},I-\ket{\phi}\bra{\phi}\}$ on the state.
If $b=0$, the receiver checks whether the state is $\ket{\psi_k}$ or not.
This can be done by performing the SWAP test with the receiver's copy of the reference state $\ket{\psi_k}$ that was provided by the setup algorithm.\footnote{
Any wrong state can pass the SWAP test with probability at least 1/2, so actually we have to consider the parallel version of this
where
the setup algorithm sends
$\bigotimes_i\ket{\psi_{k_i}}$ with independently-chosen $k_i\gets\mathcal{K}_\secp$.
For simplicity, here we have explained the non-parallel version.}
The statistical binding can be intuitively understood as follows:
the state that the receiver possesses at the end of the commit phase is
$\rho_{0,k}\coloneqq\frac{1}{2^\secp}\sum_{x\in\bit^\secp}\ket{H_k(x)}\bra{H_k(x)}$ if $b=0$ and
$\rho_1\coloneqq\frac{I^{\otimes 2\secp}}{2^{2\secp}}$ if $b=1$.
Because $\rho_{0,k}$ is a mixture of only $2^\secp$ pure states, the fidelity between
$\rho_0$ and $\rho_1$ are small.
Then, due to the Uhlmann's theorem, there is no unitary on the register $\qreg{R}$ that maps
$\ket{\psi_k}$ to a state that is close to $\ket{\phi}$. This shows the statistical binding.
The statistical hiding can be intuitively understood as follows:
first, the receiver's view is
$\mathbb{E}_k (\ket{\psi_k}\bra{\psi_k}\otimes\rho_{0,k})$ if $b=0$ and
$\mathbb{E}_k (\ket{\psi_k}\bra{\psi_k}\otimes\rho_{1})$ if $b=1$.
There is a theorem~\cite{C:Zhandry12} that says that $2q$-wise independent hash functions are
indistinguishable from random functions for 
any $q$-query quantum algorithm.
Therefore, the receiver's view is actually
$\mathbb{E}_f (\ket{\psi_f'}\bra{\psi_f'}\otimes\rho_{0,f}')$ if $b=0$ and
$\mathbb{E}_f (\ket{\psi_f'}\bra{\psi_f'}\otimes\rho_{1})$ if $b=1$,
where $\ket{\psi_f'}\coloneqq\frac{1}{\sqrt{2^\secp}}\sum_{x\in\bit^\secp}\ket{f(x)}\ket{x\|0^\secp}$,
$\rho_{0,f}'\coloneqq \frac{1}{2^\secp}\sum_{x\in\bit^\secp}\ket{f(x)}\bra{f(x)}$,
and $f:\bit^\secp\to\bit^{2\secp}$ is a random function.
There is another theorem~\cite{EC:Liu23} that shows that these two states are indistinguishable,
which means the statistical hiding.

\paragraph{\bf Extension to oblivious transfer and multi-party computation.} 
In the plain model (where there is no CRQS), \cite{C:BCKM21b} constructed OTs and MPCs from any post-quantum classical commitments.  
\cite{C:AnaQiaYue22} observed that the construction works based on any quantum commitments. We believe that a similar construction works in the CRQS model as well. As a result, we obtain unconditionally secure oblivious transfer and multi-party computation in the CRQS model. 
\fi

\if0
\paragraph{\bf Commitments with quantum auxiliary input.} 
We observe that our commitment scheme can be adapted to construct quantum commitments in the quantum auxiliary-input (QAI) model~\cite{CKR16}.
In the QAI model, both committer and receiver take the same quantum state as a quantum auxiliary input. Though this looks similar to the CRQS model, the differences are that 1) quantum auxiliary-input is a fixed (pure) state for each security parameter and 2) quantum auxiliary-input may not be efficiently generated. 
\cite{CKR16} constructed quantum commitments in the QAI model assuming $\mathbf{QIP}\not\subseteq \mathbf{QMA}$. 
We observe that we can show a similar result \emph{without making any computational assumption}. 
The idea is to fix the ``best" function $H$\mor{$H_k$?} in our construction in the CRQS model. We will provide the details in the future version of the paper.   
\fi

\subsection{Concurrent Work}
Qian \cite{Qian23} concurrently and independently shows similar results to ours. 
In particular, he also gives unconditionally secure quantum auxiliary-input commitments and quantum commitments in the CRQS model (with stateful setup) using our terminology. 
As additional results that are not present in our work,  
he also discusses how to generate the quantum auxiliary-input by exponential-time preprocessing and shows a barrier for constructing a classical analog.  

After exchanging our manuscripts and having some discussions, he came up with an idea of 
achieving $\epsilon$-simulation security using the compiler of \cite{C:BCKM21b}. 
This is presented in \cite[Appendix B]{Qian23}.

\section{Preliminaries}

\subsection{Basic notations}
We use standard notations of quantum computing and cryptography. $\secp$ is the security parameter.
$\negl$ is a negligible function.
$\poly$ is a polynomial.
For any set $A$, $x\gets A$ means that an element $x$ of $A$ is sampled from $A$ uniformly at random.
For any algorithm $A$, $y\gets A(x)$ means that the algorithm $A$ outputs $y$ on input $x$.
For sets $\calX$ and $\calY$, 
$\mathsf{Func}(\calX,\calY)$ is the set consisting of the all functions from $\calX$ to $\calY$. 
$\ket{\pm}$ means $\frac{1}{\sqrt{2}}(\ket{0} \pm \ket{1})$.
For the notational simplicity, we often omit the normalization factor of quantum states.
(For example, we often denote 
$\frac{1}{\sqrt{2}}(\ket{0}+\ket{1})$
by $\ket{0}+\ket{1}$.)
For any quantum state $\rho_{\qreg{A},\qreg{B}}$ over the registers $\qreg{A}$ and $\qreg{B}$,
$\Tr_{\qreg{A}}(\rho_{\qreg{A},\qreg{B}})$ is the partial trace over the register $\qreg{A}$.
For any quantum states $\rho$ and $\sigma$, 
$F(\rho,\sigma):=\Big(\mbox{Tr}\sqrt{\sqrt{\sigma}\rho\sqrt{\sigma}}\Big)^2$
is the fidelity,
and $\TD(\rho,\sigma)\coloneqq\frac{1}{2}\|\rho-\sigma\|_1$
is the trace distance.
For a pure state $\ket{\psi}$, we write $\|\ket{\psi}\|$ to mean the Euclidean norm of $\ket{\psi}$. 
$I\coloneqq\ket{0}\bra{0}+\ket{1}\bra{1}$ is the two-dimensional identity operator.
For simplicity, we often write $I^{\otimes m}$ just as $I$ when the dimension is clear from the context.
We write QPT to mean quantum polynomial time. A non-uniform QPT algorithm $\cA$ is specified by a family $\{\cA_\secp,\rho_\secp\}_{\secp\in \mathbb{N}}$ where $\cA_\secp$ and $\rho_\secp$ are a quantum circuit of size $\poly(\secp)$ and quantum advice of size $\poly(\secp)$ that are used when the input length is $\secp$, respectively.

\subsection{Lemmas}
We review several lemmas that are proven in existing works.

It is known that random oracles work as pseudorandom generators (PRGs) against adversaries that make quantum queries and take quantum advice. 
\begin{lemma}[\cite{EC:Liu23}]\label{thm:non-uniform_PRG}
    Let $N$, $M$, $T$, and $S$ be positive integers. 
    Let $\{\sigma_H\}_H$ be a family of $S$-qubit states indexed by $H:[N]\ra [M]$. 
    For any oracle-aided algorithm $\A$ that makes at most $T$ quantum queries to $H$, it holds that
    \begin{align*}
        \left|\Pr[1\gets\A^H(\sigma_H,y_0)]-\Pr[1\gets\A^H(\sigma_H,y_1)]\right|\leq 8\sqrt{2}\cdot \min_{\gamma>0}\left\{\left(\frac{S(T+1)}{\gamma N}+\frac{T^2}{N}\right)^{\frac{1}{2}}+\gamma\right\}
    \end{align*}
where $H\la \mathsf{Func}([N],[M]), x\la [N], y_0:=H(x)$, and $y_1\la [M]$. 
\end{lemma}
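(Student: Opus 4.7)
The plan is to follow the quantum presampling framework pioneered by Chung, Liao, and Qian~\cite{ITC:ChuLiaQia20} and refined by Liu~\cite{EC:Liu23}, which reduces a $T$-query adversary holding $S$-qubit oracle-dependent quantum advice to a $T$-query adversary with no advice but access to a ``bit-fixing'' oracle --- a random function whose values on at most $P$ coordinates are arbitrarily fixed and uniform elsewhere. The two pieces inside the square root in the stated bound reflect exactly this decomposition: $S(T+1)/(\gamma N)$ is the cost of eliminating the advice via presampling with $P=\gamma N$ fixed coordinates, while $T^2/N$ is the residual cost of distinguishing the PRG output from uniform in the query-only setting; the additive $\gamma$ accounts for the probability that the random challenge index lands on one of the $P$ fixed coordinates.

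The first step is the quantum presampling reduction itself: for any $P\geq 1$, simulate the joint distribution $(H,\sigma_H)$ by a convex combination over bit-fixing oracles whose fixed support has size at most $P$, with distinguishing error at most $O(\sqrt{S(T+1)/P})$ against any $T$-query quantum distinguisher. Setting $P=\gamma N$ turns this into the $\sqrt{S(T+1)/(\gamma N)}$ contribution. The second step analyzes the resulting bit-fixing adversary: for uniformly sampled $x\gets[N]$ the challenge index $x$ lies in the fixed support with probability at most $P/N=\gamma$, and conditioned on the complementary good event $H(x)$ is a fresh uniform element of $[M]$ that is independent of the adversary's view so far. Distinguishing such $H(x)$ from an independent uniform $y_1\gets[M]$ given $T$ quantum queries to an otherwise uniform oracle is then a standard random-function-vs.-planted-value problem whose advantage is bounded by $O(\sqrt{T^2/N})$ by a compressed-oracle or BBBV-style argument. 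Combining the three error contributions by the triangle inequality and tracking constants yields the claimed expression, with the outer $\min$ over $\gamma$ simply recording that $\gamma$ was a free parameter throughout.

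The main obstacle is the quantum presampling step itself. Classically the analogous statement is proven by a Shannon-information or dense-model argument on $(H,\sigma_H)$, but here $H$ and $\sigma_H$ may be genuinely entangled and straightforward entropic manipulations break down; one cannot simply ``condition'' on the advice to read off which oracle positions it depends on. Liu's resolution is a gentle-measurement and partial-decoupling analysis that localizes the advice's dependence on $H$ to a slowly growing set of coordinates while controlling the disturbance to the adversary's query amplitudes, with the delicate quantitative point being to obtain the linear $S(T+1)$ scaling rather than a quadratic one --- which forces the proof to bound operator-norm deviations across all $T$ query steps simultaneously, rather than hybrid-by-hybrid. Since this is precisely the technical content of~\cite{EC:Liu23}, my plan is to import that presampling lemma as a black box and re-derive only the easy downstream bit-fixing analysis described above.
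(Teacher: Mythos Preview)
The paper does not prove this lemma; it is imported wholesale from \cite{EC:Liu23} and used only as a black box (in fact, only its specialization \Cref{cor:QROM_PRG} with $T=0$ is ever invoked). So there is no ``paper's own proof'' to compare against --- the authors simply cite the result.

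Your outline of the presampling-plus-bit-fixing argument is an accurate high-level account of how Liu's proof works, and your plan to quote the presampling step from \cite{EC:Liu23} and redo the downstream bit-fixing analysis is sound. That said, for the purposes of this paper there is nothing to re-derive: the statement is used verbatim as an external theorem, and you may likewise just cite it.
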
 

\begin{corollary}\label{cor:QROM_PRG}
    Let $N$, $M$, and $S$ be positive integers. 
    Let $\{\sigma_H\}_H$ be a family of $S$-qubit states indexed by $H:[N]\ra [M]$. 
    For any algorithm $\A$ that does not make any query, it holds that
    \begin{align*}
        \left|\Pr[1\gets\A(\sigma_H,y_0)]-\Pr[1\gets\A(\sigma_H,y_1)]\right|\leq 16\sqrt{2}\cdot \left(\frac{S}{N}\right)^{\frac{1}{3}}
    \end{align*}
where $H\la \mathsf{Func}([N],[M]), x\la [N], y_0:=H(x)$, and $y_1\la [M]$. 
\end{corollary}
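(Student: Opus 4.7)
The plan is to derive the corollary as a direct specialization of the preceding lemma by setting the query bound $T=0$ and choosing an appropriate value of the free parameter $\gamma$. Since the algorithm $\A$ in the corollary makes no queries to $H$, Lemma applies with $T=0$, and its bound collapses to
\[
\left|\Pr[1\gets\A(\sigma_H,y_0)]-\Pr[1\gets\A(\sigma_H,y_1)]\right|\leq 8\sqrt{2}\cdot \min_{\gamma>0}\left\{\sqrt{\frac{S}{\gamma N}}+\gamma\right\},
\]
because the terms involving $T$ and $T^2$ vanish.

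Next, I would bound the minimum by evaluating the right-hand side at a convenient choice of $\gamma$. A quick calculus check shows the optimum is of order $(S/N)^{1/3}$, and concretely I would pick $\gamma := (S/N)^{1/3}$. With this choice, $\sqrt{S/(\gamma N)} = (S/N)^{1/3}$ as well, so the two terms are equal and their sum is $2(S/N)^{1/3}$. Multiplying by the prefactor $8\sqrt{2}$ yields the bound $16\sqrt{2}\cdot (S/N)^{1/3}$ claimed in the corollary.

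There is no real obstacle here: the argument is a one-line substitution into Lemma together with an elementary choice of the parameter $\gamma$. The only point worth double-checking is the arithmetic of the optimization, which I would verify by showing that at $\gamma=(S/N)^{1/3}$ both summands coincide, so that the sum is simultaneously an upper bound on $\min_{\gamma>0}$ and within a constant factor of the true minimum. Since the corollary only asserts an upper bound (not the exact minimizer), this suffices and the proof is essentially complete.
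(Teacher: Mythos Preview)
Your proposal is correct and matches the paper's proof exactly: the paper's entire argument is ``Set $T:=0$ and $\gamma:=\left(\frac{S}{N}\right)^{1/3}$ in \Cref{thm:non-uniform_PRG}.'' Your arithmetic verification that both summands equal $(S/N)^{1/3}$ at this choice of $\gamma$ is precisely the implicit computation behind that one-line proof.
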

\begin{proof}
    Set $T:=0$ and $\gamma:=\left(\frac{S}{N}\right)^{\frac{1}{3}}$ in \Cref{thm:non-uniform_PRG}.
\end{proof}

It is known that $2q$-wise independent functions are indistinguishable from random functions by at most $q$ quantum queries.  
\begin{lemma}[\cite{C:Zhandry12}]\label{lem:simulation_QRO}
For any sets $\calX$ and $\calY$ of classical strings and $q$-quantum-query algorithm $\A$, we have
\[
\Pr[1\gets\A^{H}:H\gets \mathsf{Func}(\calX,\calY)]= \Pr[1\gets\A^{H}:H\gets \mathcal{H}_{2q}]
\]
where  
$\mathcal{H}_{2q}$ is a family of $2q$-wise independent functions from $\calX$ to $\calY$.
\end{lemma}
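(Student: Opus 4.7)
The plan is to use Zhandry's polynomial method. I would index truth tables by the family of indicator variables $\delta_{x,y} \coloneqq \mathbf{1}[H(x)=y]$ for $(x,y)\in \calX\times\calY$, so that specifying $H$ is the same as specifying the $0/1$ vector $(\delta_{x,y})_{x,y}$ subject to the constraint $\sum_{y}\delta_{x,y}=1$ for each $x$. The goal is to show that the acceptance probability $p_{\cA}(H) \coloneqq \Pr[1\gets\cA^H]$, viewed as a function of these indicator variables, is a multilinear polynomial of total degree at most $2q$; once this is established, the lemma is immediate because for a $2q$-wise independent family, the joint distribution of any $2q$ values $(H(x_1),\dots,H(x_{2q}))$ matches that for a uniformly random function, hence the expectations of every monomial of degree $\le 2q$ in the $\delta_{x,y}$ coincide.

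Concretely, I would write the oracle as the unitary $O_H:\ket{x,y,z}\mapsto \ket{x,y\oplus H(x),z}$ and observe that $O_H = \sum_{x,v}\delta_{x,v}\cdot(\ket{x}\bra{x}\otimes X^v\otimes I)$, which is linear (hence degree $1$) in the variables $\delta_{x,v}$. Starting from an $H$-independent initial state and alternating fixed unitaries of $\cA$ with the queries $O_H$, a simple induction on the query number shows that after $i$ queries, every amplitude of the global state is a polynomial of degree at most $i$ in the $\delta_{x,v}$'s. Consequently, after $q$ queries, the probability of obtaining output $1$ (which is a sum of squared moduli of amplitudes) is a real polynomial of degree at most $2q$ in the indicator variables, and multilinearity can be enforced using $\delta_{x,v}^2=\delta_{x,v}$.

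With that polynomial representation in hand, I would expand $p_{\cA}(H)$ as a linear combination of monomials $\prod_{j=1}^{2q}\delta_{x_j,v_j}$ (possibly with repeats, reducible by multilinearity). Each such monomial, as a function of the random $H$, evaluates to $\prod_j \mathbf{1}[H(x_j)=v_j]$, whose expectation depends only on the joint law of $H(x_1),\dots,H(x_{2q})$ for at most $2q$ distinct inputs. By definition of $2q$-wise independence, this joint law is identical for $H\gets \mathsf{Func}(\calX,\calY)$ and for $H\gets \mathcal{H}_{2q}$, so term-by-term the expectations agree and hence $\Exp_H[p_{\cA}(H)]$ is the same under the two distributions, which is exactly the claim.

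The main obstacle I anticipate is the bookkeeping around reducing the $q$-th iterate of $O_H$ to a genuine multilinear degree-$q$ expression: naively the product $\prod_i O_{H,i}$ has degree $q$ already at the amplitude level, but squaring for the probability doubles it to $2q$, and one must be careful that the identity $\sum_{y}\delta_{x,y}=1$ does not inadvertently reduce degree in a way that matters, and that the final ``output $1$'' measurement projector is an $H$-independent operator so it contributes no additional degree. Once these points are handled, the polynomial-method argument gives the equality in the statement with no error term, matching the ``$=$'' (not ``$\approx$'') in the lemma.
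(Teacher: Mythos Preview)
Your proposal is correct and is essentially the standard polynomial-method argument from Zhandry's original paper~\cite{C:Zhandry12}. Note that the present paper does not prove this lemma at all---it simply cites it as a known result---so there is no ``paper's own proof'' to compare against; your reconstruction of Zhandry's argument is sound.
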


\begin{lemma}[{\cite[Lemma 2]{JACM:HarMon13}}]
\label{lem:HM13}
Let $\rho$ be a quantum state over $m$ registers $\qreg{A_1},...,\qreg{A_m}$.
Let $\sigma$ be a quantum state over $m$ registers $\qreg{B_1},...,\qreg{B_m}$.
Let us consider the following test.
\begin{enumerate}
    \item 
    For each $i\in[m]$, do the SWAP test\footnote{The SWAP test for two states $\rho_0$ and $\rho_1$ 
    is the following algorithm: On input $\ket{+}\bra{+}\otimes\rho_0\otimes\rho_1$, apply the controlled-SWAP gate so that the first qubit is the controlled qubit, and
    then measure the first qubit in the Hadamard basis. If the output is $\ket{+}$, the test is successful.} between $\qreg{A_i}$ and $\qreg{B_i}$.
    \item 
    Accept if all SWAP tests are successful.
\end{enumerate}
Then the probability that the above test accepts is
$\frac{1}{2^m}\sum_{S\subseteq[m]}\Tr[\rho_S\sigma_S]$,    
where $\rho_S$ is the state obtained by tracing out all $\qreg{A_i}$ of $\rho$ such that $i\notin S$, 
and
$\sigma_S$ is the state obtained by tracing out all $\qreg{B_i}$ of $\sigma$ such that $i\notin S$.
\end{lemma}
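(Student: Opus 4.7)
The plan is to reduce the acceptance probability of the parallel test to a trace against a sum of tensor products of swap operators, and then evaluate each term via the standard swap identity. First, I would recall that a single SWAP test between registers $\qreg{A_i}$ and $\qreg{B_i}$ corresponds to the POVM element $\Pi_i \coloneqq \tfrac{1}{2}(I_i + F_i)$, where $F_i$ is the swap operator between $\qreg{A_i}$ and $\qreg{B_i}$ and $I_i$ is the identity on $\qreg{A_i}\qreg{B_i}$. Since the $m$ SWAP tests act on disjoint pairs of registers and the overall test accepts iff all of them succeed, the joint acceptance POVM element on $\qreg{A_1}\ldots\qreg{A_m}\qreg{B_1}\ldots\qreg{B_m}$ is $\Pi \coloneqq \prod_{i=1}^m \Pi_i = \tfrac{1}{2^m}\prod_{i=1}^m (I_i + F_i)$.

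Next, I would expand this product by distributing, which yields $\Pi = \tfrac{1}{2^m}\sum_{S\subseteq[m]} F_S$, where $F_S \coloneqq \bigotimes_{i\in S} F_i \otimes \bigotimes_{i\notin S} I_i$. The acceptance probability on the global state $\rho\otimes\sigma$ is therefore
\[
\Pr[\text{accept}] \;=\; \Tr\!\left[\Pi\,(\rho\otimes\sigma)\right] \;=\; \frac{1}{2^m}\sum_{S\subseteq[m]}\Tr\!\left[F_S\,(\rho\otimes\sigma)\right].
\]
It then suffices to show that for each $S$ one has $\Tr[F_S(\rho\otimes\sigma)] = \Tr[\rho_S\sigma_S]$ with $\rho_S,\sigma_S$ as defined in the statement.

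To this end, I would exploit that $F_S$ acts as identity on the registers $\qreg{A_i},\qreg{B_i}$ for $i\notin S$. Tracing out those registers first, together with the product structure of $\rho\otimes\sigma$, gives
\[
\Tr[F_S(\rho\otimes\sigma)] \;=\; \Tr_{\qreg{A_S}\qreg{B_S}}\!\left[\Big(\bigotimes_{i\in S}F_i\Big)(\rho_S\otimes\sigma_S)\right],
\]
because the marginals of $\rho\otimes\sigma$ on $\qreg{A_S}$ and $\qreg{B_S}$ are precisely $\rho_S$ and $\sigma_S$. Applying the standard identity $\Tr[F(X\otimes Y)]=\Tr[XY]$ to the composite swap $\bigotimes_{i\in S}F_i$ (viewed, via the canonical isomorphism $\bigotimes_{i\in S}(\qreg{A_i}\otimes \qreg{B_i}) \cong \qreg{A_S}\otimes \qreg{B_S}$, as the global swap between $\qreg{A_S}$ and $\qreg{B_S}$) yields $\Tr[\rho_S\sigma_S]$, and summing over $S$ completes the proof.

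The only step that requires care is the bookkeeping in the last display: one must verify that the product of per-register swaps coincides with the single swap between the composite registers $\qreg{A_S}$ and $\qreg{B_S}$, which can be checked by evaluating on computational-basis product states where both operators simply permute the $A$ and $B$ labels in the same way. Everything else is a straightforward expansion of a product of projectors and an application of the partial-trace identity for product states.
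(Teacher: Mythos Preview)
Your argument is correct and is exactly the standard proof of this lemma. Note that the paper does not prove this statement itself; it is quoted from \cite{JACM:HarMon13} in the preliminaries section among ``lemmas that are proven in existing works,'' so there is no paper proof to compare against beyond the cited reference, which proceeds along the same lines you outline.
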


\if0
\begin{lemma}
\label{lem:fidelity} \takashi{I guess this lemma is no longer used?}
Let $\{\ket{\psi_k}\}_{k\in \bit^n}$ be any set of pure states.
Then
$F\left(\frac{1}{2^n}\sum_{k\in\bit^n}\ket{\psi_k}\bra{\psi_k},\frac{I^{\otimes m}}{2^m}\right)
\le \frac{2^n}{2^m}$.
\end{lemma}

\begin{proof}
Let $\sum_{k\in\bit^n}a_k\ket{\phi_k}\bra{\phi_k}$ be a diagonalization of
$\frac{1}{2^n}\sum_{k\in\bit^n}\ket{\psi_k}\bra{\psi_k}$.
Then
\begin{align}
F\left(\frac{1}{2^n}\sum_{k\in\bit^n}\ket{\psi_k}\bra{\psi_k},\frac{I^{\otimes m}}{2^m}\right)
=\left(\sum_k \sqrt{\frac{a_k}{2^m}}\right)^2
\le(\sum_k a_k)\left(\sum_k \frac{1}{2^m}\right)
= \frac{2^n}{2^m}.
\end{align}
Here in the inequality we have used 
Cauchy–Schwarz inequality.
\end{proof}
\fi

\begin{lemma}[{\cite[Lemma 31]{cryptoeprint:2020/1488}}]\label{lem:good_measurement}
Let $\ket{\psi_0}_{\qreg{X},\qreg{Y}}$ and $\ket{\psi_1}_{\qreg{X},\qreg{Y}}$ be pure states over registers $(\qreg{X},\qreg{Y})$ such that 
$$F(\Tr_{\qreg{Y}}(\ket{\psi_0}\bra{\psi_0}_{\qreg{X},\qreg{Y}}),\Tr_{\qreg{Y}}(\ket{\psi_1}\bra{\psi_1}_{\qreg{X},\qreg{Y}}))=\epsilon.$$
Then there is a projective measurement $\{\Pi_0,\Pi_1,\Pi_\bot\coloneqq I-\Pi_0-\Pi_1\}$ over register $\qreg{X}$ such that for each $b\in \bit$,  
$$
\left\|({\Pi_b}_{\qreg{X}}\otimes I_{\qreg{Y}})\ket{\psi_b}_{\qreg{X},\qreg{Y}}\right\|^2\ge 1-\sqrt{2\epsilon}.
$$
\end{lemma}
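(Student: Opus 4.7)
I will construct $\Pi_0$ and $\Pi_1$ as the Holevo--Helstrom-optimal projectors arising from the Jordan (Hahn) decomposition of $\rho_0-\rho_1$, where $\rho_b \coloneqq \Tr_{\qreg{Y}}(\ket{\psi_b}\bra{\psi_b}_{\qreg{X},\qreg{Y}})$, and then bound $\Tr(\Pi_b\rho_b)$ via the Fuchs--van~de~Graaf inequality. Since $\|(\Pi_b\otimes I)\ket{\psi_b}\|^2 = \Tr(\Pi_b\rho_b)$, this reduces the lemma to an unconditional bound about distinguishing $\rho_0$ from $\rho_1$.

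\textbf{Construction.} Write the spectral decomposition $\rho_0 - \rho_1 = \Delta_+ - \Delta_-$, where $\Delta_+, \Delta_- \geq 0$ are supported on orthogonal subspaces of $\qreg{X}$ (the positive and negative eigenspaces of the Hermitian operator $\rho_0 - \rho_1$). Define $\Pi_0$ to be the projector onto the support of $\Delta_+$, $\Pi_1$ the projector onto the support of $\Delta_-$, and $\Pi_\bot \coloneqq I - \Pi_0 - \Pi_1$, which is the projector onto $\ker(\rho_0 - \rho_1)$. Because these three subspaces are pairwise orthogonal, $\{\Pi_0,\Pi_1,\Pi_\bot\}$ is a valid projective measurement.

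\textbf{Analysis.} The key calculation uses $\Pi_0 \Delta_+ = \Delta_+$ and $\Pi_0 \Delta_- = 0$:
\[
\Tr(\Pi_0 \rho_0) - \Tr(\Pi_0 \rho_1) \;=\; \Tr\!\bigl(\Pi_0(\Delta_+ - \Delta_-)\bigr) \;=\; \Tr(\Delta_+) \;=\; \tfrac{1}{2}\|\rho_0-\rho_1\|_1 \;=\; \TD(\rho_0,\rho_1),
\]
where $\Tr(\Delta_+)=\Tr(\Delta_-)=\TD(\rho_0,\rho_1)$ follows from $\Tr(\rho_0)=\Tr(\rho_1)$. Since $\Tr(\Pi_0 \rho_1)\geq 0$, we obtain $\Tr(\Pi_0 \rho_0)\geq \TD(\rho_0,\rho_1)$, and the symmetric computation gives $\Tr(\Pi_1 \rho_1)\geq \TD(\rho_0,\rho_1)$. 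Invoking the Fuchs--van~de~Graaf inequality,
\[
\TD(\rho_0,\rho_1) \;\geq\; 1-\sqrt{F(\rho_0,\rho_1)} \;=\; 1-\sqrt{\epsilon},
\]
yields $\Tr(\Pi_b \rho_b) \geq 1-\sqrt{\epsilon} \geq 1-\sqrt{2\epsilon}$ for each $b\in\bit$, which is exactly the claimed bound after translating back through $\|(\Pi_b\otimes I)\ket{\psi_b}\|^2 = \Tr(\Pi_b\rho_b)$.

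\textbf{Expected obstacle.} I do not anticipate a significant difficulty: the Helstrom projective measurement is entirely standard, and the final bound is a one-line application of Fuchs--van~de~Graaf. In fact, this argument produces the slightly stronger estimate $1-\sqrt{\epsilon}$; the factor $\sqrt{2}$ in the stated lemma is slack that my proof need not recover. If one wanted to avoid invoking Fuchs--van~de~Graaf directly, an alternative would be to purify $\rho_0,\rho_1$ via Uhlmann's theorem so that they have overlap $\sqrt{\epsilon}$ and then apply \Cref{lem:HM13} style SWAP-test reasoning, but that route seems strictly more laborious and gives nothing extra.
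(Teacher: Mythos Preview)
Your proof is correct. Note, however, that the paper does not actually prove this lemma: it is quoted as \cite[Lemma~31]{cryptoeprint:2020/1488} in the preliminaries and used as a black box, so there is no in-paper argument to compare against. Your Helstrom/Fuchs--van~de~Graaf approach is the standard one and indeed yields the sharper bound $1-\sqrt{\epsilon}$; the constant $\sqrt{2}$ in the stated lemma is just slack carried over from the cited source.
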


\if0
\begin{corollary}\label{cor:good_measurement}
Let $\ket{\psi_0}$, $\ket{\psi_1}$, $\Pi_0$, $\Pi_1$, and $\epsilon$ be as in \Cref{lem:good_measurement}. 
For each $b\in \bit$ and any positive integer $k$,  
it holds that 
\begin{align*}
\left\|\bigotimes_{i\in[k]}({\Pi_b}_{\qreg{X_i}}\otimes I_{\qreg{Y_i}})\ket{\psi_{b}}_{\qreg{X_i},\qreg{Y_i}}\right\|^2 \ge 1-k\sqrt{2\epsilon}. 
\end{align*}
\end{corollary}
\begin{proof}
\begin{align*}
\left\|\bigotimes_{i\in[k]}({\Pi_b}_{\qreg{X_i}}\otimes I_{\qreg{Y_i}})\ket{\psi_{b}}_{\qreg{X_i},\qreg{Y_i}}\right\|^2 
&=\bra{\psi_{b}}(\Pi_b \otimes I)\ket{\psi_{b}}^k\\
&\ge 1-k\left(1-\bra{\psi_{b}}(\Pi_b \otimes I)\ket{\psi_{b}}\right)\\
&\ge 1-k\sqrt{2\epsilon},
\end{align*}
\if0
\begin{align*}
\left\|\left(I-(\Pi_b\otimes I)^{\otimes k}\right)\ket{\psi_{b}}^{\otimes k}\right\|^2
&=1-\bra{\psi_{b}}(\Pi_b \otimes I)\ket{\psi_{b}}^k\\
&\le 1-k
\left\|\left(I-(\Pi_b\otimes I)\right)\ket{\psi_{b}}\right\|^2\\
&=k
\left\|\left((I-\Pi_b)\otimes I)\right)\ket{\psi_{b}}\right\|^2\\
&\ge 1-2k\sqrt{\epsilon},
\end{align*}
\fi
where in the first inequality we have used the union bound (i.e., $x^k\ge 1-k(1-x)$ for all $x\in [0,1]$) 
and in the second inequality we have used \Cref{lem:good_measurement}. 
\end{proof}
\fi

\begin{lemma}[{\cite[Lemma 4]{Rastegin_2007}}]\label{lem:TD_subnormalized}
    Let $\rho$ and $\sigma$ be any possibly subnormalized density operators. Then it holds that 
    \begin{align*}
    \TD(\rho,\sigma)=\max_{\Pi}\Tr(\Pi(\rho-\sigma))-\frac{\Tr(\rho-\sigma)}{2}
    \end{align*}
    where the maximum is taken over all projections $\Pi$.
\end{lemma}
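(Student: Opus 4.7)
The plan is to use the Jordan--Hahn decomposition of the Hermitian operator $\rho - \sigma$. Concretely, I would write $\rho - \sigma = P - Q$ where $P, Q \ge 0$ are positive semidefinite operators with orthogonal supports. Denoting by $\Pi_P$ the projection onto the support of $P$, we then have $\Tr(\Pi_P P) = \Tr(P)$ and $\Tr(\Pi_P Q) = 0$, so $\Tr(\Pi_P(\rho-\sigma)) = \Tr(P)$. Also, the definitions of trace norm and trace give $\|\rho-\sigma\|_1 = \Tr(P) + \Tr(Q)$ and $\Tr(\rho-\sigma) = \Tr(P) - \Tr(Q)$, neither of which assumes that $\rho$ and $\sigma$ are normalized.

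The next step is to verify the optimality of $\Pi_P$. For any projection $\Pi$, write
\begin{align*}
\Tr(\Pi(\rho-\sigma)) = \Tr(\Pi P) - \Tr(\Pi Q) \le \Tr(\Pi P) \le \Tr(P),
\end{align*}
where the first inequality uses $\Tr(\Pi Q)\ge 0$ and the second uses $0 \le \Pi \le I$ together with $P\ge 0$. Therefore $\max_{\Pi} \Tr(\Pi(\rho-\sigma)) = \Tr(P)$, attained by $\Pi = \Pi_P$.

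Finally I would combine these computations: substituting into the right-hand side of the claimed identity,
\begin{align*}
\max_{\Pi}\Tr(\Pi(\rho-\sigma)) - \frac{\Tr(\rho-\sigma)}{2} = \Tr(P) - \frac{\Tr(P) - \Tr(Q)}{2} = \frac{\Tr(P) + \Tr(Q)}{2} = \frac{1}{2}\|\rho - \sigma\|_1,
\end{align*}
which is exactly $\TD(\rho,\sigma)$ by definition. This completes the proof.

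I do not expect any serious obstacle here: the whole argument is essentially the standard proof of the Helstrom-type identity for the trace distance, with the only subtlety being that $\Tr(\rho - \sigma)$ need not vanish when $\rho,\sigma$ are subnormalized, which is precisely why the correction term $-\tfrac{1}{2}\Tr(\rho-\sigma)$ appears. The only care needed is to keep track of signs and to note that orthogonality of the supports of $P$ and $Q$ is what makes the projection $\Pi_P$ simultaneously maximize $\Tr(\Pi P)$ and minimize $\Tr(\Pi Q)$.
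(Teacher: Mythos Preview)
Your proof is correct and is the standard Jordan--Hahn decomposition argument. The paper does not actually prove this lemma; it simply cites \cite[Lemma 4]{Rastegin_2007} and states the result, so there is nothing to compare against beyond noting that your self-contained argument is exactly the expected one.
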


\section{Post-Quantum Sparse Pseudorandom Distributions}
Goldreich and Krawczyk~\cite{DBLP:journals/rsa/GoldreichK92} introduced the notion of sparse pseudorandom distributions. 
A sparse pseudorandom distribution is a distribution supported by a sparse subset but looks uniformly random against non-uniform classical polynomial-time distinguishers. 
They showed that sparse pseudorandom distributions unconditionally exist.
Here, we introduce its \emph{post-quantum} version where the distribution looks uniformly random even against non-uniform QPT distinguishers. 
Then we show that post-quantum sparse pseudorandom distributions also exist unconditionally. 

\begin{definition}[Post-quantum sparse pseudorandom distributions]
We say that an ensemble $\{D_\secp\}_{\secp\in \mathbb{N}}$ of probabilistic distributions where $D_\secp$ is a distribution on $\bit^\secp$ 
is a post-quantum sparse pseudorandom distribution if the following hold:
\begin{description}
\item[Sparseness:]
$D_\secp$ is sparse, i.e., $\frac{|\mathsf{supp}(D_\secp)|}{2^\secp}=\negl(\secp)$.
\item[(Non-uniform) post-quantum pseudorandomness:] 
$\{D_\secp\}_{n\in \mathbb{N}}$ is computationally indistinguishable from the uniform distribution for non-uniform QPT distinguishers. 
That is, for any non-uniform QPT distinguisher $\{\A_\secp,\rho_\secp\}_{\secp\in \mathbb{N}}$, 
\begin{align*}
    \left|\Pr_{y_\secp\gets D_\secp}[\A_\secp(\rho_\secp,y_\secp)=1]-\Pr_{y'_\secp\gets \bit^{\secp}}[\A_\secp(\rho_\secp,y'_\secp)=1]\right|=\negl(\secp).
\end{align*}
\end{description}
\end{definition}

\begin{theorem}
\label{thm:pq-sparse_pseudorandom}
There exist post-quantum sparse pseudorandom distributions.
\end{theorem}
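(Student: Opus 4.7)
The plan is to define $D_\secp$ as the output distribution of a carefully chosen (but not efficiently computable) length-doubling function $H^*_\secp:\bit^{\lfloor \secp/2\rfloor}\to\bit^\secp$ applied to a uniformly random input. Sparseness is then immediate: the support of $D_\secp$ has size at most $2^{\lfloor \secp/2\rfloor}$, so $|\mathsf{supp}(D_\secp)|/2^\secp \le 2^{-\lceil \secp/2\rceil}=\negl(\secp)$. All the work therefore goes into pseudorandomness, where the plan is to extract $H^*_\secp$ from Liu's bound (\Cref{cor:QROM_PRG}) via an averaging argument combined with a universal-circuit trick.

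First I would collapse the universal quantifier over non-uniform QPT distinguishers to a single oracle machine. Fix a constant $c$ and let $U_\secp^c$ be a universal quantum circuit that simulates any size-$\secp^c$ non-uniform QPT adversary when given its classical description together with its quantum advice as an $O(\secp^c \log \secp)$-qubit input. Thus proving pseudorandomness against every polynomial-size adversary reduces to proving, for every $c$, pseudorandomness against $U_\secp^c$ with every polynomial-size quantum advice. Apply \Cref{cor:QROM_PRG} with $N=2^{\lfloor \secp/2\rfloor}$, $M=2^\secp$, and $S=O(\secp^c\log\secp)$: since the corollary bound holds uniformly over every $H$-indexed advice family $\{\sigma_H\}_H$, the quantity $\sup_{\sigma_H}|\Pr[\A=1 \mid H(x)]-\Pr[\A=1 \mid y']|$, averaged over a uniformly random $H$, is at most $O((\secp^c \log\secp / 2^{\lfloor \secp/2\rfloor})^{1/3})$, which is negligible in $\secp$ for each fixed $c$.

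Next I would do a Borel--Cantelli-style averaging to obtain a single $H^*_\secp$ that simultaneously fools every polynomial size bound. For each $c\in\N$, set a threshold $b_c(\secp)=2^{c+1}\cdot 16\sqrt{2}\cdot (O(\secp^c\log\secp)/2^{\lfloor\secp/2\rfloor})^{1/3}$ and let $B_c$ be the set of $H$ such that some size-$\secp^c$ adversary achieves distinguishing advantage greater than $b_c(\secp)$ against the distribution induced by $H$. By Markov applied to the bound above, $\Pr_H[H\in B_c]\le 2^{-c-1}$, so by the union bound $\Pr_H\bigl[H\in\bigcup_c B_c\bigr]\le 1/2$. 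Hence there exists $H^*_\secp\notin\bigcup_c B_c$. Any non-uniform QPT distinguisher $\{\A_\secp,\rho_\secp\}_{\secp\in\N}$ has size at most $\secp^c$ for some fixed $c$ and all sufficiently large $\secp$; against such a distinguisher the advantage on $D_\secp$ defined by $H^*_\secp$ is at most $b_c(\secp)=\negl(\secp)$, which gives the required (non-uniform) post-quantum pseudorandomness.

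The main obstacle I expect is the quantifier exchange in the second step: \Cref{cor:QROM_PRG} bounds the advantage for one adversary at a time, so a naive union bound over quantum advices does not work because the set of polynomial-size quantum states is uncountable. The universal-circuit reduction resolves this by absorbing both the description of the circuit and its advice into a single advice register fed to $U_\secp^c$, after which Liu's ``for every advice family $\{\sigma_H\}_H$'' quantifier already covers all circuits and all advices of that size simultaneously; only the countable union over the polynomial-degree parameter $c$ then remains, which is handled by the summable thresholds $b_c$.
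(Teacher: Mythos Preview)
Your proposal is correct and shares the paper's core idea: exploit that \Cref{cor:QROM_PRG} allows the advice $\sigma_H$ to depend on $H$, encode into $\sigma_H$ the best distinguisher for that particular $H$ (together with its quantum advice) via a universal machine, and then average over $H$ to extract a good $H^*_\secp$.

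The paper's execution is simpler in one respect: rather than union-bounding over polynomial degrees $c$, it fixes a single subexponential size class $\mathcal{S}_\secp$ consisting of all adversary/advice pairs of total size at most $2^{\secp/4}$, defines $H^*_\secp$ as the $\operatorname{argmin}_{H_\secp}\max_{(\A,\rho)\in\mathcal{S}_\secp}\mathsf{Adv}[H_\secp,\A,\rho]$, and shows directly from \Cref{cor:QROM_PRG} (with $S\le 2^{\secp/4}+1$, $N=2^{\lfloor\secp/2\rfloor}$) that this minimax value is at most $16\sqrt{2}\,(S/N)^{1/3}=\negl(\secp)$. Since every polynomial is eventually below $2^{\secp/4}$, this single bound covers all non-uniform QPT adversaries, eliminating your Borel--Cantelli layer. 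Your route works too; it is just more machinery than needed. One small point worth making explicit when you write it up: \Cref{cor:QROM_PRG} bounds $\bigl|\mathbb{E}_H[\text{signed advantage}]\bigr|$, not $\mathbb{E}_H\bigl[|\text{advantage}|\bigr]$. Your claim that the latter is bounded is still correct, because for each $H$ you may choose $\sigma_H$ to encode the distinguisher maximizing the \emph{signed} advantage (complementing a distinguisher flips the sign, so this equals the maximal absolute advantage and is nonnegative), so no cancellation occurs in the average; the paper makes this explicit by packing a sign bit $b$ into the advice.
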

\begin{proof} \takashi{revised the proof.}
For any ensemble $\cH\coloneqq\{H_\secp\}_{\secp\in \mathbb{N}}$ of functions $H_\secp:\bit^{\lfloor \secp/2\rfloor} \rightarrow \bit^\secp$, let $D_{\mathcal{H}}:=\{D_{\mathcal{H},\secp}\}_{\secp\in \mathbb{N}}$ where $D_{\mathcal{H},\secp}$ is the distribution of $H_\secp(x)$ for $x\gets \bit^{\lfloor \secp/2\rfloor}$. Since the image size of $H_\secp$ is at most $2^{\lfloor \secp/2\rfloor}$,  $D_{\mathcal{H}}$ is an ensemble of sparse distributions for all $\mathcal{H}$. For completing the proof, it suffices to prove that there exists $\mathcal{H}$ such that $D_{\mathcal{H}}$ is pseudorandom. 
For each $\secp\in \mathbb{N}$, 
function $H_\secp:\bit^{\lfloor \secp/2\rfloor} \rightarrow \bit^\secp$, and a quantum distinguisher $\A_{\secp}$ with quantum advice $\rho_{\secp}$, let $\mathsf{Adv}[H_\secp,\A_{\secp},\rho_{\secp}]$ be the distinguisher's advantage, i.e., 
\begin{align*}
   \mathsf{Adv}[H_\secp,\A_{\secp},\rho_{\secp}]:= \left|\Pr_{x_\secp\gets \bit^{\lfloor \secp/2 \rfloor}}[\A_\secp(\rho_\secp,H_\secp(x_\secp))=1]-\Pr_{y_\secp\gets \bit^{\secp}}[\A_\secp(\rho_\secp,y_\secp)=1]\right|.
\end{align*}
For each $\secp\in \mathbb{N}$, 
let $\mathcal{S}_{\secp}$ be the set consisting of all tuples $(\A_{\secp},\rho_{\secp})$ of a distinguisher and its quantum advice such that the sum of the description size of $\A_{\secp}$ (as a bit string) and the length of $\rho_{\secp}$ is at most $2^{\secp/4}$. 
For each $\secp\in \mathbb{N}$, let $H^*_\secp$ be the ``most pseudorandom'' function against distinguishers in $\mathcal{S}_{\secp}$, i.e., 
\begin{align*}
H^*_\secp:= \underset{H_\secp}{\operatorname{argmin}} 
\underset{(\A_{\secp},\rho_{\secp})\in \mathcal{S}_{\secp}}{\max}
\mathsf{Adv}[H_\secp,\A_{\secp},\rho_{\secp}],
\end{align*}
where $\operatorname{argmin}$ is taken over 
$H_\secp\in \mathsf{Func}(\bit^{\lfloor \secp/2 \rfloor},\bit^\secp)$ 
and let 
\begin{align*}
\epsilon^*(\secp):=\underset{(\A_{\secp},\rho_{\secp})\in \mathcal{S}_{\secp}}{\max}
\mathsf{Adv}[H^*_\secp,\A_{\secp},\rho_{\secp}].
\end{align*}
Below, we prove that $\epsilon^*(\secp)=\negl(\secp)$. 
If this is proven, it is easy to see that $D_{\mathcal{H}^*}$ for $\mathcal{H}^*:=\{H^*_\secp\}_{\secp\in \mathbb{N}}$ is pseudorandom noting that for any non-uniform QPT distinguisher $\{\A_{\secp},\rho_{\secp}\}_{\secp\in \mathbb{N}}$, we have $(\A_{\secp},\rho_{\secp})\in \mathcal{S}_{\secp}$ for sufficiently large $\secp$. 
Thus, we are left to prove $\epsilon^*(\secp)=\negl(\secp)$.

By the definitions of $H^*_\secp$ and $\epsilon^*(\secp)$, for any function $H_\secp \in \mathsf{Func}(\bit^{\lfloor \secp/2 \rfloor},\bit^\secp)$, there is $(\A_{H_\secp},\rho_{H_\secp})\in \mathcal{S}_\secp$ such that 
\begin{align}\label{eq:lowerbound_advantage}
\mathsf{Adv}[H_\secp,\A_{H_\secp},\rho_{H_\secp}]\ge \epsilon^*(\secp).
\end{align}
For each $\secp$, 
we consider an algorithm $\B_\secp$ and a family $\{\sigma_{H_\secp}\}_{H_\secp}$ of quantum advice indexed by functions $H_\secp:\bit^{\lfloor \secp/2\rfloor} \rightarrow \bit^\secp$ that tries to break the inequality of \Cref{cor:QROM_PRG} for $N=2^{\lfloor \secp/2\rfloor}$ and $M=2^\secp$ as follows.\footnote{We identify $[N]$ (resp. $[M]$) with $\bit^{\lfloor \secp/2\rfloor}$ (resp. $\bit^\secp$) in a natural way.}
\begin{itemize}
    \item The advice $\sigma_{H_\secp}$ is defined to be a tuple of $(\A_{H_\secp},\rho_{H_\secp})$
    and an additional bit $b$ that takes $0$ if  $\Pr_{x_\secp\gets \bit^{\lfloor \secp/2 \rfloor}}[\A_\secp(\rho_\secp,H_\secp(x_\secp))=1]-\Pr_{y_\secp\gets \bit^{\secp}}[\A_\secp(\rho_\secp,y_\secp)=1]\ge 0$ and otherwise takes $1$. 
    Note that the size $S$ of $\sigma_{H_\secp}$ is at most $2^{\secp/4}+1$ since $(\A_{H_\secp},\rho_{H_\secp})\in \mathcal{S}_\secp$. 
    \item $\B_\secp$ takes the advice $\sigma_{H_\secp}=(\A_{H_\secp},\rho_{H_\secp})$ and an input $y\in \bit^\secp$, which is $y=H_\secp(x)$ for $x\gets\bit^{\lfloor\secp/2\rfloor}$ or uniformly sampled from $\bit^\secp$, and runs $\A_{H_\secp}$ on advice $\rho_{H_\secp}$ and input $y$.
    If $b=0$, 
    $\B_\secp$ outputs 
    whatever $\A_{H_\secp}$ outputs and if $b=1$, $\B_\secp$ flips the output of $\A_{H_\secp}$  and outputs it.
\end{itemize}
It is easy to see that $\B_\secp$'s distinguishing advantage between the two cases is
\begin{align*}
\underset{H_\secp}{\mathbb{E}} \left[\mathsf{Adv}[H_\secp,\A_{H_\secp},\rho_{H_\secp}]\right] \ge \epsilon^*(\secp)
\end{align*}
where  
the inequality holds because \Cref{eq:lowerbound_advantage} holds for all $H_\secp$. 
On the other hand, \Cref{cor:QROM_PRG} implies that the distinguishing advantage is at most $16\sqrt{2}\left(\frac{S}{N}\right)$, which is negligible in $\secp$. 
This implies that $\epsilon^*(\secp)$ is negligible in $\secp$. 
\end{proof}
\begin{remark}
    It was pointed out by Fermi Ma that an alternative conceptually simpler proof of \Cref{thm:pq-sparse_pseudorandom} with a better bound is possible using the matrix Chernoff bound based on a similar idea to that used in \cite[Section 2.2]{cryptoeprint:2023/1602}. 
\end{remark}
\section{Quantum Auxiliary-Input EFI Pairs}
Brakerski, Canetti, and Qian~\cite{ITCS:BraCanQia23} introduced the notion of EFI pairs as a pair of two mixed states that are efficiently generatable, statistically far, and computationally indistinguishable.
Here, we consider its quantum auxiliary-input version where the efficiently generatable requirement is omitted. 
We show that post-quantum sparse pseudorandom distributions imply quantum auxiliary-input EFI pairs.
From \cref{thm:pq-sparse_pseudorandom},
it means that quantum auxiliary-input EFI pairs unconditionally exist.  
\begin{definition}[Quantum auxiliary-input EFI pairs]
A quantum auxiliary-input EFI pair is a family $\{\xi_{\secp,b}\}_{\secp\in\mathbb{N},b\in\bit}$ of (mixed) states that satisfies the following:
\begin{itemize}
\item[\bf Polynomial size:] For each $b\in \bit$, $\xi_{\secp,b}$ is a $\poly(\secp)$-qubit state. 
\item[\bf Statistically far:] $\TD(\xi_{\secp,0},\xi_{\secp,1})=1-\negl(\secp)$.
\item[\bf Computational indistinguishability:] 
$\xi_{\secp,0}$ and $\xi_{\secp,1}$ are computationally indistinguishable against non-uniform QPT distinguishers. That is, for any non-uniform QPT distinguisher $\{\A_\secp,\rho_\secp\}_{\secp\in \mathbb{N}}$, 
\begin{align*}
    \left|\Pr[\A_\secp(\rho_\secp,\xi_{\secp,0})=1]-\Pr[\A_\secp(\rho_\secp,\xi_{\secp,1})=1]\right|=\negl(\secp).
\end{align*}
\end{itemize}
\end{definition}
\begin{remark}
In the original definition of EFI pairs in \cite{ITCS:BraCanQia23}, the statistically far property only requires that $\TD(\xi_{\secp,0},\xi_{\secp,1})=1/\poly(\secp)$. However, this can be generically amplified to $1-\negl(\secp)$ by parallel repetition while preserving the computational indistinguishability. We note that we can use a hybrid argument to show that parallel repetition preserves computational indistinguishability even though $\xi_{\secp,b}$ is not efficiently generatable 
since we require computational indistinguishability against non-uniform distinguishes with quantum advice. 
\end{remark}

\begin{theorem}
    Quantum auxiliary-input EFI pairs exist.
\end{theorem}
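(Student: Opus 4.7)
The plan is to construct the quantum auxiliary-input EFI pair directly from the post-quantum sparse pseudorandom distribution $\{D_\secp\}_{\secp\in\mathbb{N}}$ guaranteed by \Cref{thm:pq-sparse_pseudorandom}, following the recipe already sketched in the technical overview. Define
\begin{align*}
\xi_{\secp,0}\coloneqq\sum_{y\in\bit^\secp}D_\secp(y)\ket{y}\bra{y},\qquad \xi_{\secp,1}\coloneqq\frac{1}{2^\secp}\sum_{y\in\bit^\secp}\ket{y}\bra{y}.
\end{align*}
Each $\xi_{\secp,b}$ is an $\secp$-qubit state, so polynomial size is immediate.

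For the statistically-far property, observe that both states are diagonal in the computational basis, so their trace distance equals the statistical distance between $D_\secp$ and the uniform distribution on $\bit^\secp$. The latter is at least $1-\frac{|\mathsf{supp}(D_\secp)|}{2^\secp}$, which is $1-\negl(\secp)$ by the sparseness of $D_\secp$.

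For computational indistinguishability, I would argue by contrapositive. Suppose some non-uniform QPT distinguisher $\{\A_\secp,\rho_\secp\}_{\secp\in\mathbb{N}}$ separates $\xi_{\secp,0}$ from $\xi_{\secp,1}$ with non-negligible advantage. Build a non-uniform QPT distinguisher $\{\A'_\secp,\rho_\secp\}_{\secp\in\mathbb{N}}$ against the distribution $D_\secp$ vs.\ the uniform distribution as follows: on classical input $y\in\bit^\secp$, $\A'_\secp$ prepares the basis state $\ket{y}\bra{y}$ and runs $\A_\secp$ on $(\rho_\secp,\ket{y}\bra{y})$. When $y\gets D_\secp$ (resp.\ $y\gets\bit^\secp$ uniformly), the input to $\A_\secp$ is distributed exactly as $\xi_{\secp,0}$ (resp.\ $\xi_{\secp,1}$), so $\A'_\secp$ inherits the same advantage, contradicting \Cref{thm:pq-sparse_pseudorandom}.

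There is essentially no hard step here; the only thing worth checking carefully is that the reduction from distinguishing states to distinguishing distributions uses only a classical sample (so that post-quantum pseudorandomness of the distribution is the right notion to invoke), which is why it is convenient that $\xi_{\secp,0}$ and $\xi_{\secp,1}$ were chosen to be diagonal in the computational basis. Combined with the unconditional existence of post-quantum sparse pseudorandom distributions from \Cref{thm:pq-sparse_pseudorandom}, this yields the unconditional existence of quantum auxiliary-input EFI pairs.
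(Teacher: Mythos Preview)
Your proof is correct and follows essentially the same approach as the paper: the same construction of $\xi_{\secp,0}$ and $\xi_{\secp,1}$ from a post-quantum sparse pseudorandom distribution, with the three properties verified in the same way. You spell out the reduction for computational indistinguishability more explicitly than the paper does (which simply says it ``immediately follows''), and your statistically-far argument via statistical distance is equivalent to the paper's distinguisher-based phrasing.
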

\begin{proof}
Let $\{D_\secp\}_{\secp\in \mathbb{N}}$ be a post-quantum sparse pseudorandom distribution, which exists by \Cref{thm:pq-sparse_pseudorandom}.  
We define
\begin{align*}
    &\xi_{\secp,0}:=\sum_{y\in \bit^\secp}D_\secp(y)\ket{y}\bra{y}\\
    &\xi_{\secp,1}:=\sum_{y\in \bit^\secp}2^{-\secp}\ket{y}\bra{y}.
\end{align*}
They satisfy the polynomial size property since they are $\secp$-qubit states. 
To see the statistically far property, consider the following unbounded-size distinguisher: Given  $\xi_{\secp,0}$ or $\xi_{\secp,1}$, measure the state in the computational basis and output $0$ if the measurement outcome belongs to $|\mathsf{supp}(D_\secp)|$ and otherwise outputs $1$. We can see that its distinguishing advantage is $1-\frac{|\mathsf{supp}(D_\secp)|}{2^\secp}=1-\negl(\secp)$ by the sparseness of $\{D_\secp\}_{\secp\in \mathbb{N}}$. This implies $\TD(\xi_{\secp,0},\xi_{\secp,1})=1-\negl(\secp)$. Finally, computational indistinguishability immediately follows from post-quantum pseudorandomness of $\{D_\secp\}_{\secp\in \mathbb{N}}$. 
\end{proof}
\section{Quantum Auxiliary-Input Commitments}\label{sec:QAI}
In this section, we define quantum auxiliary-input commitments following \cite{CKR16}, and construct it without relying on any assumption. 
\subsection{Definition}
We define quantum auxiliary-input commitments following \cite{CKR16}. 
\begin{definition}[Quantum auxiliary-input commitments~\cite{CKR16}]
\label{def:QAI}
A (non-interactive) quantum auxiliary-input commitment scheme is given by a tuple of QPT algorithms $C$ (a committer), $R$ (a receiver), and  a family $\{\ket{\psi_\secp}\}_{\secp\in \mathbb{N}}$ of $\poly(\secp)$-qubit states (referred to as quantum auxiliary inputs).  
The scheme is divided into the following three phases, the quantum auxiliary-input phase,
the commit phase, and the reveal phase.   
\begin{itemize}
\item {\bf Quantum auxiliary-input phase:} 
For the security parameter $\secp$, a single copy of $\ket{\psi_\secp}$ is sent to $C$, and a single copy of $\ket{\psi_\secp}$ is sent to $R$. 
    \item
{\bf Commit phase:} 
$C$ takes a bit $b\in \bit$ to commit (and $\ket{\psi_\secp}$ given in the first phase) as input, generates a quantum state on registers $\qreg{C}$ and $\qreg{R}$, and 
sends the register $\qreg{C}$ to $R$. 
\item
{\bf Reveal phase:}
$C$ sends $b$ and the register $\qreg{R}$ to $R$.
$R$ takes $(b,\qreg{C},\qreg{R})$ given by $C$ (and $\ket{\psi_\secp}$ given in the first phase) as input, and outputs $b$ if it accepts and otherwise outputs $\bot$. 
\end{itemize}
As correctness, we require that $R$ accepts with probability $1$ if the protocol is run honestly.
\end{definition}
Below, we define security of quantum auxiliary-input commitments. First, we define hiding and sum-binding  following \cite{CKR16}.  
\begin{definition}[Hiding]\label{def:QAI_hiding}
A quantum auxiliary-input commitment scheme $(C,R,\{\ket{\psi_\secp}\}_{\secp\in \mathbb{N}})$ satisfies statistical (resp. computational) hiding if for any non-uniform unbounded-time (resp. QPT) algorithm $\{\A_\secp,\rho_\secp\}_{\secp\in \mathbb{N}}$, 
\begin{align*}
\left|
    \Pr[1\gets\A_\secp(\rho_\secp,\Tr_{\qreg{R}}(\sigma_{\qreg{C},\qreg{R}})):\sigma_{\qreg{C},\qreg{R}} \la C_{\mathsf{com}}(\ket{\psi_\secp},0)]
    -\right.\\
    \left.
    \Pr[1\gets\A_\secp(\rho_\secp,\Tr_{\qreg{R}}(\sigma_{\qreg{C},\qreg{R}})):\sigma_{\qreg{C},\qreg{R}} \la C_{\mathsf{com}}(\ket{\psi_\secp},1)]
    \right|\le \negl(\secp)
\end{align*}
where $C_{\mathsf{com}}$ is the commit phase of $C$. 
\end{definition}
\begin{remark}
In the above definition, we do not explicitly include $\ket{\psi_\secp}$ as input to $\A_\secp$. However, it holds even if $\A_\secp$ is given polynomially many copies of $\ket{\psi_\secp}$ since they could have been included in the quantum advice $\rho_\secp$. 
\end{remark}

\begin{definition}[Sum-binding]\label{def:QAI_binding}
A quantum auxiliary-input commitment scheme $(C,R,\{\ket{\psi_\secp}\}_{\secp\in \mathbb{N}})$ satisfies statistical (resp. computational)  sum-binding if the following holds. For any pair of non-uniform unbounded-time (resp. QPT) malicious committers $C^*_0$ and $C^*_1$ that take $\ket{\psi_\secp}$ as a quantum auxiliary input and
work in the same way in the commit phase, if we let $p_b$ to be the probability that $R$ accepts the revealed bit $b$ in the interaction with $C^*_b$ for $b\in \bit$, then we have 
\begin{align*}
p_0+p_1\le 1+\negl(\secp).
\end{align*}
\end{definition}

While the sum-binding is the notion used in the previous work~\cite{CKR16}, we introduce a stronger definition which we call extractor-based binding inspired by \cite{C:AnaQiaYue22}. 
The motivation of introducing this definition is that it is more suitable for applications like zero-knowledge proofs. 

\begin{definition}[Extractor-based binding]\label{def:QAI_extract}
A quantum auxiliary-input commitment scheme $(C,R,\{\ket{\psi_\secp}\}_{\secp\in \mathbb{N}})$ satisfies statistical (computational) extractor-based binding if there is a non-uniform unbounded-time algorithm $\mathcal{E}=\{\mathcal{E}_\secp\}_{\secp \in \mathbb{N}}$ (called the extractor) such that for any non-uniform unbounded-time (resp. QPT) malicious committer $C^*$, 
$\mathsf{Real}_\secp^{C^*}$ and $\mathsf{Ideal}_\secp^{C^*,\mathcal{E}}$ are indistinguishable against non-uniform unbounded-time (resp. QPT) distinguishers 
where the experiments $\mathsf{Real}_\secp^{C^*}$ and   $\mathsf{Ideal}_\secp^{C^*,\mathcal{E}}$ are defined as follows.
\begin{itemize}
\item $\mathsf{Real}_\secp^{C^*}$: 
The malicious committer 
$C^*$ takes $\ket{\psi_\secp}$ as input, and interacts with the honest receiver $R$ in the commit and reveal phases. 
Let $b\in \{0,1,\bot\}$ be the output of $R$ and $\tau_{C^*}$ be the final state of $C^*$. The experiment outputs a tuple $(\tau_{C^*},b)$. 
\item $\mathsf{Ideal}_\secp^{C^*,\mathcal{E}}$: 
$C^*$ takes $\ket{\psi_\secp}$ as input, and runs its commit phase to generate a commitment $\sigma_{\qreg{C}}$ where $C^*$ may keep a state that is entangled with $\sigma_{\qreg{C}}$. 
$\mathcal{E}_\secp$ takes the register $\qreg{C}$ as input, outputs an extracted bit $b^*\in \bit$, and sends a post-execution state on $\qreg{C}$ (that may be different from the original one) to $R$ as a commitment.  
Then $C^*$ and $R$ run the reveal phase.
Let $b$ be the output of $R$ and $\tau_{C^*}$ be the final state of $C^*$. 
If $b\notin \{\bot,b^*\}$, then the experiment outputs a special symbol $\fail$ and otherwise outputs a tuple $(\tau_{C^*},b)$. 
\end{itemize}
\end{definition}

We can easily see the following lemma.
\begin{lemma}\label{lem:QAI_ext_to_sum}
Statistical (resp. computational) extractor-based binding implies statistical (resp. computational) sum-binding. 
\end{lemma}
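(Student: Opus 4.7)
The plan is to invoke the extractor guaranteed by extractor-based binding to convert the ``quantum ambiguity'' of opening to two different bits into a genuine probabilistic bound. Fix any pair $(C^*_0, C^*_1)$ as in \Cref{def:QAI_binding} and let $\mathcal{E}$ be the extractor guaranteed by \Cref{def:QAI_extract}. For each $b\in\bit$, let $\mathsf{Real}^b$ and $\mathsf{Ideal}^b$ denote the experiments $\mathsf{Real}_\secp^{C^*_b}$ and $\mathsf{Ideal}_\secp^{C^*_b,\mathcal{E}}$, and let $b^*$ denote the bit extracted by $\mathcal{E}_\secp$ in the ideal experiment.

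The first key observation is that since $C^*_0$ and $C^*_1$ act identically in the commit phase, the joint distribution of $(b^*, \text{post-extraction state handed to } R)$ is identical in $\mathsf{Ideal}^0$ and $\mathsf{Ideal}^1$. In particular, if we set $q\coloneqq \Pr_{\mathsf{Ideal}^0}[b^*=0]=\Pr_{\mathsf{Ideal}^1}[b^*=0]$, then
\begin{align*}
\Pr_{\mathsf{Ideal}^0}[b^*=0]+\Pr_{\mathsf{Ideal}^1}[b^*=1] = q + (1-q) = 1.
\end{align*}
By definition of $\mathsf{Ideal}^b$, whenever $R$ outputs a bit $b$ distinct from $b^*$ the experiment outputs $\fail$ instead of $(\tau_{C^*},b)$. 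Hence the probability that the ideal experiment outputs a tuple whose second component equals $b$ is at most $\Pr_{\mathsf{Ideal}^b}[b^*=b]$, and summing over $b\in\bit$ we obtain
\begin{align*}
\Pr_{\mathsf{Ideal}^0}[\text{output } =(\cdot,0)] + \Pr_{\mathsf{Ideal}^1}[\text{output }=(\cdot,1)] \le 1.
\end{align*}

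The remaining step is to transfer this bound to the real world via the guaranteed indistinguishability of $\mathsf{Real}^b$ and $\mathsf{Ideal}^b$. The distinguisher that simply checks whether the experiment output is of the form $(\cdot,b)$ is efficient (indeed, it needs only to read the classical second component), so it is a valid distinguisher in both the statistical and computational settings. Applying it for each $b\in\bit$ yields
\begin{align*}
\bigl|\Pr_{\mathsf{Real}^b}[R \text{ accepts } b] - \Pr_{\mathsf{Ideal}^b}[\text{output}=(\cdot,b)]\bigr| \le \negl(\secp),
\end{align*}
and summing gives $p_0+p_1 \le 1 + \negl(\secp)$, which is the claimed sum-binding. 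The only step that requires any care is verifying that the distribution of $b^*$ does not depend on $b$; this is where the hypothesis that $C^*_0$ and $C^*_1$ agree in the commit phase is essential, and since the extractor acts only on the commitment register $\qreg{C}$ this follows immediately. There is no real technical obstacle.
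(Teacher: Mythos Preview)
The proposal is correct and follows essentially the same approach as the paper: both use that the extracted bit $b^*$ has the same distribution in $\mathsf{Ideal}^0$ and $\mathsf{Ideal}^1$ (since $C^*_0,C^*_1$ agree in the commit phase and $\mathcal{E}$ acts only on $\qreg{C}$), bound the ideal-world sum by $1$ via $\Pr[b^*=0]+\Pr[b^*=1]=1$, and transfer to the real world using the efficient distinguisher that reads the second output component. The only stylistic difference is that the paper phrases this as a proof by contradiction while you give a direct argument.
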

\begin{proof}
    Suppose that there are non-uniform unbounded-time (resp. QPT) malicious committers $C^*_0$ and $C^*_1$ that break statistical (resp. computational) sum-binding.  
    This implies that
    \begin{align*}
    \Pr[b_0=0:(\tau_{C^*_0},b_0) \gets \mathsf{Real}_\secp^{C^*_0}]+\Pr[b_1=1:(\tau_{C^*_1},b_1) \gets \mathsf{Real}_\secp^{C^*_1}]- 1
    \end{align*}
    is non-negligible. Here, $C^*_0$ and $C^*_1$ work in the same way in the commit phase.
    By statistical (resp. computational) extractor-based binding, the above implies that 
     \begin{align}
    \Pr[b_0=0:(\tau_{C^*_0},b_0) \gets \mathsf{Ideal}_\secp^{C^*_0,\mathcal{E}}]+\Pr[b_1=1:(\tau_{C^*_1},b_1) \gets \mathsf{Ideal}_\secp^{C^*_1,\mathcal{E}}]- 1
    \label{nonnegligible}
    \end{align}
    is non-negligible. 
    However, since $C^*_0$ and $C^*_1$ share the same commit phase and the extracted bit $b^*$ only depends on the commit phase, we clearly have 
     \begin{align*}
    \Pr[b_0=0:(\tau_{C^*_0},b_0) \gets \mathsf{Ideal}_\secp^{C^*_0,\mathcal{E}}]+\Pr[b_1=1:(\tau_{C^*_1},b_1) \gets \mathsf{Ideal}_\secp^{C^*_1,\mathcal{E}}]\le 1.
    \end{align*}
    This contradicts the fact that \cref{nonnegligible} is non-neglibible. Thus, statistical (resp. computational) sum-binding is satisfied. 
\end{proof}

\subsection{Construction}\label{sec:construction_QAI}
We show that quantum auxiliary-input commitments can be constructed without any assumption.

Let $\{\xi_{\secp,b}\}_{\secp\in\mathbb{N},b\in\bit}$ be a quantum auxiliary-input EFI pair,  
which exists unconditionally.
Let $\ket{\psi_{\secp,b}}_{\qreg{X},\qreg{Y}}$ be a purification of $\xi_{\secp,b}$ such that $\Tr_{\qreg{Y}}(\ket{\psi_{\secp,b}}\bra{\psi_{\secp,b}}_{\qreg{X},\qreg{Y}})=\xi_{\secp,b}$.  
Then we construct a quantum auxiliary-input commitment scheme as follows where $m=\secp$.\footnote{In fact, it suffices to set $m=\omega(\log \secp)$. }
\begin{itemize}
    \item 
    {\bf Quantum auxiliary-input phase:}
    For security parameter $\secp\in \mathbb{N}$, define the quantum auxiliary input 
    $\ket{\psi_\secp}:=\ket{\psi_{\secp,0}}^{\otimes m}\otimes \ket{\psi_{\secp,1}}^{\otimes m}$.
    One single copy of $\ket{\psi_\secp}$ is sent to $C$ and the other single copy of $\ket{\psi_\secp}$ is sent to $R$.
    \item
{\bf Commit phase:} 
$C$ takes 
a bit $b\in \bit$ to commit 
(and the quantum auxiliary input $\ket{\psi_\secp}$ given in the first phase)  
as input. 
$C$ sends the register $\qreg{C}$ of $\ket{\psi_{\secp,b}}^{\otimes m}_{\qreg{C},\qreg{R}}$ to $R$,
where $\qreg{C}:=(\qreg{X_1},...,\qreg{X_m})$ and  $\qreg{R}:=(\qreg{Y_{1}},...,\qreg{Y_{m}})$.\footnote{Here, the $\qreg{X}$ ($\qreg{Y}$) register of
the $i$th $\ket{\psi_{\secp,b}}$ of $\ket{\psi_{\secp,b}}^{\otimes m}$ is set to $\qreg{X_i}$ ($\qreg{Y_i}$).}
\item
{\bf Reveal phase:}
$C$ sends $b$ and the register $\qreg{R}$ to $R$.
In the first phase, $R$ received $\ket{\psi_\secp}$. $R$ uses only
$\ket{\psi_{\secp,b}}^{\otimes m}_{\qreg{C'},\qreg{R'}}$, where
$\qreg{C'}:=(\qreg{X_1'},...,\qreg{X_m'})$ and  $\qreg{R'}:=(\qreg{Y_1'},...,\qreg{Y_m'})$.
    For each $i\in[m]$, 
     $R$ runs the SWAP test between registers 
     $(\qreg{X_{i}},\qreg{Y_{i}})$ and 
     $(\qreg{X'_{i}},\qreg{Y'_{i}})$. 
      If all of the tests accept, $R$ accepts by outputting $b$ and otherwise rejects by outputting $\bot$.
\end{itemize}
If the protocol is run honestly, the registers $(\qreg{X_i},\qreg{Y_i})$ and 
     $(\qreg{X'_i},\qreg{Y'_i})$ take exactly the same state $\ket{\psi_{\secp,b}}$ for all $i\in[m]$. 
     Thus, $R$ accepts with probability $1$ and thus correctness holds. 
     
\begin{theorem}\label{thm:QAI_hid_bin}
The above protocol satisfies computational hiding and statistical extractor-based binding.  
\end{theorem}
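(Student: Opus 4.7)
The plan is to handle hiding and binding separately.

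For \textbf{computational hiding}, I would use a standard hybrid argument across the $m$ copies. Define $\hyb_i$ as the experiment where the commitment register contains $\xi_{\secp,0}^{\otimes i}\otimes\xi_{\secp,1}^{\otimes m-i}$, so $\hyb_0$ matches committing $b=1$ and $\hyb_m$ matches committing $b=0$. Adjacent hybrids differ in a single copy of the EFI pair, and any distinguisher between $\hyb_{i-1}$ and $\hyb_i$ yields one against a single pair $(\xi_{\secp,0},\xi_{\secp,1})$ once the remaining $m-1$ copies, the adversary's original quantum advice, and the full quantum auxiliary input $\ket{\psi_\secp}$ are folded into the reduction's non-uniform quantum advice. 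Summing the $m$ losses gives a $\negl(\secp)$ bound.

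For \textbf{statistical extractor-based binding}, I first note that Fuchs--van de Graaf applied to $\TD(\xi_{\secp,0},\xi_{\secp,1})=1-\negl(\secp)$ gives $F(\xi_{\secp,0},\xi_{\secp,1})=\negl(\secp)$, and multiplicativity of fidelity yields $F(\xi_{\secp,0}^{\otimes m},\xi_{\secp,1}^{\otimes m})\le\negl(\secp)^m$. Applying \Cref{lem:good_measurement} to the purifications $\ket{\psi_{\secp,0}}^{\otimes m}$ and $\ket{\psi_{\secp,1}}^{\otimes m}$ on $(\qreg{C},\qreg{R})$ yields a projective measurement $\{\Pi_0,\Pi_1,\Pi_\bot\}$ on $\qreg{C}$ with $\|(\Pi_b\otimes I)\ket{\psi_{\secp,b}}^{\otimes m}\|^2\ge 1-\negl(\secp)$ for each $b\in\bit$; since the three projectors sum to $I$, this also forces $\|(\Pi_{1-b}\otimes I)\ket{\psi_{\secp,b}}^{\otimes m}\|^2\le\negl(\secp)$. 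The extractor $\mathcal{E}_\secp$ performs this measurement on $\qreg{C}$, sets $b^*$ to the outcome (mapping a $\Pi_\bot$ outcome to any fixed default bit), and forwards the post-measurement state as the commitment.

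The experiments $\mathsf{Real}$ and $\mathsf{Ideal}$ can be coupled by performing $\mathcal{E}_\secp$'s measurement in both worlds and only using the recorded outcome in $\mathsf{Ideal}$; under this coupling the joint distribution of $(\tau_{C^*},b)$ is identical in both worlds, so indistinguishability reduces to showing $\Pr[\fail]=\negl(\secp)$, i.e.\ that whenever the receiver accepts a bit $b\in\bit$, we have $b=b^*$ except with negligible probability. The POVM element on $(\qreg{C},\qreg{R})$ for all $m$ SWAP tests against $\ket{\psi_{\secp,b}}^{\otimes m}$ accepting factors as
\[
M^{(b)}=\bigotimes_{i=1}^m\frac{1}{2}\!\left(I+\ket{\psi_{\secp,b}}\bra{\psi_{\secp,b}}_{\qreg{X_i},\qreg{Y_i}}\right)=\frac{1}{2^m}\sum_{S\subseteq[m]}\bigotimes_{i\in S}\ket{\psi_{\secp,b}}\bra{\psi_{\secp,b}}_{\qreg{X_i},\qreg{Y_i}}\otimes I_{\bar S},
\]
and for each $b\in\bit$ the fail contribution is $\Tr\!\bigl[(\Pi_{1-b}\otimes I)M^{(b)}(\Pi_{1-b}\otimes I)\,\rho_{C^*}\bigr]$, where $\rho_{C^*}$ is the committer's state on $(\qreg{C},\qreg{R})$ at the start of the reveal phase. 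Thus the task reduces to upper-bounding this operator term by term along the subset expansion, with \Cref{lem:HM13} available to cross-check any partial-trace computation.

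I expect the \textbf{main obstacle} to lie in this subset expansion. The $S=[m]$ summand is immediately controlled by the gentle-measurement bound and contributes at most $\negl(\secp)/2^m$. For proper subsets $S\subsetneq[m]$, however, $\Pi_{1-b}$ acts on the full $\qreg{X}^{(m)}$ while $\bigotimes_{i\in S}\ket{\psi_{\secp,b}}\bra{\psi_{\secp,b}}_{\qreg{X_i},\qreg{Y_i}}\otimes I_{\bar S}$ touches only the $S$-registers, so \Cref{lem:good_measurement} cannot be applied to each term directly. The plan is to adapt the gentle-measurement template of \cite{C:AnaQiaYue22,AC:Yan22} by using Uhlmann's theorem on the $\bar S$-registers to rotate them into purifications of $\xi_{\secp,b}$ and gain a factor of order $\sqrt{F(\xi_{\secp,0},\xi_{\secp,1})}^{\,m-|S|}$ per term, so that the combinatorial sum $\sum_k \binom{m}{k}\,\negl(\secp)^{m-k}/2^m$ is negligible for the chosen $m=\omega(\log\secp)$. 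As a fallback, if this direct analysis turns out to be too lossy, I would replace the monolithic extractor by $m$ independent applications of the single-register projectors from \Cref{lem:good_measurement} and argue binding via a Chernoff-type bound on the majority of per-register outcomes, reducing the analysis to $m$ independent per-register SWAP tests.
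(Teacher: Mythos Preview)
Your hiding argument is fine. For binding, two points.

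\textbf{The coupling step is not quite right.} Performing the extractor's measurement on $\qreg{C}$ ``in both worlds'' and discarding the outcome in the first does \emph{not} reproduce $\mathsf{Real}$: the receiver subsequently acts on $\qreg{C}$, so the decoherence introduced by the extractor can change the joint law of $(\tau_{C^*},b)$. What you actually need is the single estimate $\|P_b(I-\widetilde{\Pi}_b)\ket{\widetilde{\phi}_b}\|^2=\negl(\secp)$ (with $P_b$ the receiver's acceptance projector); this quantity \emph{is} $\Pr[\fail]$, and it also controls, via a short gentle-measurement computation, the gap between $P_b\ket{\widetilde{\phi}_b}$ and $P_b\widetilde{\Pi}_b\ket{\widetilde{\phi}_b}$. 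So your reduction to bounding $\Pr[\fail]$ is morally right but needs that extra step; this is routine once you know to do it.

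\textbf{The monolithic extractor is the real obstacle, and the Uhlmann idea does not rescue it.} After tracing out the $\bar S$ registers, a global projector $\Pi_{1-b}$ on $\qreg{X}^{(m)}$ with no tensor structure imposes no usable constraint on $\qreg{X}_S$, and the adversary's state on $(\qreg{C},\qreg{R})$ is arbitrary---there is no product structure on $\bar S$ for Uhlmann to act on. Concretely, an adversary can send $\ket{\psi_{\secp,b}}$ on $m-1$ coordinates and $\ket{\psi_{\secp,1-b}}$ on one; this passes all SWAP tests with probability roughly $1/2$, yet its $\qreg{C}$-marginal need not lie near the image of $\Pi_b$, so the monolithic extractor may output $1-b$ (or $\bot$ mapped to your default). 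Your fallback is exactly what the paper does: apply the single-register projectors from \Cref{lem:good_measurement} to each $\qreg{X_i}$ and declare $b^*$ by a $2m/3$ supermajority. The per-register tensor structure is what makes the analysis go through: one expands $(I-\widetilde{\Pi}_b)$ over subsets $T$ with $|T|\le 2m/3$ of coordinates landing in $\Pi_b$; for each $S$ in the SWAP-test expansion, the factors $(I-\Pi_b)_{\qreg{X_i}}$ on $i\in S\setminus T$ commute through the partial trace and hit the \emph{receiver's} reference copy $\ket{\psi_{\secp,b}}$, contributing $\|(I-\Pi_b)\ket{\psi_{\secp,b}}\|^{2|S\setminus T|}$. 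Summing over $S$ yields roughly $2^{-m/3}+O(\epsilon)$ per $T$, and the $T$-sum is at most $1$ since the corresponding projectors are orthogonal. None of this structure is available for a monolithic $\Pi_{1-b}$, so you should go directly to the fallback.
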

\begin{proof}~
\paragraph{\bf Computational hiding.}
It immediately follows from computational indistinguishability of the quantum auxiliary-input EFI pair.

\paragraph{\bf Statistical extractor-based binding.}
Since $\TD(\xi_{\secp,0},\xi_{\secp,1})=1-\negl(\secp)$ we have $F(\xi_{\secp,0},\xi_{\secp,1})=\negl(\secp)$.
Let $\epsilon:=F(\xi_{\secp,0},\xi_{\secp,1})$. 
Then \Cref{lem:good_measurement} implies that 
there is a projective measuremt $\{\Pi_0,\Pi_1,\Pi_\bot=I-\Pi_0-\Pi_1\}$ on register $\qreg{X}$ such that 
for each $b\in \bit$, 
it holds that
\begin{align}\label{eq:I-Pi_b}
\left\|\left(I-\Pi_b\right)\ket{\psi_{\secp,b}}\right\|^2\le\sqrt{2\epsilon}.
\end{align}
\if0
\Cref{cor:good_measurement} implies that 
there is a projective measuremt $\{\Pi_0,\Pi_1,\Pi_\bot=I-\Pi_0-\Pi_1\}$ on register $\qreg{X}$ such that 
for each $b\in \bit$ and any integer $1\le k\le m$, 
it holds that
\begin{align}\label{eq:I-Pi_b}
\left\|\left(I-\Pi_b^{\otimes k}\right)\ket{\psi_{\secp,b}}^{\otimes k}\right\|^2=\negl(\secp).
\end{align}
\fi
Here, we simply write $\Pi_b$ to mean ${\Pi_b}_\qreg{X}\otimes I_{\qreg{Y}}$ for simplicity.

\if0
\begin{align} \label{eq:Pi_b}
    \|(I-\Pi_b)_{\qreg{X}} \ket{\psi_{\secp,b}}_{\qreg{X},\qreg{Y}}\|^2=\negl(\secp) 
\end{align}
for each $b\in \bit$. 
First, we use \Cref{eq:Pi_b} to show the following lemma.
\mor{I think it is more readable if Lemma 5.8 goes to the preliminary section. And also it will be useful for other applications as an independent lemma.}
\begin{lemma}\label{lem:I-Pi_b}
For each $b\in \bit$ and any integer $1\le k\le m$, 
it holds that
\begin{align*}
\left\|\left(I-\Pi_b^{\otimes k}\right)\ket{\psi_{\secp,b}}^{\otimes k}\right\|^2=\negl(\secp).
\end{align*}
\end{lemma}
\mor{Isn't it directly follows from the union bound as follows?}\takashi{What version of the union bound is this?}
\begin{align*}
\left\|\left(I-\Pi_b^{\otimes k}\right)\ket{\psi_{\secp,b}}^{\otimes k}\right\|^2
&=1-\bra{\psi_{\secp,b}}\Pi_b\ket{\psi_{\secp,b}}^k\\
&\le k
\left\|\left(I-\Pi_b\right)\ket{\psi_{\secp,b}}\right\|^2\\
&\le\negl(\secp),
\end{align*}
where in the inequality we have used the union bound.

\begin{proof}[Proof of \Cref{lem:I-Pi_b}]
   For $i\in [k]$, let $(\qreg{X_i},\qreg{Y_i})$ be the registers for the $i$-th copy of $\ket{\psi_{\secp,b}}$. 
   Note that we have
    \begin{align*}
    I_{\{\qreg{X_i}\}_{i\in [k]}}=\sum_{j\in [k]}\left(\Pi_b^{\otimes j-1}\right)_{\{\qreg{X_i}\}_{i\in [j-1]}}\otimes \left(I-\Pi_b\right)_{\qreg{X_j}} \otimes I_{\{\qreg{X_i}\}_{i\in \{j+1,...,k\}}}
    +\left(\Pi_b^{\otimes k}\right)_{\{\qreg{X_i}\}_{i\in [k]}}. 
    \end{align*}
    \mor{I do not understand it. $\Pi_b$ acts only on $\qreg{X}$. No? Why it also acts on $\qreg{Y}$?} \takashi{You are right, this was the operator on $\qreg{X}$. I fixed it.}
    Then we have 
    \begin{align*}
    \left\|\left(I-\Pi_b^{\otimes k}\right)\ket{\psi_{\secp,b}}^{\otimes k}\right\|^2=
        &\left\|\left(\sum_{j\in [k]}\left(\Pi_b^{\otimes j-1}\right)_{\{\qreg{X_i}\}_{i\in [j-1]}}\otimes \left(I-\Pi_b\right)_{\qreg{X_j}} \otimes I_{\{\qreg{X_i}\}_{i\in \{j+1,...,k\}}} \otimes I_{\{\qreg{Y_i}\}_{i\in [k]}}\right)\ket{\psi_{\secp,b}}^{\otimes k}\right\|^2\\
        &= \sum_{j\in [k]}\left\|\left(\left(\Pi_b^{\otimes j-1}\right)_{\{\qreg{X_i}\}_{i\in [j-1]}}\otimes \left(I-\Pi_b\right)_{\qreg{X_j}} \otimes I_{\{\qreg{X_i}\}_{i\in \{j+1,...,k\}}}\otimes I_{\{\qreg{Y_i}\}_{i\in [k]}}\right)\ket{\psi_{\secp,b}}^{\otimes k}\right\|^2\\
        &\le k\cdot \|(I-\Pi_b) \ket{\psi_{\secp,b}}\|^2=\negl(\secp)
    \end{align*}
    \mor{second equality is actually inequality?}\takashi{I believe this is equality since each term is orthogonal to each other.}
    where the final equality follows from \Cref{eq:Pi_b} and that $k\le m$ is polynomial in $\secp$. 
\end{proof}

\fi

\if0
We define projectors $\widetilde{\Pi}_0$, $\widetilde{\Pi}_1$, and $\widetilde{\Pi}_\bot$ on the commitment register $\qreg{C}=(\qreg{X_1},...,\qreg{X_m})$ as follows: 
\begin{align*}
 \widetilde{\Pi}_b:=\sum_{T\subseteq [m]:|T|> 2m/3} \hat{\Pi}_{b}^{(T)}
\end{align*}
for $b\in \bit$ where 
\begin{align*}
\hat{\Pi}_{b}^{(T)}
:=
\left(\bigotimes_{i\in T} \Pi_{b\qreg{X_i}}\right)\otimes 
\left(\bigotimes_{i\in [m]\setminus T}(I-\Pi_b)_{\qreg{X_i}}\right)
\end{align*}
for each $T\subseteq [m]$, 
and 
\begin{align*}
    \widetilde{\Pi}_\bot:=I-   \widetilde{\Pi}_0- \widetilde{\Pi}_1.
\end{align*}
Since it holds that
\begin{align*}
    I=\sum_{T\subseteq [m]}\hat{\Pi}_{b}^{(T)},
\end{align*}
it holds that
\begin{align}\label{eq:I_minus_tilde_Pi}
    I-\widetilde{\Pi}_b= \sum_{T\subseteq [m]:|T|\le 2m/3}\hat{\Pi}_{b}^{(T)}. 
\end{align}
\fi

The extractor $\mathcal{E}_\secp$ is described as follows:
Upon receiving the commitment register $\qreg{C}=(\qreg{X_1},...,\qreg{X_m})$, 
for each $i\in [m]$, 
it applies the projective measurement $\{\Pi_0,\Pi_1,\Pi_\bot\}$ on $\qreg{X_i}$ to obtain an outcome $b_i\in \{0,1,\bot\}$. 
\begin{itemize}
\item If $|\{i\in[m]:b_i=0\}|>2m/3$, it outputs $b^*=0$.
\item  If $|\{i\in[m]:b_i=1\}|>2m/3$, it outputs $b^*=1$.
\item Otherwise, it outputs $b^*=\bot$.
\end{itemize} 

By using an additional ancilla register $\qreg{E}$, we can describe the extractor $\mathcal{E}_\secp$ as a projective measurement $\{\widetilde{\Pi}_0,\widetilde{\Pi}_1,\widetilde{\Pi}_\bot\}$ over registers $\qreg{C}$ and $\qreg{E}$, i.e., they are defined in such a way that running  $\mathcal{E}_\secp$ on 
$\rho_\qreg{C}$ is equivalent to applying the projective measurement $\{\widetilde{\Pi}_0,\widetilde{\Pi}_1,\widetilde{\Pi}_\bot\}$ on $\rho_{\qreg{C}}\otimes \ket{0...0}\bra{0...0}_{\qreg{E}}$ for any state $\rho_\qreg{C}$. 

For $b\in \bit$, let $P_b$ be the projector corresponding to the purified verification procedure by the receiver $R$. That is, it is defined so that $R$ is described as follows: 
Upon receiving the quantum auxiliary input $\ket{\psi_{\secp,b}}^{\otimes m}_\qreg{Aux'}$ where $\qreg{Aux'}:=\{\qreg{X'_i},\qreg{Y'_i}\}_{i\in [m]}$ and $(b,\qreg{C},\qreg{R})$ from the committer, $R$ initializes its ancilla register $\qreg{V}$ to  $\ket{0...0}_{\qreg{V}}$, applies the projective measurement $(P_b,I-P_b)$ on registers $(\qreg{Aux'},\qreg{C},\qreg{R},\qreg{V})$, and accepts (i.e., outputs $b$) if the measurement outcome is $P_b$ (i.e., the state is projected onto the image of $P_b$) and otherwise rejects (i.e., outputs $\bot$). 

Let $C^*$ be an unbounded-time malicious committer. By using the above notations, we can describe the experiments 
$\mathsf{Real}_\secp^{C^*}$ and $\mathsf{Ideal}_\secp^{C^*,\mathcal{E}}$ in the following form where we highlight the differences in red bold texts:

\begin{description}
\item[$\mathsf{Real}_\secp^{C^*}$:]~
\begin{enumerate}
\item $C^*$  takes the quantum auxiliary input $\ket{\psi_\secp}$ as input and  generates a state $\ket{\phi_b}$ over registers $(\qreg{C},\qreg{R},\qreg{W})$ along with a bit $b\in \bit$ where $\qreg{W}$ is the internal register of $C^*$.\footnote{It may be possible that $C^*$ generates $b$ after applying some unitary in the reveal phase. However, since it does not receive anything between the commit and reveal phases, we can assume that $C^*$ does not apply any unitary between the two phases without loss of generality.} 
\item $C^*$ sends $\qreg{C}$ to $R$.
\item 
$C^*$ sends $(b,\qreg{R})$ to $R$. 
\item 
$R$ runs the verification in the reveal phase. That is, $R$ takes the quantum auxiliary input $\ket{\psi_{\secp,b}}^{\otimes m}_\qreg{Aux'}$, $(b,\qreg{C},\qreg{R})$ sent from $C^*$, and the ancilla qubits 
$\ket{0...0}_{\qreg{V}}$, applies the projective measurement $(P_b,I-P_b)$ on registers $(\qreg{Aux'},\qreg{C},\qreg{R},\qreg{V})$. 
\item  If $R$ rejects (i.e., the measurement results in projection onto the image of $I-P_b$), $b$ is replaced by $\bot$.  
\item The experiment outputs $(b,\qreg{W})$.
\end{enumerate}
\end{description}

\begin{description}
\item[$\mathsf{Ideal}_\secp^{C^*,\mathcal{E}}$:]~
\begin{enumerate}
\item $C^*$  takes the quantum auxiliary input $\ket{\psi_\secp}$ as input and  generates a state $\ket{\phi_b}$ over registers $(\qreg{C},\qreg{R},\qreg{W})$ along with a bit $b\in \bit$ where $\qreg{W}$ is the internal register of $C^*$. 
\item $C^*$ sends $\qreg{C}$ to $R$.
\item \textcolor{red}{\bf The extractor $\mathcal{E}_\secp$ 
prepares the ancilla qubits $\ket{0...0}_{\qreg{E}}$, 
applies the projective measurement $\{\widetilde{\Pi}_0,\widetilde{\Pi}_1,\widetilde{\Pi}_\bot\}$ on registers $\qreg{C}$ and $\qreg{E}$, lets $b^*\in\{0,1,\bot\}$ be the outcome, and sends $\qreg{C}$ to $R$.}   
\item $C^*$ sends $(b,\qreg{R})$ to $R$. 
\item $R$ runs the verification in the reveal phase. That is, $R$ takes the quantum auxiliary input $\ket{\psi_{\secp,b}}^{\otimes m}_\qreg{Aux'}$, $(b,\qreg{C},\qreg{R})$ sent from $C^*$ and $\mathcal{E}$, and the ancilla qubits $\ket{0...0}_{\qreg{V}}$, applies the projective measurement $(P_b,I-P_b)$ on registers $(\qreg{Aux'},\qreg{C},\qreg{R},\qreg{V})$. 
\item \textcolor{red}{\bf If $R$ accepts (i.e., the measurement results in projection onto the image of $P_b$) and $b\ne b^*$, the experiment outputs the special symbol $\fail$ and halts.} 
\item If $R$ rejects (i.e., the measurement results in projection onto the image of $I-P_b$), $b$ is replaced by $\bot$. 
\item The experiment outputs $(b,\qreg{W})$.
\end{enumerate}
\end{description} 

The rest of the proof is similar to \cite[Appendix B]{cryptoeprint:2020/1488} where it is proven that canonical quantum bit commitments satisfy the extractor-based binding property defined in \cite{C:AnaQiaYue22}. 
We define density operators $\rho_{\mathsf{real}}(b)$ and $\rho_{\mathsf{ideal}}(b)$ that correspond to the outputs of $\mathsf{Real}_\secp^{C^*}$ and $\mathsf{Ideal}_\secp^{C^*,\mathcal{E}}$ conditioned on the bit sent from $C^*$ being $b$, respectively. 
That is, we let  
\begin{align*}
 \rho_{\mathsf{real}}(b):= \Tr_{\qreg{Aux'},\qreg{C},\qreg{R},\qreg{V},\qreg{E}}(P_b \ket{\widetilde{\phi}_b}\bra{\widetilde{\phi}_b})\otimes \ket{b}\bra{b} +   \Tr_{\qreg{Aux'},\qreg{C},\qreg{R},\qreg{V},\qreg{E}}((I-P_b) \ket{\widetilde{\phi}_b}\bra{\widetilde{\phi}_b})\otimes \ket{\bot}\bra{\bot} 
\end{align*}
and 
\begin{align*}
 \rho_{\mathsf{ideal}}(b):= &\Tr_{\qreg{Aux'},\qreg{C},\qreg{R},\qreg{V},\qreg{E}}(N_b \ket{\widetilde{\phi}_b}\bra{\widetilde{\phi}_b})\otimes \ket{b}\bra{b} +   \Tr_{\qreg{Aux'},\qreg{C},\qreg{R},\qreg{V},\qreg{E}}(N_\bot \ket{\widetilde{\phi}_b}\bra{\widetilde{\phi}_b})\otimes \ket{\bot}\bra{\bot}\\
 &+ \Tr_{\qreg{Aux'},\qreg{C},\qreg{R},\qreg{V},\qreg{E}}(N_{\fail} \ket{\widetilde{\phi}_b}\bra{\widetilde{\phi}_b})\otimes \ket{\fail}\bra{\fail}
\end{align*}
where 
\begin{align*}
    \ket{\widetilde{\phi}_b}:=\ket{\phi_b}_{\qreg{C},\qreg{R},\qreg{W}}\otimes \ket{\psi_{\secp,b}}^{\otimes m}_{\qreg{Aux'}}\otimes \ket{0...0}_{\qreg{V}}\otimes\ket{0...0}_{\qreg{E}}
\end{align*}
and 
\begin{align*}
N_b:=\widetilde{\Pi}_b P_b \widetilde{\Pi}_b,~~~ N_{\fail}:=(I-\widetilde{\Pi}_{b}) P_b (I-\widetilde{\Pi}_{b}),~~~ N_\bot:=I-N_b-N_{\fail}.
\end{align*}
We note that $N_{\fail}$ and $N_\bot$ depend on $b$, but we omit the dependence for simplicity. 
Noting that the distribution of $b$ is identical in the both experiments, 
it suffices to prove 
\begin{align*}
    \TD( \rho_{\mathsf{real}}(b), \rho_{\mathsf{ideal}}(b))=\negl(\secp) 
\end{align*}
for $b\in \bit$. 
We show the following lemma.
\begin{lemma}\label{lem:bound_fail}
For each $b\in \bit$,  it holds that
$$\|P_b (I-\widetilde{\Pi}_{b}) \ket{\widetilde{\phi}_b}\|^2=\negl(\secp).$$ 
\end{lemma}
\begin{proof}[Proof of \Cref{lem:bound_fail}]
By the definitions of $P_b$ and $\ket{\widetilde{\phi}_b}$,  
$\|P_b (I-\widetilde{\Pi}_{b}) \ket{\widetilde{\phi}_b}\|^2$ is the probability that
$\rho:=\Tr_{\qreg{W},\qreg{E}}\left((I-\widetilde{\Pi}_{b})_{\qreg{C},\qreg{E}}\ket{\phi_b}_{\qreg{C},\qreg{R},\qreg{W}}\ket{0...0}_{\qreg{E}}\right)$ passes the verification by $R$ with respect to the committed bit $b$.\footnote{Precisely speaking, it is not correct because $\rho$ is not a state (it is not normalized).
However, we believe the abuse of terminology will not cause any confusion.} 
By \Cref{lem:HM13}, it is equal to 
\begin{align*}
\frac{1}{2^m}\sum_{S\subseteq [m]}\Tr(\rho_S \sigma_S)
\end{align*}
where 
$\sigma:=\bigotimes_{i\in [m]} \ket{\psi_{\secp,b}}\bra{\psi_{\secp,b}}_{\qreg{X_i},\qreg{Y_i}}$. 
Here, $\rho_S$ and $\sigma_S$ are the states obtained by tracing out all $(\qreg{X_i},\qreg{Y_i})$ such that $i\notin S$.\footnote{Again,
$\rho_S$ is not a state because it is not normalized.} 
By the definition of $\widetilde{\Pi}_b$, we can write\footnote{The purified extractor is assumed to work as follows. It ``coherently
measures'' $\{\Pi_0,\Pi_1,\Pi_\bot\}$ on each $\qreg{X_i}$ and writes the result $b_1\|...\|b_m$ on an ancilla register. Depending on
the value on the ancilla register, 0, 1, or $\bot$ is written in another ancilla register. Finally the second register is measured in
the computational basis. If we trace out those two registers, we get \cref{postmeasurement} as the (sub-normalized) post-measurement state.} 
\begin{align}
\rho=\sum_{T\subseteq [m]:|T|\le 2m/3} 
\Tr_{\qreg{W}}
\hat{\Pi}_{b}^{(T)}
\ket{\phi_b}\bra{\phi_b} 
\hat{\Pi}_{b}^{(T)} 
\label{postmeasurement}
\end{align}
where 
\begin{align*}
\hat{\Pi}_{b}^{(T)}
:=
\left(\bigotimes_{i\in T} \Pi_{b\qreg{X_i}}\right)\otimes 
\left(\bigotimes_{i\in [m]\setminus T}(I-\Pi_b)_{\qreg{X_i}}\right).
\end{align*}

Thus, we have 
\begin{align*}
\frac{1}{2^m}\sum_{S\subseteq [m]}\Tr(\rho_S \sigma_S)
=
\frac{1}{2^m}\sum_{S\subseteq [m]}
\sum_{\substack{T\subseteq [m]\\|T|\le 2m/3}} 
\Tr\left(
\left(\Tr_{\{\qreg{X_i},\qreg{Y_i}\}_{i\notin S},\qreg{W}}
\hat{\Pi}_{b}^{(T)}
\ket{\phi_b}\bra{\phi_b} 
\hat{\Pi}_{b}^{(T)} \right)
\left(\bigotimes_{i\in S}\ket{\psi_{\secp,b}}\bra{\psi_{\secp,b}}_{\qreg{X_i},\qreg{Y_i}}\right)\right).
\end{align*}
For any $S$ and $T$, 
we have 
\begin{align*}
&\Tr\left(
\left(\Tr_{\{\qreg{X_i},\qreg{Y_i}\}_{i\notin S},\qreg{W}}
\hat{\Pi}_{b}^{(T)}
\ket{\phi_b}\bra{\phi_b} 
\hat{\Pi}_{b}^{(T)}  
\right)
\left(\bigotimes_{i\in S}\ket{\psi_{\secp,b}}\bra{\psi_{\secp,b}}_{\qreg{X_i},\qreg{Y_i}}\right)\right)\\ 
=&\Tr\left(
(I-\Pi_b)^{\otimes |S\setminus T|}_{\{\qreg{X_{i}}\}_{i\in S \setminus T}}
\left(\Tr_{\{\qreg{X_i},\qreg{Y_i}\}_{i\notin S},\qreg{W}}
\hat{\Pi}_{b}^{(T)}
\ket{\phi_b}\bra{\phi_b} 
\hat{\Pi}_{b}^{(T)} \right)
(I-\Pi_b)^{\otimes |S\setminus T|}_{\{\qreg{X_{i}}\}_{i\in S \setminus T}}
\left(\bigotimes_{i\in S}\ket{\psi_{\secp,b}}\bra{\psi_{\secp,b}}_{\qreg{X_i},\qreg{Y_i}}\right)\right)\\
=&\Tr\left(
\left(\Tr_{\{\qreg{X_i},\qreg{Y_i}\}_{i\notin S},\qreg{W}}
\hat{\Pi}_{b}^{(T)}
\ket{\phi_b}\bra{\phi_b} 
\hat{\Pi}_{b}^{(T)} \right)
(I-\Pi_b)^{\otimes |S\setminus T|}_{\{\qreg{X_{i}}\}_{i\in S \setminus T}}
\left(\bigotimes_{i\in S}\ket{\psi_{\secp,b}}\bra{\psi_{\secp,b}}_{\qreg{X_i},\qreg{Y_i}}\right)
(I-\Pi_b)^{\otimes |S\setminus T|}_{\{\qreg{X_{i}}\}_{i\in S \setminus T}}
\right)\\
\le &
\left\|\hat{\Pi}_{b}^{(T)}
\ket{\phi_b}\right\|^2 \cdot\Tr\left(
(I-\Pi_b)^{\otimes |S\setminus T|}_{\{\qreg{X_{i}}\}_{i\in S \setminus T}}
\left(\bigotimes_{i\in S}\ket{\psi_{\secp,b}}\bra{\psi_{\secp,b}}_{\qreg{X_i},\qreg{Y_i}}\right)
(I-\Pi_b)^{\otimes |S\setminus T|}_{\{\qreg{X_{i}}\}_{i\in S \setminus T}}
\right)\\
\leq &\left\|\hat{\Pi}_{b}^{(T)}
\ket{\phi_b}\right\|^2 \cdot\left\|(I-\Pi_b)\ket{\psi_{\secp,b}}\right\|^{2|S\setminus T|}\\ 
\le &\left\|\hat{\Pi}_{b}^{(T)}
\ket{\phi_b}\right\|^2 \cdot(2\epsilon)^{|S\setminus T|}
\end{align*}
where 
the first inequality follows from $\Tr_{\{\qreg{X_i},\qreg{Y_i}\}_{i\notin S},\qreg{W}}
\hat{\Pi}_{b}^{(T)}
\ket{\phi_b}\bra{\phi_b} 
\hat{\Pi}_{b}^{(T)} \le \left\|\hat{\Pi}_{b}^{(T)}
\ket{\phi_b}\right\|^2 \cdot I$ 
and the final inequality follows from \Cref{eq:I-Pi_b}. 

Thus, for any $T$ such that $|T|\le 2m/3$, we have 
\begin{align*}
&\frac{1}{2^m}\sum_{S\subseteq [m]}
\Tr\left(
\left(\Tr_{\{\qreg{X_i},\qreg{Y_i}\}_{i\notin S},\qreg{W}}
\hat{\Pi}_{b}^{(T)}
\ket{\phi_b}\bra{\phi_b} 
\hat{\Pi}_{b}^{(T)} \right)
\left(\bigotimes_{i\in S}\ket{\psi_{\secp,b}}\bra{\psi_{\secp,b}}_{\qreg{X_i},\qreg{Y_i}}\right)\right)\\
\le&
\frac{1}{2^m}\sum_{S\subseteq [m]}
\left\|\hat{\Pi}_{b}^{(T)}
\ket{\phi_b}\right\|^2 \cdot(2\epsilon)^{|S\setminus T|}\\
\le &
\left\|\hat{\Pi}_{b}^{(T)}
\ket{\phi_b}\right\|^2 \cdot
\frac{1}{2^m}\left(\sum_{S\subseteq [m]:S\subseteq T} 1 + \sum_{S\subseteq [m]:S\nsubseteq T} 2\epsilon\right)\\
\le & \left\|\hat{\Pi}_{b}^{(T)}
\ket{\phi_b}\right\|^2 \cdot \left(2^{-m/3}+ 2\epsilon \right)
\end{align*}
where the final inequality follows from 
$|\{S\subseteq [m]:S\subseteq T\}|=2^{|T|}\leq 2^{2m/3}$ and 
$|\{S\subseteq [m]:S\nsubseteq T\}|\le 2^m$. 

Therefore, we have 
\begin{align*}
\frac{1}{2^m}\sum_{S\subseteq [m]}\Tr(\rho_S \sigma_S)
&\le 
\sum_{T\subseteq [m]:|T|\le 2m/3} 
\left\|\hat{\Pi}_{b}^{(T)}
\ket{\phi_b}\right\|^2 \cdot \left(2^{-m/3}+ 2\epsilon \right)\\
&\le 2^{-m/3}+ 2\epsilon \le \negl(\secp)
\end{align*}
where the final inequality follows from $m=\omega(\log \secp)$ and $\epsilon=\negl(\secp)$. 
This completes the proof of \Cref{lem:bound_fail}.

\if0
For $S=\emptyset$, we have $\Tr(\rho_S \sigma_S)=1$.
For any $S\neq \emptyset$, 
 we can calculate
\begin{align*}
\Tr(\rho_S \sigma_S)
&=\Tr\left(
\left(I-\Pi_b^{\otimes {|S|}}\right)
\left(\Tr_{\{\qreg{X_i},\qreg{Y_i}\}_{i\notin S},\qreg{W}}\ket{\phi_b}\bra{\phi_b}\right)
\left(I-\Pi_b^{\otimes {|S|}}\right)
\left(\bigotimes_{i\in S}\ket{\psi_{\secp,b}}\bra{\psi_{\secp,b}}_{\qreg{X_i'},\qreg{Y_i'}}\right)\right)\\
&=\Tr\left(
\left(\Tr_{\{\qreg{X_i},\qreg{Y_i}\}_{i\notin S},\qreg{W}}\ket{\phi_b}\bra{\phi_b}\right)
\left(I-\Pi_b^{\otimes {|S|}}\right)
\left(\bigotimes_{i\in S}\ket{\psi_{\secp,b}}\bra{\psi_{\secp,b}}_{\qreg{X_i'},\qreg{Y_i'}}\right)
\left(I-\Pi_b^{\otimes {|S|}}\right)
\right)\\
&\le \Tr\left(
\left(I-\Pi_b^{\otimes {|S|}}\right)
\left(\bigotimes_{i\in S}\ket{\psi_{\secp,b}}\bra{\psi_{\secp,b}}_{\qreg{X_i'},\qreg{Y_i'}}\right)
\left(I-\Pi_b^{\otimes {|S|}}\right)
\right)\\
&=\left\|\left(I-\Pi_b^{\otimes {|S|}}\right)\ket{\psi_{\secp,b}}^{\otimes |S|}\right\|^2
=\negl(\secp)
\end{align*}
where the first inequality follows from 
$\Tr_{\{\qreg{X_i},\qreg{Y_i}\}_{i\notin S},\qreg{W}}\ket{\phi_b}\bra{\phi_b}\le I$ 
and the final equality follows from \Cref{eq:I-Pi_b}. 
Combining the above, we have
\begin{align*}
\frac{1}{2^m}\sum_{S\subseteq [m]}\Tr(\rho_S \sigma_S)=\frac{1}{2^m}\left(1+\negl(\secp)\right)=\negl(\secp). 
\end{align*}
\fi 
\end{proof}

We define 
\begin{align*}
&\tau_{\mathsf{real}}^{(b)}:=\Tr_{\qreg{Aux'},\qreg{C},\qreg{R},\qreg{V},\qreg{E}}(P_b \ket{\widetilde{\phi}_b}\bra{\widetilde{\phi}_b}),~~~ \tau_{\mathsf{real}}^{(\bot)}:=\Tr_{\qreg{Aux'},\qreg{C},\qreg{R},\qreg{V},\qreg{E}}((I-P_b) \ket{\widetilde{\phi}_b}\bra{\widetilde{\phi}_b}),\\
&\tau_{\mathsf{ideal}}^{(b)}:= \Tr_{\qreg{Aux'},\qreg{C},\qreg{R},\qreg{V},\qreg{E}}(N_b \ket{\widetilde{\phi}_b}\bra{\widetilde{\phi}_b}),~~~ \tau_{\mathsf{ideal}}^{(\bot)}:=\Tr_{\qreg{Aux'},\qreg{C},\qreg{R},\qreg{V},\qreg{E}}(N_\bot \ket{\widetilde{\phi}_b}\bra{\widetilde{\phi}_b}),\\ 
&\tau_{\mathsf{ideal}}^{(\fail)}:=\Tr_{\qreg{Aux'},\qreg{C},\qreg{R},\qreg{V},\qreg{E}}(N_{\fail} \ket{\widetilde{\phi}_b}\bra{\widetilde{\phi}_b}) 
\end{align*}
so that we can write 
\begin{align}\label{eq:rhoreal}
 \rho_{\mathsf{real}}(b)=\tau_{\mathsf{real}}^{(b)}\otimes \ket{b}\bra{b} +   \tau_{\mathsf{real}}^{(\bot)}\otimes \ket{\bot}\bra{\bot} 
\end{align}
and 
\begin{align}\label{eq:rhoideal}
 \rho_{\mathsf{ideal}}(b)= \tau_{\mathsf{ideal}}^{(b)}\otimes \ket{b}\bra{b} +  \tau_{\mathsf{ideal}}^{(\bot)}\otimes \ket{\bot}\bra{\bot}+\tau_{\mathsf{ideal}}^{(\fail)}\otimes \ket{\fail}\bra{\fail}.
\end{align}
Then we use \Cref{lem:bound_fail} to show the following lemma. 
\begin{lemma}\label{lem:TD_bound}
For each $b\in \bit$, the following hold:
\begin{enumerate}
    \item $\Tr(\tau_{\mathsf{ideal}}^{(\fail)})=\negl(\secp)$.
    \item $\TD(\tau_{\mathsf{real}}^{(b)},\tau_{\mathsf{ideal}}^{(b)})=\negl(\secp)$.
    \item $\TD(\tau_{\mathsf{real}}^{(\bot)},\tau_{\mathsf{ideal}}^{(\bot)})=\negl(\secp)$. 
\end{enumerate}
\end{lemma}
\begin{proof}[Proof of \Cref{lem:TD_bound}]~\\

    \noindent\textbf{First item.} 
   We have
    \begin{align*}
        \Tr(\tau_{\mathsf{ideal}}^{(\fail)})=\Tr(N_{\fail} \ket{\widetilde{\phi}_b}\bra{\widetilde{\phi}_b})=\bra{\widetilde{\phi}_b}(I-\widetilde{\Pi}_{b}) P_b (I-\widetilde{\Pi}_{b}) \ket{\widetilde{\phi}_b}= \|P_b (I-\widetilde{\Pi}_{b}) \ket{\widetilde{\phi}_b}\|^2=\negl(\secp)
    \end{align*}
    where the final equality follows from 
    \Cref{lem:bound_fail}. 
    
  \noindent\textbf{Second item.}  
     We have
    \begin{align*}
        \TD(\tau_{\mathsf{real}}^{(b)},\tau_{\mathsf{ideal}}^{(b)})
        & \le \TD(P_b \ket{\widetilde{\phi}_b},P_b\widetilde{\Pi}_b \ket{\widetilde{\phi}_b})\\
        &= \max_{Q} \Tr\left(Q\left(P_b \ket{\widetilde{\phi}_b}\bra{\widetilde{\phi}_b} P_b-P_b  \widetilde{\Pi}_b\ket{\widetilde{\phi}_b}\bra{\widetilde{\phi}_b}\widetilde{\Pi}_b P_b\right)\right)\\         
        &~~~-\frac{1}{2}\left(\Tr(P_b \ket{\widetilde{\phi}_b}\bra{\widetilde{\phi}_b} P_b)-\Tr(P_b \widetilde{\Pi}_b \ket{\widetilde{\phi}_b} \bra{\widetilde{\phi}_b}\widetilde{\Pi}_b P_b)\right)\\
        &=\max_{Q} \left(\|QP_b \ket{\widetilde{\phi}_b}\|^2-\|QP_b \widetilde{\Pi}_b \ket{\widetilde{\phi}_b}\|^2\right) 
        -\frac{1}{2}\left(\|P_b \ket{\widetilde{\phi}_b}\|^2 - \|P_b \widetilde{\Pi}_b \ket{\widetilde{\phi}_b}\|^2 \right),
    \end{align*}
    where the maximum is taken over all projectors $Q$ on $(\qreg{W},\qreg{Aux'},\qreg{C},\qreg{R},\qreg{V},\qreg{E})$,  
    the first inequality follows from the monotonicity of the trace norm, 
    and the first equality follows from \Cref{lem:TD_subnormalized}. 
For any projector $Q$, 
we have
\begin{align*}
    \left|\|QP_b \ket{\widetilde{\phi}_b}\|^2-\|QP_b \widetilde{\Pi}_b \ket{\widetilde{\phi}_b}\|^2\right|
    &=\left|\|QP_b (\widetilde{\Pi}_b+(I-\widetilde{\Pi}_{b})) \ket{\widetilde{\phi}_b}\|^2-\|QP_b \widetilde{\Pi}_b \ket{\widetilde{\phi}_b}\|^2\right|\\
    &=|\bra{\widetilde{\phi}_b}(I-\widetilde{\Pi}_{b}) P_b QP_b (\widetilde{\Pi}_b+(I-\widetilde{\Pi}_{b})) \ket{\widetilde{\phi}_b}+
    \bra{\widetilde{\phi}_b}\widetilde{\Pi}_{b} P_b QP_b (I-\widetilde{\Pi}_{b}) \ket{\widetilde{\phi}_b}
    |\\
    &\le \|P_b (I-\widetilde{\Pi}_{b}) \ket{\widetilde{\phi}_b}\|\cdot \|QP_b (\widetilde{\Pi}_b+(I-\widetilde{\Pi}_{b})) \ket{\widetilde{\phi}_b}\|\\
    &~~~+\|Q P_b \widetilde{\Pi}_{b} \ket{\widetilde{\phi}_b}\|\cdot \|P_b (I-\widetilde{\Pi}_{b}) \ket{\widetilde{\phi}_b}\|\\
    &\le 2\|P_b (I-\widetilde{\Pi}_{b}) \ket{\widetilde{\phi}_b}\|=\negl(\secp) 
\end{align*}
where the final equality follows from \Cref{lem:bound_fail}. 
In particular, the above also holds for the case of $Q=I$. 
Combining the above, 
we have  $\TD(\tau_{\mathsf{real}}^{(b)},\tau_{\mathsf{ideal}}^{(b)})=\negl(\secp)$.\\

    \noindent\textbf{Third item.}
    \begin{align*}
    \TD(\tau_{\mathsf{real}}^{(\bot)},\tau_{\mathsf{ideal}}^{(\bot)})
    &=\TD\left(\Tr_{\qreg{Aux'},\qreg{C},\qreg{R},\qreg{V},\qreg{E}}((I-P_b) \ket{\widetilde{\phi}_b}\bra{\widetilde{\phi}_b}),\Tr_{\qreg{Aux'},\qreg{C},\qreg{R},\qreg{V},\qreg{E}}((I-N_b-N_\fail) \ket{\widetilde{\phi}_b}\bra{\widetilde{\phi}_b})\right)\\
    &=\TD\left(\Tr_{\qreg{Aux'},\qreg{C},\qreg{R},\qreg{V},\qreg{E}}(P_b \ket{\widetilde{\phi}_b}\bra{\widetilde{\phi}_b}),\Tr_{\qreg{Aux'},\qreg{C},\qreg{R},\qreg{V},\qreg{E}}((N_b+N_\fail) \ket{\widetilde{\phi}_b}\bra{\widetilde{\phi}_b})\right)\\
    &\le \TD\left(\Tr_{\qreg{Aux'},\qreg{C},\qreg{R},\qreg{V},\qreg{E}}(P_b \ket{\widetilde{\phi}_b}\bra{\widetilde{\phi}_b}),\Tr_{\qreg{Aux'},\qreg{C},\qreg{R},\qreg{V},\qreg{E}}(N_b \ket{\widetilde{\phi}_b}\bra{\widetilde{\phi}_b})\right)+\frac{1}{2}\Tr(N_{\fail} \ket{\widetilde{\phi}_b}\bra{\widetilde{\phi}_b})\\
    &=\negl(\secp)
    \end{align*}
    where we have used the triangle inequality in the third line, and the first and second items of this lemma in the final line. 
    This completes the proof of \Cref{lem:TD_bound}. 
\end{proof}

By \Cref{eq:rhoreal,eq:rhoideal}, 
we have 
\begin{align*}
    \TD( \rho_{\mathsf{real}}(b), \rho_{\mathsf{ideal}}(b))=\TD(\tau_{\mathsf{real}}^{(b)},\tau_{\mathsf{real}}^{(b)})+\TD(\tau_{\mathsf{real}}^{(\bot)},\tau_{\mathsf{ideal}}^{(\bot)})+\frac{1}{2}\Tr(\tau_{\mathsf{ideal}}^{(\fail)})=\negl(\secp)
\end{align*}
where the final equality follows from \Cref{lem:TD_bound}.
This implies statistical binding and completes the proof of \Cref{thm:QAI_hid_bin}.
\end{proof}

\subsection{Impossibility of Statistical Security}
It is easy to prove that achieving both statistical hiding and statistical (sum)-binding is impossible even in the quantum auxiliary-input setting  based on the similar impossibility result in the plain model~\cite{LoChau97,Mayers97}. 
\begin{theorem}
    There do not exist statistically hiding and statistically sum-binding quantum auxiliary-input commitments. 
\end{theorem}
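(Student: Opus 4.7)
The plan is to adapt the classic Lo--Chau/Mayers no-go argument. The crucial conceptual point is that, in contrast with the CRQS model, the quantum auxiliary input $\ket{\psi_\secp}$ is a \emph{fixed} pure state determined by $\secp$ alone, with no hidden randomness to which the committer lacks access. Consequently the protocol, viewed from the committer's side, is essentially a plain-model protocol that happens to start with the known pure state $\ket{\psi_\secp}$.

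First, I would purify the honest committer: without loss of generality, the commit phase for bit $b\in\bit$ consists of applying a unitary $U_b$ (depending only on $\secp$) to $\ket{\psi_\secp}_{\qreg{Aux}}\otimes\ket{0\cdots 0}_{\qreg{R},\qreg{C}}$, producing a pure state $\ket{\Phi_b}_{\qreg{Aux},\qreg{R},\qreg{C}}$; the committer then sends $\qreg{C}$ to the receiver. Let $\rho_b:=\Tr_{\qreg{Aux},\qreg{R}}(\ket{\Phi_b}\bra{\Phi_b})$ be the receiver's marginal commitment. Statistical hiding --- applied to the trivial unbounded-time adversary that simply performs the optimal distinguishing measurement on the commitment register --- forces $\TD(\rho_0,\rho_1)=\negl(\secp)$, hence $F(\rho_0,\rho_1)=1-\negl(\secp)$.

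Next, I would invoke Uhlmann's theorem. Since $\ket{\Phi_0}$ and $\ket{\Phi_1}$ are purifications of $\rho_0$ and $\rho_1$ respectively on the committer-held registers $(\qreg{Aux},\qreg{R})$, there exists a unitary $V$ acting solely on $(\qreg{Aux},\qreg{R})$ such that $\|(V\otimes I_{\qreg{C}})\ket{\Phi_0}-\ket{\Phi_1}\|=\negl(\secp)$. Notice that $V$ touches neither the commitment register $\qreg{C}$ (already in the receiver's hands) nor the receiver's own copy of $\ket{\psi_\secp}$. This is precisely the step where the analogous argument breaks down in the CRQS model, since there the committer cannot act on the hidden-key register held by the setup algorithm.

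Finally, I would construct a pair $(C^*_0,C^*_1)$ of unbounded-time malicious committers violating sum-binding. Both run the honest commit procedure for the bit $0$, producing $\ket{\Phi_0}$ and sending $\qreg{C}$; in particular their commit-phase behavior is identical, as required by \Cref{def:QAI_binding}. Then $C^*_0$ reveals $0$ honestly and is accepted with probability $1$ by correctness. Meanwhile, $C^*_1$ first applies $V$ to its own registers $(\qreg{Aux},\qreg{R})$, obtaining a state within $\negl(\secp)$ trace distance of $\ket{\Phi_1}$, and then runs the honest reveal procedure for the bit $1$; since the verifier's test is a projective measurement whose acceptance probability is continuous in the input state, the receiver accepts $b=1$ with probability $1-\negl(\secp)$. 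Hence $p_0+p_1\ge 2-\negl(\secp)$, contradicting sum-binding. The only subtlety is careful register bookkeeping to confirm that $V$ acts only on registers the committer physically holds, which is immediate from the statement of Uhlmann's theorem; there is no serious obstacle.
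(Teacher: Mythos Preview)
Your proposal is correct and follows essentially the same approach as the paper's proof sketch: both observe that since the quantum auxiliary input $\ket{\psi_\secp}$ is a fixed pure state held entirely by the committer, the Lo--Chau/Mayers argument applies directly via Uhlmann's theorem on the committer-held registers. Your version is simply more explicit in separating out the $\qreg{Aux}$ register and in carrying through the sum-binding computation $p_0+p_1\ge 2-\negl(\secp)$, whereas the paper absorbs the auxiliary input into $\qreg{R}$ and leaves the final step implicit.
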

\begin{proof}[Proof sketch] 
Given a quantum auxiliary-input commitment scheme, we can regard the classical description of the quantum auxiliary-input as a classical auxiliary-input if we allow the sender and receiver to be unbounded-time.
Since the impossibility of \cite{LoChau97,Mayers97} is applicable to classical auxiliary-input quantum commitments with unbounded-time sender and receiver, the theorem follows. 
More concretely, assume that the honest committer generates $\ket{\Psi(\psi_\secp,b)}_{\qreg{R},\qreg{C}}$ over two registers $\qreg{R}$ and $\qreg{C}$ to commit to $b\in\bit$
given the quantum auxiliary input $\ket{\psi_\secp}$.
Due to the statistical hiding, $\Tr_{\qreg{R}}(\ket{\Psi(\psi_\secp,0)}_{\qreg{R},\qreg{C}})$
should be negligibly close to
$\Tr_{\qreg{R}}(\ket{\Psi(\psi_\secp,1)}_{\qreg{R},\qreg{C}})$.
However, it means, from the Uhlmann's theorem, that
there exists a unitary $U$ on $\qreg{R}$ that maps
$\ket{\Psi(\psi_\secp,0)}_{\qreg{R},\qreg{C}}$ to a state that is negligibly close to
$\ket{\Psi(\psi_\secp,1)}_{\qreg{R},\qreg{C}}$, which breaks the statistical binding.
\end{proof}

Thus, it is necessary to relax either of hiding or binding to the computational one. 
In \Cref{thm:QAI_hid_bin}, we unconditionally construct a computationally hiding and statistically binding construction.
On the other hand, we do not know how to construct statistically hiding and computationally binding one without assuming any assumption. 
We remark that it is unclear if the flavor conversion theorems in the plain model~\cite{EC:CreLegSal01,AC:Yan22,EC:HhaMorYam23} work in the quantum auxiliary-input setting. 

\if0
\mor{Doesn't the following work?}\takashi{Can we prove the security (especially binding)?}\mor{Do you mean extractor-based one? Sum-binding seems to be satisfied.}\takashi{I'm not sure how we can prove even the sum-binding.
I guess we cannot rely on AAS, so probably give a direct proof?
}\mor{why we cannot use AAS?}\takashi{I guess AAS only guarantees honest binding.
Does honest binding imply sum-binding in the quantum auxiliary-input setting?
If that works, I think that's worth writing.
}
\begin{align}
   \ket{0}\ket{\psi_{\secp,0}}^{\otimes t}_{\regC,\regR_1}\ket{\psi_{\secp,1}}^{\otimes t}_{\regR_2} 
   +(-1)^b\ket{1}\ket{\psi_{\secp,1}}^{\otimes t}_{\regC,\regR_1}\ket{\psi_{\secp,0}}^{\otimes t}_{\regR_2} 
\end{align}
\fi

\if0
\subsection{Applications}
As we have constructed quantum auxiliary-input commitments, a natural next step is to use it to construct other cryptographic primitives. 
In the plain model, it is known that quantum commitments imply quantum oblivious transfers and multi-party computations~\cite{C:BCKM21b,C:AnaQiaYue22}. 
Unfortunately, it is unclear if the same construction works in the quantum auxiliary-input setting. 
Nonetheless, we believe that we can plug our quantum auxiliary-input commitments into Blum's protocol to unconditionally get quantum zero-knowledge proofs for $\mathsf{NP}$ with quantum non-uniform simulation in the\mor{nanka kiretoru} 
Here is the idea: 
For soundness, since our protocol satisfies the extractor-based binding as defined in \cref{def:QAI_extract}, essentially the same proof as the classical case works. 
For zero-knowledge, there is a subtlety that we cannot apply Watrous' rewinding lemma~\cite{SIAM:Watrous09} since the quantum auxiliary input is not efficiently generatable. 
However, if we allow the simulator to be non-uniform, i.e., it can take an arbitrary polynomial-size quantum auxiliary input, it can use arbitrarily many copies of the malicious verifier's initial state, which completely resolves the difficulty of reqinding of the quantum verifier, and thus a similar reasoning to the classical case works.   
\mor{I am not expert of ZK, but just in case sanitary check: if there exists statistical (quantum) ZK for NP, I think it means NP is in QSZK, but there is an oracle relative to which
QSZK does not contain UP.}
\takashi{We will only get computational ZK since our commitment only satisfies computational hiding.}
\mor{Oh, yes, sorry.}

\takashi{I'll write a formal description of this construction.}
\fi

\ifnum\submission=1
\subsection{Application to Zero-Knowledge Proofs}
To demonstrate the applicability of our quantum auxiliary-input commitments, in 
\ifnum\inclappndx=0
the full version of this paper~\cite{cryptoeprint:2023/1844}%
\else
\Cref{sec:zk}%
\fi
, we show an application to zero-knowledge proofs. We use quantum auxiliary-input commitments to instantiate Blum's Hamiltonicity protocol~\cite{Blu87}, yielding an unconditional construction of computational zero-knowledge proofs for $\mathbf{NP}$ with soundness error $1/2$ in the quantum auxiliary-input setting. See 
\ifnum\inclappndx=0
the full version~\cite{cryptoeprint:2023/1844}
\else
\Cref{sec:zk}
\fi
for details.
\fi

\ifnum\submission=0
    \section{Application to Zero-Knowledge Proofs}
\label{sec:zk}
To demonstrate the applicability of our quantum auxiliary-input commitments, we show an application to zero-knowledge proofs. Though it works for both $\mathbf{NP}$ and $\mathbf{QMA}$, we focus on the case of $\mathbf{NP}$ here for simplicity.

\subsection{Definition}
\paragraph{\bf Notation.}
For an $\NP$ language $\lang$ and $x\in \lang$, $\rel_{\lang}(x)$ is the set that consists of all (classical) witnesses $w$ such that the verification machine for $\lang$ accepts $(x,w)$.
For an interactive protocol between a prover $P$ and a verifier $V$, 
we denote by $\execution{\pro(x_{\pro})}{\ver(x_{\ver})}$ an execution of the protocol where $x_\pro$ is $\pro$'s input and $x_\ver$ is $\ver$'s input.
We denote by $\OUT_\ver\execution{\pro(x_{\pro})}{\ver(x_{\ver})}$ the final output of $\ver$ in the execution. 
An honest verifier's output is $\top$ indicating acceptance or $\bot$ indicating rejection, and a quantum malicious verifier's output may be an arbitrary quantum state.  

\begin{definition}[Quantum Auxiliary-Input Zero-Knowledge Proofs for $\NP$]
A quantum auxiliary-input zero-knowledge proof for an $\mathbf{NP}$ language $\lang$ is given by a tuple of QPT algorithms $P$ (a prover) and $V$ (a verifier) and a family $\{\ket{\psi_\secp}\}_{\secp\in \mathbb{N}}$ of $\poly(\secp)$-qubit states (referred to as quantum auxiliary inputs).  
$P$ and $V$ take a statement $x$ of $\lang$ as a common input and $P$ takes a witness $w$ for $x$ as its private input. 
In addition, each of $P$ and $V$ takes a copy of the quantum auxiliary input $\ket{\psi_{|x|}}$. 
Then $P$ and $V$ interact with each other through a quantum channel, and $V$ finally outputs $\top$ indicating acceptance or $\bot$ indicating rejection. 

We require the following properties:
\begin{itemize}
\item {\bf Perfect completeness.}
For any $x\in \lang$ and $w\in R_\lang(x)$, we have 
\begin{align*}
    \Pr[\OUT_V\execution{P(\ket{\psi_{|x|}},x,w)}{V(\ket{\psi_{|x|}},x)}=\top]=1.
\end{align*}
\item {\bf Statistical $s$-soundness.}
For any non-uniform unbounded-time cheating prover $\pro^*$, there exists a negligible function $\negl$ such that for any $\secp \in \mathbb{N}$ and any $x\in \bit^\secp\setminus \lang $, we have  
\begin{align*}
    \Pr[\OUT_\ver\execution{\pro^*(\ket{\psi_\secp},x)}{\ver(\ket{\psi_{\secp}},x)}=\top]\leq s(\secp)+\negl(\secp).
\end{align*}
\item {\bf Non-uniform computational quantum zero-knowledge.}
For any non-uniform QPT malicious verifier $V^*=\{V^*_\secp,\rho_\secp\}_{\secp\in \mathbb{N}}$, there is a non-uniform QPT simulator $\mathsf{Sim}=\{\mathsf{Sim}_\secp,\sigma_\secp\}_{\secp\in \mathbb{N}}$ and a negligible function $\negl$ such that for any $\secp\in \mathbb{N}$, $x\in \lang \cup \bit^\secp$, $w\in R_\lang(x)$, and non-uniform QPT distinguisher $\cA=\{\cA_\secp, \tau_\secp\}_{\secp\in \mathbb{N}}$, it holds that
\begin{align*}
    |\Pr[\cA_\secp(\tau_\secp,\OUT_{\ver^*_\secp}\execution{\pro(\ket{\psi_\secp},x,w)}{\ver^*_\secp(\rho_\secp,x)})=1]-
    \Pr[\cA_\secp(\tau_\secp,\mathsf{Sim}_\secp(\sigma_\secp,x))=1]|
    \le \negl(\secp). 
\end{align*}
\end{itemize}
\end{definition}

\begin{remark}\label{rem:sequential}
Unlike the standard definition of zero-knowledge, the above definition allows the simulator to take quantum advice that may depend on the malicious verifier's advice. In particular, this is crucial in our security proof since the simulator's quantum advice contains many copies of the malicious verifier's advice. A caveat of this definition is that it is not closed under polynomially many sequential compositions because the size of simulator's quantum advice may exponentially blow up. But it still ensures a meaningful notion of security. For example, a simple hybrid argument shows that it implies witness indistinguishability defined below:

\noindent\textbf{Computational witness indistinguishability.}
 For any $\secp\in \mathbb{N}$, $x\in \lang \cup \bit^\secp$, two witnesses $w_0,w_1\in R_\lang(x)$, and 
 non-uniform QPT malicious verifier $\cV=\{\cV_\secp, \rho_\secp\}_{\secp\in \mathbb{N}}$, it holds that 
\begin{align*}
    |\Pr[\OUT_{\ver^*_\secp}\execution{\pro(\ket{\psi_\secp},x,w_0)}{\ver^*_\secp(\rho_\secp,x)}=1]-
   \Pr[\OUT_{\ver^*_\secp}\execution{\pro(\ket{\psi_\secp},x,w_1)}{\ver^*_\secp(\rho_\secp,x)}=1]|
    \le \negl(\secp). 
\end{align*}
\end{remark}

\subsection{Construction}
We simply plug our quantum auxiliary-input commitment scheme into Blum's Hamiltonicity protocol~\cite{Blu87}. 
Here, we briefly recall basic notations for graphs and the graph Hamiltonicity problem.
A graph $G$ with vertices $W$ is represented as a set of its edges, i.e., for $(i,j)\in W^2$, $(i,j)\in G$ if and only if $G$ has an edge between vertices $i$ and $j$. 
An instance of the graph Hamiltonicity problem is an (undirected) graph $G$. 
A graph $G$ is an Yes instance (i.e., belongs to the corresponding language) if it has a Hamiltonian cycle, i.e., a path that goes through all the vertices of $G$ exactly once. The Hamiltonian cycle is the corresponding witness. 

We describe the protocol below:
Let $\Pi_{\mathsf{com}}$ be a quantum auxiliary-input commitment scheme.
Let $\{\ket{\psi_{\secp}}\}_{\secp\in \mathbb{N}}$ be the family of quantum auxiliary input for  $\Pi_{\mathsf{com}}$. 
Let $G$ be a graph (instance) with vertices $W$ and $H$ be its Hamiltonian cycle (witness). 
\begin{itemize}
\item {\bf Inputs.} 
$P$ takes $\ket{\psi_{\secp}}^{\otimes p(\secp)}$, $G$, and $H$ as input.
$V$ takes $\ket{\psi_{\secp}}^{\otimes p(\secp)}$ and $G$ as input.
Here, $\secp$ denotes the bit-length of the description of $G$, and $p(\secp)$ is the number of total bits that are committed by $P$ throughout the protocol. 
\item {\bf First round.}
$P$ picks a random permutation $\pi$ on $W$. 
Then $P$ commits to the adjency matrix of $\pi(G):=\{(\pi(i),\pi(j)):(i,j)\in G\}$ using $\Pi_{\mathsf{com}}$ in a bit-wise manner and sends the commitment registers to $V$
\item {\bf Second round.}
$V$ sends a random bit $c\in \bit$ to $P$.  
\item {\bf Third round.}
\begin{itemize}
\item If $c=0$, then $P$ sends $\pi$ and the reveal registers of the all commitments. 
\item If $c=1$, then $P$ sends the reveal registers to the commitments that correspond to $\pi(H):=\{(\pi(i),\pi(j)):(i,j)\in H\}$. 
\end{itemize}
\item {\bf Decision.}
\begin{itemize}
\item If $c=0$, $V$ verifies all the revealed commitments. 
If any of verification fails, $V$ outputs $\bot$.
Otherwise, $V$ recovers a graph $G'$ from the revealed bits. If $G'=\pi(G)$, then $V$ outputs $\top$ and otherwise outputs $\bot$.  
\item If $c=1$, $V$ verifies all the revealed commitments. If any of verification fails, $V$ outputs $\bot$.
Otherwise, $V$ recovers a graph $H'$  from the revealed bits. If $H$ is a Hamiltonian cycle, $V$ outputs $\bot$.
\end{itemize}
\end{itemize}

\begin{theorem}\label{thm:ZK}
The above protocol satisfies perfect completeness, $1/2$-soundness, and quantum auxiliary-input computational zero-knowledge.
\end{theorem}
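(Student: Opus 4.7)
My plan is to handle the three properties separately, in order of increasing difficulty.

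For \emph{perfect completeness}, the argument will be immediate from perfect correctness of $\Pi_{\mathsf{com}}$: in an honest execution every commitment opens correctly, so $V$ reconstructs $\pi(G)$ when $c=0$ and the cycle $\pi(H)$ of $\pi(G)$ when $c=1$. Nothing quantum-specific is needed.

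For \emph{statistical $1/2$-soundness}, the key tool will be the statistical extractor-based binding of $\Pi_{\mathsf{com}}$ from \Cref{thm:QAI_hid_bin}, applied bitwise to the $p(\secp)$ first-round commitments. I would define an ``ideal soundness experiment'' in which an extractor $\cE$ is run in parallel on all commitment registers right after the first round, yielding a classical adjacency matrix and hence a graph $G^*$; a standard hybrid over the $p(\secp)$ bits shows that the interaction remains statistically close to the real one and that whenever $V$ accepts a revealed bit $b$ the extracted bit equals $b$. Once in this ideal world with a fixed $G^*$, the classical Blum argument takes over: for any $G\notin\lang$, either $G^*=\pi'(G)$ for the $\pi'$ revealed on $c=0$ (so $G^*$ has no Hamiltonian cycle and $V$ must reject when $c=1$), or $G^*\neq\pi'(G)$ (so $V$ must reject when $c=0$). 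Since $c$ is uniform and independent of the commit phase, $V$ accepts with probability at most $1/2+\negl(\secp)$.

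For \emph{non-uniform computational zero-knowledge}, I would build a non-uniform simulator $\mathsf{Sim}_\secp$ whose quantum advice $\sigma_\secp$ contains polynomially many fresh copies of both the auxiliary input $\ket{\psi_\secp}$ and the verifier's advice $\rho_\secp$. On input $G$, it performs up to $\secp$ independent trials: sample $c'\leftarrow\bit$; if $c'=0$ commit bitwise to $\pi(G)$ for a fresh random permutation $\pi$, and if $c'=1$ commit bitwise to the adjacency matrix of a random Hamiltonian cycle on $W$; feed the first message to $\ver^*_\secp$ running on a fresh copy of $\rho_\secp$; if the returned challenge $c$ equals $c'$ complete the third round honestly and output the verifier's residual state, otherwise discard the trial and restart with fresh copies. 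Indistinguishability from the real view will then follow from a standard hybrid: (i) bitwise computational hiding of $\Pi_{\mathsf{com}}$ implies that the $c'=1$-branch commitment is indistinguishable from a commitment to $\pi(G)$, so $\ver^*_\secp$'s challenge distribution is $\negl(\secp)$-close to its distribution in the real protocol; (ii) conditioned on $c=c'$ and on committing to $\pi(G)$ in both branches, the simulated transcript and post-challenge state are distributed exactly as in an honest execution; (iii) the trials are mutually independent because each uses fresh copies of $\rho_\secp$ and $\ket{\psi_\secp}$, so the simulator aborts with probability at most $2^{-\secp}+\negl(\secp)$. This ``fresh copy per trial'' feature is precisely what lets the argument bypass Watrous rewinding, which is not available when $\ket{\psi_\secp}$ is not efficiently generatable.

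The hard part will be step (i): I need hiding to hold \emph{jointly} across the $p(\secp)$ commitment bits and, more crucially, to imply indistinguishability of the verifier's full post-challenge \emph{quantum state} rather than merely its acceptance bit. The plan is to massage the hybrid so that each hop swaps a single bit's commitment while the intermediate distinguisher simulates the remaining commitments and runs $\ver^*_\secp$ as polynomial-time post-processing; the bitwise hiding guarantee of \Cref{def:QAI_hiding} is preserved under such post-processing by a non-uniform QPT adversary (which may absorb additional copies of $\ket{\psi_\secp}$ and $\rho_\secp$ into its own advice). Getting this reduction to respect the quantum auxiliary-input setting, rather than silently assuming the reduction can generate $\ket{\psi_\secp}$, is the only real subtlety.
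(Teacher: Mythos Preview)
Your proposal is correct and follows essentially the same approach as the paper: the soundness argument via the extractor-based binding to pin down a classical graph $G^*$ and then run the classical Blum case analysis, and the zero-knowledge simulator that stores $\secp$ fresh copies of both $\rho_\secp$ and the commitment auxiliary input, guesses the challenge, and restarts on a miss, are exactly what the paper does. Your identification of the only genuine subtlety---that the bitwise hiding reduction must absorb the extra copies of $\ket{\psi_\secp}$ and $\rho_\secp$ into its own non-uniform quantum advice rather than generate them---is spot on and is precisely the point the paper's non-uniform simulation model is designed to accommodate.
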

\begin{remark}
Unfortunately, we do not know how to reduce the soundness error while keeping quantum auxiliary-input computational zero-knowledge since it is not preserved even under sequential composition as remarked in \Cref{rem:sequential}. 
On the other hand, since it implies witness indistinguishability that is preserved under parallel composition, we can obtain a protocol that satisfies perfect completeness, $\negl(\secp)$-soundness, and computational witness indistinguishability. 
\end{remark}
\begin{proof}[Proof of \cref{thm:ZK}]
The perfect completeness is clear from the description of the protocol. Below, we prove $1/2$-soundness and quantum auxiliary-input computational zero-knowledge.

\paragraph{\bf Soundness.}
Suppose that a graph $G$ does not have a Hamiltonian cycle.  
Suppose that there is a non-uniform unbounded prover $P^*$ that makes $V$ accept with $1/2+\mathsf{non}\text{-}\mathsf{negl}(\secp)$ for a false instance $G$ that does not have a Hamiltonian cycle where  $\mathsf{non}\text{-}\mathsf{negl}(\secp)$ is some non-negligible function.  
Let $\tilde{V}$ be a non-uniform unbounded-time verifier that works similarly to $V$ except for the following differences: 
\begin{itemize}
    \item It runs the extractor of $\Pi_{\mathsf{com}}$ after receiving the commitments in the first round, and recovers a graph $\tilde{G}$ 
    from the extracted bits. 
    \item If the revealed bits in the third round is inconsistent to $\tilde{G}$, then it immediately rejects. 
\end{itemize}
By the statistical extractor-based binding of $\Pi_{\mathsf{com}}$ and a standard hybrid argument, we can see that 
$P^*$ also makes $\tilde{V}$ accept with $1/2+\mathsf{non}\text{-}\mathsf{negl}(\secp)$. However, we can see that the probability must be at most $1/2$ because 
if $P^*$ can pass the verification for $c=0$, then $\tilde{G}$ is isomorphic to $G$, in which case $\tilde{G}$ does not have a Hamiltonian cycle and thus $P\*$ cannot pass the verification for $c=1$. 
Thus, the protocol satisfies statistical $1/2$-soundness. 

\paragraph{\bf Zero-knowledge.} 
Let $V^*=\{V^*_\secp,\rho_\secp\}_{\secp\in \mathbb{N}}$ be a non-uniform QPT malicious verifier. 
Then we construct a simulator $\mathsf{Sim}=\{\mathsf{Sim}_\secp,\sigma_\secp\}_{\secp\in \mathbb{N}}$ as follows;
\begin{description}
\item[$\mathsf{Sim}_\secp(\sigma_\secp,G)$:]
The quantum auxiliary input $\sigma_\secp$ is defined as $\rho_\secp^{\otimes \secp}\otimes (\ket{\psi_\secp}\bra{\psi_\secp})^{\otimes \secp}$. 

Repeat the following $\secp$ times: 
\begin{enumerate}
    \item Randomly pick $\tilde{c}\gets \bit$. 
    \item If $\tilde{c}=0$, do the following:
    \begin{enumerate}
    \item Randomly pick a permutation $\pi$ on $W$. 
    \item Commit to the adjacency matrix of $\pi(G):=\{(\pi(i),\pi(j)):(i,j)\in G\}$ using $\Pi_{\mathsf{com}}$ in a bit-wise manner. 
    \item Invoke the second round message generation by $V^*_\secp$ by giving the quantum auxiliary input $\rho_\secp$ and the commitments generated in the previous step as the first round message.
    Let $c$ be the second round message sent from $V^*_\secp$.
    \item If $c=1$, skip the rest of steps in the loop. 
    \item If $c=0$, send the reveal registers corresponding to the commitments sent in the first round and the permutation $\pi$ to $V^*_\secp$. 
    \item Run the post-processing of $V^*_\secp$, output what it outputs, and immediately halt.
    \end{enumerate}
    \item If $\tilde{c}=1$, do the following:
     \begin{enumerate}
    \item Generate a random matrix $\tilde{G}$ that has a Hamiltonian cycle.  
    \item Commit to the adjacency matrix of $\tilde{G}:$ using $\Pi_{\mathsf{com}}$ in a bit-wise manner. 
    \item Invoke the second round message generation by $V^*_\secp$ by giving the quantum auxiliary input $\rho_\secp$ and the commitments generated in the previous step as the first round message.
    Let $c$ be the second round message sent from $V^*_\secp$.
    \item If $c=0$, skip the rest of steps in the loop. 
    \item If $c=1$, send the reveal registers corresponding to the commitments corresponding to the Hamiltonian cycle of $\tilde{G}$ 
    sent in the first round to $V^*_\secp$. 
    \item Run the post-processing of $V^*_\secp$, output what it outputs, and immediately halt.
    \end{enumerate}
\end{enumerate}
If it does not halts within $\secp$ times repetitions, abort by outputting a failure symbol $\fail$. 
\end{description}

By the computational hiding of $\mathsf{com}$, $\tilde{c}$ looks uniformly random from the view of the malicous verifier $V^*_\secp$, and thus the probability that it halts in each loop is $1/2\pm \negl(\secp)$. Thus, the probability that it does not halt within $\secp$ times repetitions is $\negl(\secp)$.
Moreover, by the computational hiding of $\mathsf{com}$, it is immediate to see that the output of $\mathsf{Sim}$ conditioned on that it does not fail is computationally indistinguishable from the output of  $V^*_\secp$ in the real experiment against non-uniform QPT distinguishers. 
This implies that the protocol satisfies non-uniform computational quantum zero-knowledge. 
\end{proof}

\if0
\subsection{Other Protocols}
Similarly to the above construction, we can use our quantum auxiliary-input commitments to instantiate 3-coloring protocol of \cite{SICOMP:GolMicRac89} and the quantum $\Sigma$-protocol for $\mathbf{QMA}$ of \cite{SIAM:BG22}. 
In particular, the latter yields a quantum auxiliary-input zero-knowledge proofs for $\mathbf{QMA}$ without assuming any assumption. 
\fi 
\fi
\section{Commitments in the Common Reference Quantum State Model}\label{sec:com}
In this section, we introduce quantum commitments in the common reference quanutm state (CRQS) model and construct them unconditionally. 
Unlike the quantum auxiliary-input setting, our construction in the CRQS model satisfies both statistical hiding and statistical binding as long as the number of copies of the CRQS given to 
the malicious receiver is bounded. 

\subsection{Definition}
\begin{definition}[Quantum commitments in the CRQS model]
\label{def:CRQS}
A (non-interactive) quantum commitment scheme in the CRQS model is given by a tuple of the setup algorithm $\setup$, committer $C$, and receiver $R$, all of which are uniform QPT algorithms. 
The scheme is divided into three phases, the setup phase, commit phase, and reveal phase as follows:   
\begin{itemize}
    \item 
    {\bf Setup phase:}
    $\setup$ takes $1^\secp$ as input, uniformly samples a classical key $k\la \mathcal{K}_{\secp}$, generates two copies of the same pure state $\ket{\psi_k}$ and sends one copy each to $C$ and $R$.  
    \item
{\bf Commit phase:} 
$C$ takes $\ket{\psi_k}$ given by the setup algorithm and a bit $b\in \bit$ to commit as input, generates a quantum state on registers $\qreg{C}$ and $\qreg{R}$, and 
sends the register $\qreg{C}$ to $R$. 
\item
{\bf Reveal phase:}
$C$ sends $b$ and the register $\qreg{R}$ to $R$.
$R$ takes $\ket{\psi_k}$ given by the setup algorithm and $(b,\qreg{C},\qreg{R})$ given by $C$ as input, and outputs $b$ if it accepts and otherwise outputs $\bot$. 
\end{itemize}
As correctness, we require that $R$ accepts with probability $1$ if the protocol is run honestly.
\end{definition}
Below, we define security of commitments in the CRQS model. 
For the definition of hiding, even though the honest receiver takes only one copy of the CRQS, we consider hiding against adversaries that take many copies of the CRQS. 
This is to capture the scenario where an authority distributes many copies of the CRQS and the adversary collects some of them.  
\begin{definition}[$t$-copy statistical hiding]\label{def:hiding}
A quantum commitment scheme $(\setup,C,R)$ in the CQRS model satisfies $t$-copy statistical hiding if for any non-uniform unbounded-time algorithm $\A$, 
\begin{align*}
\left|
    \Pr[1\gets\A(1^\secp,\ket{\psi_k}^{\otimes t},\Tr_{\qreg{R}}(\sigma_{\qreg{C},\qreg{R}})):k\la \mathcal{K}_\secp,\sigma_{\qreg{C},\qreg{R}} \la C_{\mathsf{com}}(\ket{\psi_k},0)]
    -\right.\\
    \left.
    \Pr[1\gets\A(1^\secp,\ket{\psi_k}^{\otimes t},\Tr_{\qreg{R}}(\sigma_{\qreg{C},\qreg{R}})):k\la \mathcal{K}_\secp,\sigma_{\qreg{C},\qreg{R}} \la C_{\mathsf{com}}(\ket{\psi_k},1)]
    \right|\le \negl(\secp)
\end{align*}
where $C_{\mathsf{com}}$ is the commit phase of $C$. 
\end{definition}

Similarly to the quantum auxiliary-input setting, we define two notions of binding, sum-binding and extractor-based binding. 
We stress that we require those to hold against adversaries who know the classical key $k$ instead of having copies of $\ket{\psi_k}$ unlike the hiding property. This only makes the security stronger. 

\begin{definition}[Statistical sum-binding]\label{def:binding}
A quantum commitment scheme $(\setup,C,R)$ in the CQRS model satisfies statistical sum-binding if the following holds. For any pair of non-uniform unbounded-time malicious committers $C^*_0$ and $C^*_1$ that take the classical key $k$, which is sampled by the setup algorithm, as input and work in the same way in the commit phase, if we let $p_b$ to be the probability that $R$ accepts the revealed bit $b$ in the interaction with $C^*_b$ for $b\in \bit$, then we have 
\begin{align*}
p_0+p_1\le 1+\negl(\secp).
\end{align*}
\end{definition}


\begin{definition}[Extractor-based binding]\label{def:CRQS_extract}
A quantum commitment scheme $(\setup,C,R)$ in the CQRS model satisfies statistical (computational) extractor-based binding if there is a (uniform) unbounded-time algorithm $\mathcal{E}$ (called the extractor) such that for any non-uniform unbounded-time (resp. QPT) malicious committer $C^*$, 
$\mathsf{Real}_\secp^{C^*}$ and $\mathsf{Ideal}_\secp^{C^*,\mathcal{E}}$ are indistinguishable against non-uniform unbounded-time (resp. QPT) distinguishers 
where the experiments $\mathsf{Real}_\secp^{C^*}$ and   $\mathsf{Ideal}_\secp^{C^*,\mathcal{E}}$ are defined as follows.
\begin{itemize}
\item $\mathsf{Real}_\secp^{C^*}$: 
The experiment chooses $k\gets \mathcal{K}_{\secp}$ and sends $k$ and $\ket{\psi_k}$ to $C^*$ and $R$, respectively. 
The malicious committer 
$C^*$ interacts with the honest receiver $R$ in the commit and reveal phases. 
Let $b\in \{0,1,\bot\}$ be the output of $R$ and $\tau_{C^*}$ be the final state of $C^*$. The experiment outputs a tuple $(\tau_{C^*},b)$. 
\item $\mathsf{Ideal}_\secp^{C^*,\mathcal{E}}$: 
The experiment chooses $k\gets \mathcal{K}_{\secp}$ and sends $k$ and $\ket{\psi_k}$ to $C^*$ and $R$, respectively. 
The experiment also sends $k$ to the extractor $\mathcal{E}$  
The malicious committer $C^*$ runs its commit phase to generate a commitment $\sigma_{\qreg{C}}$ where $C^*$ may keep a state that is entangled with $\sigma_{\qreg{C}}$. 
$\mathcal{E}$ takes the register $\qreg{C}$, outputs an extracted bit $b^*\in \bit$, and sends a post-execution state on $\qreg{C}$ (that may be different from the original one) to $R$ as a commitment.  
Then $C^*$ and $R$ run the reveal phase.
Let $b$ be the output of $R$ and $\tau_{C^*}$ be the final state of $C^*$. 
If $b\notin \{\bot,b^*\}$, then the experiment outputs a special symbol $\fail$ and otherwise outputs a tuple $(\tau_{C^*},b)$. 
\end{itemize}
\end{definition}
\begin{remark}
Differently from the definition for quantum auxiliary-input commitments, we require the extractor to be \emph{uniform} unbounded-time algorithm that takes the classical key $k$ as input.
This is useful for constructing uniform simulators in the applications.   
\end{remark}

\begin{lemma}
Statistical (resp. computational) extractor-based binding implies statistical (resp. computational) sum-binding. 
\end{lemma}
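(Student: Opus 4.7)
The plan is to adapt the proof of \Cref{lem:QAI_ext_to_sum} to the CRQS setting; the role played in that proof by the shared quantum auxiliary input is now played by the setup key $k$ and its associated state $\ket{\psi_k}$. First I would suppose for contradiction that there exist non-uniform unbounded-time (resp.\ QPT) malicious committers $C^*_0$ and $C^*_1$ agreeing on their commit phase and violating sum-binding, so that
\[
\Pr[b_0=0:(\tau_{C^*_0},b_0)\gets\mathsf{Real}_\secp^{C^*_0}] + \Pr[b_1=1:(\tau_{C^*_1},b_1)\gets\mathsf{Real}_\secp^{C^*_1}] - 1
\]
is non-negligible. Invoking statistical (resp.\ computational) extractor-based binding with the trivial distinguisher that checks whether the revealed bit equals its target, I would conclude that the corresponding sum in the \emph{ideal} experiment is also non-negligible.

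Second, I would argue that the ideal-world sum is at most $1$. The key observation is that $\mathcal{E}$ sees only $k$ (sampled by $\setup$) and the commitment register $\qreg{C}$ (produced by the common commit phase), so the extracted bit $b^*\in\bit$ has the same joint distribution with $(k,\qreg{C})$ regardless of whether we later run $C^*_0$'s or $C^*_1$'s reveal phase. By the definition of $\mathsf{Ideal}$, if the revealed bit $b$ disagrees with $b^*$ the experiment outputs $\fail$, hence conditioned on $b^*=0$ one has $\Pr[b_1=1\mid b^*=0]=0$ in $\mathsf{Ideal}_\secp^{C^*_1,\mathcal{E}}$, and symmetrically conditioned on $b^*=1$ one has $\Pr[b_0=0\mid b^*=1]=0$ in $\mathsf{Ideal}_\secp^{C^*_0,\mathcal{E}}$. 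Averaging over $b^*\in\bit$ therefore yields
\[
\Pr[b_0=0:\mathsf{Ideal}_\secp^{C^*_0,\mathcal{E}}] + \Pr[b_1=1:\mathsf{Ideal}_\secp^{C^*_1,\mathcal{E}}] \le \Pr[b^*=0]+\Pr[b^*=1] = 1,
\]
contradicting the non-negligible gap from the previous step.

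The only point I would verify carefully is the coupling argument used in the second step: because $C^*_0$ and $C^*_1$ share their commit phase, I can picture one master experiment that samples $k\gets\cK_\secp$ and the commit-phase state once, and only then branches into the two reveal phases. Under this coupling the random variable $b^*$ is well-defined across both branches, which is exactly what is needed to split the sum on $b^*$ as above. Beyond this sanity check the argument is essentially mechanical bookkeeping, so I do not anticipate any further obstacle beyond what already appears in \Cref{lem:QAI_ext_to_sum}.
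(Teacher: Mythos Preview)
Your proposal is correct and follows essentially the same approach as the paper, which simply defers to \Cref{lem:QAI_ext_to_sum} with the remark that the proof carries over verbatim to the CRQS setting. Your added care about the coupling of $b^*$ across the two ideal experiments makes explicit what the paper's proof of \Cref{lem:QAI_ext_to_sum} only states in one line (``the extracted bit $b^*$ only depends on the commit phase''), but the underlying argument is identical.
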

Since the above lemma can be proven similarly to that in the quantum auxiliary-input setting (\Cref{lem:QAI_ext_to_sum}), we omit the proof.

\if0
It has been shown that quantum commitments that satisfy (even computational versions of) the above security definitions are sufficient for cryptographic applications including zero-knowledge proofs, oblivious transfers, and general MPC in the plain model (where there is no CRQS).  
However, it is unclear if this holds in the CRQS model. On the other hand, we observe that if we assume a natural analogs of stronger properties called equivocality and extractability, then these cryptographic applications work. Thus, we introduce them below.
These definitions are defined similarly to those in BCKM20. 

\begin{definition}[$t$-copy statistical equivocality]
A commitment scheme $(\setup,C,R)$ in the CQRS model satisfies $t$-copy statistical equivocality if there is a uniform QPT algorithm $\mathcal{Q}$ (called the equivocator) such that for any non-uniform unbounded-time malicious receiver $R^*$ and $b\in \bit$, 
\begin{align*}
\TD(\mathsf{Real}_\secp^{b,R^*,t}, \mathsf{Ideal}_\secp^{b,R^*,\mathcal{Q},t})\le \negl(\secp)
\end{align*}
where the experiments $\mathsf{Real}_\secp^{R^*,t}$ and   $\mathsf{Ideal}_\secp^{R^*,\mathcal{Q},t}$ are defined as follows.
\begin{itemize}
\item $\mathsf{Real}_\secp^{b,R^*,t}$: The experiment chooses $k\gets \mathcal{K}_\secp$. It 
sends
 $\ket{\psi_k}$ to $C$, and
$\ket{\psi_k}^{\otimes t}$ to $R^*$. Then the honest committer $C$ that commits to the bit $b$ and $R^*$ run the commit and reveal phase. Let $\tau_{R^*}$ be the final state of $R^*$. The experiment outputs $\tau_{R^*}$. 
\item $\mathsf{Ideal}_\secp^{b,R^*,\mathcal{Q},t}$: 
The equivocator $\mathcal{Q}$ takes $1^\secp$ and $1^t$ as input and sends $(\rho,\sigma_{\qreg{C}})$ to $R$ where $\mathcal{Q}$ can keep states that are entangled with them. 
$R$ interprets $\rho$ and  $\sigma_{\qreg{C}}$ as $t$ copies of the CRQS and the commitment, respectively. 
Then $\mathcal{Q}$ is given $b$ and sends a state in the register $\qreg{R}$ as committer's message in the reveal phase.  Let $\tau_{R^*}$ be the final state of $R^*$. The experiment outputs $\tau_{R^*}$. 
\end{itemize}
\end{definition}
 
\begin{definition}[$t$-copy statistical extractability]
A commitment scheme $(\setup,C,R)$ in the CQRS model satisfies $t$-copy statistical extractability if there is a uniform QPT algorithm $\mathcal{E}$ (called the extractor) such that for any non-uniform unbounded-time malicious committer $C^*$, 
\begin{align*}
\TD(\mathsf{Real}_\secp^{C^*,t}, \mathsf{Ideal}_\secp^{C^*,\mathcal{E},t})\le \negl(\secp)
\end{align*}
where the experiments $\mathsf{Real}_\secp^{C^*,t}$ and   $\mathsf{Ideal}_\secp^{C^*,\mathcal{E},t}$ are defined as follows.
\begin{itemize}
\item $\mathsf{Real}_\secp^{C^*,t}$: The experiment chooses $k\gets \mathcal{K}_\secp$ and sends $\ket{\psi_k}^{\otimes t}$ to $C^*$. 
\mor{$\ket{\psi_k}$ is not sent to R?} Then $C^*$ and the honest receiver $R$ run the commit and reveal phase. Let $b$ be the output of $R$ and $\tau_{C^*}$ be the final state of $C^*$. The experiment outputs a tuple $(\tau_{C^*},b)$. 
\item $\mathsf{Ideal}_\secp^{C^*,\mathcal{E},t}$: The experiment chooses $k\gets \mathcal{K}_\secp$ and sends $\ket{\psi_k}^{\otimes t}$ to $C^*$. 
Then $C^*$ runs its commit phase to generate a commitment $\sigma_{\qreg{C}}$ where $C^*$ may keep a state that is entangled with $\sigma_{\qreg{C}}$. 
$\mathcal{E}$ takes $k$ and the register $\qreg{C}$, outputs an extracted bit $b^*\in \bit$, and sends a post-execution state on $\qreg{C}$ (that may be different from $\sigma_{\qreg{C}}$) to $R$ as a commitment.  
Then $C^*$ and $R$ run the reveal phase.
Let $b$ be the output of $R$ and $\tau_{C^*}$ be the final state of $C^*$. 
If $b\notin \{\bot,b^*\}$, then the experiment outputs a special symbol $\mathsf{FAIL}$ and otherwise outputs a tuple $(\tau_{C^*},b)$. 
\end{itemize}
\end{definition}

It is easy to show the following lemmas.
\begin{lemma}
$1$-copy statistical equivocality implies the statistical hiding property.
\end{lemma}
\begin{lemma}
$1$-copy statistical extractability implies the statistical (sum-)binding property.
\end{lemma}
\fi

\subsection{Construction}\label{sec:construction_CRQS}
We construct a commitment scheme in the CRQS model as follows.
Let 
$m:=\secp$\footnote{In fact, it suffices to set $m=\omega(\log \secp)$.} and
$\mathcal{H}_\secp\coloneqq\{H_k:\bit^\secp \ra \bit^{2\secp}\}_{k\in \mathcal{K}_\secp}$ be a family of $2m(t+1)$-wise independent functions.  
Then our construction is described below. 

\begin{itemize}
    \item 
    {\bf Setup phase:}
    $\setup$ takes $1^\secp$ as input and does the following. 
    It samples $k\la \mathcal{K}_{\secp}$ and generates $2m$ copies of the following states:\footnote{The state $\ket{\psi_{k,1}}$ does not depend on $k$, but we use this notation for convenience of the presentation.} 
    \begin{align*}
    \ket{\psi_{k,0}}\coloneqq\frac{1}{2^{\secp/2}}\sum_{x\in \bit^\secp}\ket{H_{k}(x)}\ket{x\concat 0^\secp},~~~\ket{\psi_{k,1}}\coloneqq\frac{1}{2^{\secp}}\sum_{y\in \bit^{2\secp}}\ket{y}\ket{y}.
    \end{align*}  
    Then it defines $\ket{\psi_k}\coloneqq\ket{\psi_{k,0}}^{\otimes m}\otimes \ket{\psi_{k,1}}^{\otimes m}$.  
    A single copy of $\ket{\psi_k}$ is sent to $C$, and a single copy of $\ket{\psi_k}$ is sent to $R$. 
    \item
{\bf Commit phase:} 
$C$ takes quantum auxiliary input $\ket{\psi_k}=\ket{\psi_{k,0}}^{\otimes m}\otimes \ket{\psi_{k,1}}^{\otimes m}$ and a bit $b\in \bit$ to commit as input. 
For each $i\in [m]$, let $\qreg{X_{i}}$ and $\qreg{Y_{i}}$ be the first and second registers of the $i$-th copy of $\ket{\psi_{k,b}}$, respectively. 
Set $\qreg{C}:=(\qreg{X_{1}},...,\qreg{X_{m}})$ and  $\qreg{R}:=(\qreg{Y_{1}},...,\qreg{Y_{m}})$
and sends the register $\qreg{C}$ to $R$.  
\item
{\bf Reveal phase:}
$C$ sends $b$ and the register $\qreg{R}:=(\qreg{Y_{1}},...,\qreg{Y_{m}})$ to $R$.
For each $i\in [m]$, let $\qreg{X'_{i}}$ and $\qreg{Y'_{i}}$ be the first and second registers of the $i$-th copy of $\ket{\psi_{k,b}}$, respectively. 
    For each $i\in[m]$, 
     $R$ runs the SWAP test between registers 
     $(\qreg{X_{i}},\qreg{Y_{i}})$ and 
     $(\qreg{X'_{i}},\qreg{Y'_{i}})$. 
      If all of the tests accept, $R$ accepts by outputting $b$ and otherwise rejects by outputting $\bot$.
\end{itemize}
If the protocol is run honestly, the registers $(\qreg{X_{i}},\qreg{Y_{i}})$ and 
     $(\qreg{X'_{i}},\qreg{Y'_{i}})$ take exactly the same state $\ket{\psi_{k,b}}$. Thus, $R$ accepts with probability $1$ and thus correctness holds. 

\if0
\begin{itemize}
    \item 
    {\bf Setup phase:}
    $\setup$ takes $1^\secp$ as input and does the following. 
    It samples $k_i\la \mathcal{K}_{\secp}$ for each $i\in[m]$.
    It then generates two copies of $\ket{\psi_k}$,
    where 
$\ket{\psi_k}\coloneqq\bigotimes_{i=1}^m\ket{\psi_{k_i}}$ 
and
    $\ket{\psi_{k_i}}\coloneqq\frac{1}{2^{\secp/2}}\sum_{x\in \bit^\secp}\ket{H_{k_i}(x)}\ket{x\concat 0^\secp}$. 
    It finally sends one copy of $\ket{\psi_k}$ to $C$ and the other copy of $\ket{\psi_k}$ to $R$.  
    \item
{\bf Commit phase:} 
$C$ takes $\ket{\psi_k}$ given by the setup algorithm and a bit $b\in \bit$ to commit as input. 
\begin{itemize}
\item  If $b=0$, $C$ sets the first and second registers of $\ket{\psi_{k_i}}$  
to be $\qreg{C_i}$ and $\qreg{R_i}$, respectively, for each $i\in[m]$.
\item 
If $b=1$, $C$ sets the first and second registers of $\ket{\phi}:=\frac{1}{2^{\secp}}\sum_{y\in \bit^{2\secp}}\ket{y}\ket{y}$ 
to be $\qreg{C_i}$ and $\qreg{R_i}$, respectively, for each $i\in[m]$.
\end{itemize}
$C$ defines $\qreg{C}:=(\qreg{C_1},...,\qreg{C_m})$ and  $\qreg{R}:=(\qreg{R_1},...,\qreg{R_m})$ and 
sends the register $\qreg{C}$ to $R$. 
\item
{\bf Reveal phase:}
$C$ sends $b$ and the register $\qreg{R}=(\qreg{R_1},...,\qreg{R_m})$ to $R$.
\begin{itemize}
    \item If $b=0$, $R$ does the following. For each $i\in[m]$, 
     $R$ runs the SWAP test between the state on registers $(\qreg{C_i},\qreg{R_i})$ and $\ket{\psi_{k_i}}$ given by the setup algorithm. If all of the tests accept, $R$ outputs $b=0$ and otherwise outputs $\bot$.
     \item If $b=1$, $R$ does the following. For each $i\in[m]$, 
     $R$ projects the state on registers $(\qreg{C_i},\qreg{R_i})$ onto $\ket{\phi}$. If all projections are successful,
     $R$ outputs $b=1$ and otherwise outputs $\bot$.
\end{itemize} 
\end{itemize}
\fi

\begin{theorem}\label{thm:hid_bin}
The above protocol satisfies $t$-copy statistical hiding and statistical extractor-based binding.  
\end{theorem}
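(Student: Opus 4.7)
The plan is to handle the two security properties separately, proving hiding via the $2q$-wise independence lemma (\Cref{lem:simulation_QRO}) and Liu's non-uniform QROM PRG bound (\Cref{cor:QROM_PRG}), and proving binding by adapting the template from \Cref{thm:QAI_hid_bin} to the CRQS setting.

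For $t$-copy statistical hiding, I would first use \Cref{lem:simulation_QRO} to replace $H_k$ by a truly random function $H$ without altering any observable distribution: between the honest committer (one copy of the CRQS) and the malicious receiver ($t$ copies), the total number of copies of $\ket{\psi_{k,0}}$ in play is $m(t+1)$, each requiring a single quantum query to $H_k$ to prepare, so $2m(t+1)$-wise independence yields perfect indistinguishability. Next, a hybrid argument over the $m$ subregisters of the commitment would move the $b=0$ distribution to the $b=1$ distribution one coordinate at a time, where adjacent hybrids differ only in whether a single copy of the commitment-side register is $\rho_{0,H} := \Tr_{\qreg{Y}} \ket{\psi_{H,0}}\bra{\psi_{H,0}}$ or $\rho_1 := \Tr_{\qreg{Y}} \ket{\psi_{H,1}}\bra{\psi_{H,1}} = I/2^{2\secp}$. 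Each hybrid step is exactly the setting of \Cref{cor:QROM_PRG} with quantum advice of size $S = O(\secp \cdot m(t+1))$ (the remaining $mt + m - 1$ unchanged copies, together with the committer's residual registers) and zero oracle queries, giving distance $O((S/2^\secp)^{1/3}) = \negl(\secp)$; summing over the $m$ steps keeps the overall trace distance negligible.

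For statistical extractor-based binding, the starting point is that $\rho_{0,k}$ has rank at most $2^\secp$ inside a $2^{2\secp}$-dimensional Hilbert space, so a Cauchy--Schwarz estimate of $\Tr\sqrt{\rho_{0,k}}$ yields $F(\rho_{0,k}, \rho_1) \leq 2^{-\secp}$ for every $k$. Feeding this bound into \Cref{lem:good_measurement} gives, for each $k$, a projective measurement $\{\Pi_0, \Pi_1, \Pi_\bot\}$ on register $\qreg{X}$ with $\|(I-\Pi_b)\ket{\psi_{k,b}}\|^2 \leq \sqrt{2\cdot 2^{-\secp}} = \negl(\secp)$ for $b\in\bit$. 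The extractor $\mathcal{E}$, which is given $k$, applies $\{\Pi_0, \Pi_1, \Pi_\bot\}$ to each of the $m$ subregisters $\qreg{X_i}$ of the commitment and outputs the bit $b$ appearing in more than $2m/3$ coordinates, or $\bot$ if no such bit exists; this is a uniform unbounded-time procedure as required by \Cref{def:CRQS_extract}.

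The remainder of the binding proof mirrors \Cref{thm:QAI_hid_bin} almost verbatim: define the compound extractor projector $\widetilde{\Pi}_b$ as the sum of $\hat{\Pi}_b^{(T)} = \bigotimes_{i\in T}\Pi_{b,\qreg{X_i}} \otimes \bigotimes_{i\notin T}(I-\Pi_b)_{\qreg{X_i}}$ over subsets $T\subseteq[m]$ with $|T| > 2m/3$, and show that $\|P_b(I-\widetilde{\Pi}_b)\ket{\widetilde{\phi}_b}\|^2 = \negl(\secp)$ by applying the parallel-SWAP-test formula (\Cref{lem:HM13}) to the receiver's verification, splitting the sum $\sum_{S\subseteq[m]}\Tr(\rho_S\sigma_S)$ according to whether $S\subseteq T$, and invoking the per-coordinate bound $\|(I-\Pi_b)\ket{\psi_{k,b}}\|^2 \leq \sqrt{2\cdot 2^{-\secp}}$ to control the remaining terms. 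The three-part consequence in the style of \Cref{lem:TD_bound} then yields $\TD(\rho_{\mathsf{real}}(b),\rho_{\mathsf{ideal}}(b)) = \negl(\secp)$, completing the binding argument. The principal obstacle I anticipate is on the hiding side: one must set up \Cref{cor:QROM_PRG} so that the ``advice'' encoding the $mt+m-1$ unchanged copies really is an $S$-qubit state with $S = O(\secp\cdot m(t+1))$ (polynomial in $\secp$ for any polynomial $t$), and check that the cube-root loss $(S/2^\secp)^{1/3}$ remains negligible. The binding proof, by contrast, is largely mechanical once the $k$-dependent projectors $\{\Pi_0,\Pi_1,\Pi_\bot\}$ and the uniform-in-$k$ fidelity bound are in place.
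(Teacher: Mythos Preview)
Your proposal is correct and follows essentially the same approach as the paper: the hiding argument uses \Cref{lem:simulation_QRO} to pass from $H_k$ to a random $H$ (with the query count $m(t+1)$ matched to the $2m(t+1)$-wise independence), then reduces to \Cref{cor:QROM_PRG} via a hybrid over the $m$ commitment coordinates; the binding argument bounds $F(\rho_{0,k},\rho_1)\le 2^{-\secp}$ for every $k$ by Cauchy--Schwarz and then reruns the parallel-SWAP-test analysis of \Cref{thm:QAI_hid_bin} with a $k$-dependent extractor, which is uniform given $k$. The paper's proof is identical in structure, only writing the fidelity bound via the explicit preimage counts $N_y$ rather than via the rank, and otherwise simply pointing back to \Cref{thm:QAI_hid_bin} without repeating the details.
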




\begin{proof}[Proof of \cref{thm:hid_bin}]~
\paragraph{\bf $t$-copy statistical hiding.}
Note that we can generate $\ket{\psi_{k,0}}$ by a single quantum oracle access to $H_k$ and $\ket{\psi_{k,1}}$ does not depend on $H_k$.
Thus, we can generate $\ket{\psi_k}=\ket{\psi_{k,0}}^{\otimes m}\otimes \ket{\psi_{k,1}}^{\otimes m}$ by $m$ quantum oracle access to $H_k$.
In the $t$-copy hiding experiment, $(t+1)$ copies of $\ket{\psi_k}$ (including the one used by the honest committer) are used. 
Since $(t+1)$ copies of $\ket{\psi_k}$ can be generated by $m(t+1)$ quantum oracle access to $H_k$, and $H_k$ is chosen from a family of $2m(t+1)$-wise independent functions, \Cref{lem:simulation_QRO} implies that the adversary's advantage in the $t$-copy hiding experiment does not change even if we replace $H_{k}$ with a uniformly random function $H$. 
After this replacement, the $t$-copy statistical hiding is shown by a direct reduction to \Cref{cor:QROM_PRG} 
via a hybrid argument 
where $t$ copies of the quantum auxiliary input given to the adversary are regarded as $\sigma_H$ in \Cref{cor:QROM_PRG}.

\paragraph{\bf Statistical extractor-based binding.}
Fix $k$, and for $y\in \bit^{2\secp}$, let $N_y$ be the number of $x\in \bit^\secp$ such that 
$H_k(x)=y$. Then we have 
\begin{align*}
F(\Tr_{\qreg{Y}}(\ket{\psi_{k,0}}\bra{\psi_{k,0}}_{\qreg{X},\qreg{Y}}),\Tr_{\qreg{Y}}(\ket{\psi_{k,1}}\bra{\psi_{k,1}}_{\qreg{X},\qreg{Y}}))
&= \left(\sum_{y\in \bit^{2\secp}}\sqrt{\frac{N_y}{2^{3\secp}}} \right)^2\\
&= \left(\sum_{y\in \mathsf{Im}(H_k)}\sqrt{\frac{N_y}{2^{3\secp}}} \right)^2\\
&\le |\mathsf{Im}(H_k)|\cdot \sum_{y\in \mathsf{Im}(H_k)} \frac{N_y}{2^{3\secp}}\\
&\le \frac{1}{2^{\secp}}
\end{align*}
where 
$\mathsf{Im}(H_k)$ donotes the image of $H_k$, 
the first inequality follows from the Cauchy–Schwarz inequality, and
the second inequality follows from $|\mathsf{Im}(H_k)|\le 2^\secp$ and $\sum_{y\in \mathsf{Im}(H_k)}N_y=2^\secp$.
We observe that $\Tr_{\qreg{Y}}(\ket{\psi_{k,0}}\bra{\psi_{k,0}})$ and $\Tr_{\qreg{Y}}(\ket{\psi_{k,1}}\bra{\psi_{k,1}})$ play the roles of $\xi_{\secp,0}$ and $\xi_{\secp,1}$ in the construction in \Cref{sec:construction_QAI} and the only assumption needed for proving statistical extractor-based binding there  was $F(\xi_{\secp,0},\xi_{\secp,1})=\negl(\secp)$. 
Thus, the rest of the proof is almost identical to that in the quantum auxiliary-input setting (\cref{thm:QAI_hid_bin})
noting that we can implement the extractor $\mathcal{E}$ by a \emph{uniform} unbounded-time algorithm given $k$. 
Thus, we omit the details.  \takashi{Is this okay? I think this is really the same.}\mor{I think so.}
\end{proof}

\if0    
    First let us show statistical hiding. 
    Note that at most $2$ copies of $\ket{\psi_{k_i}}:=\frac{1}{2^{\secp/2}}\sum_{x\in \bit^\secp}\ket{H_{k_i}(x)}\ket{x\concat 0^\secp}$ for each $i\in[m]$ are used in the relevant experiment. 
    By \Cref{lem:simulation_QRO}, they are perfectly indistinguishable from the ones where each $H_{k_i}$ is replaced with a uniformly random function. 
    In this setting, the hiding is shown by a direct reduction to \Cref{cor:QROM_PRG} where the copy of $\ket{\psi_{k_i}}$ given to the adversary is regarded as $\sigma_H$.
    
    Next let us show statistical binding.
    Here we show a stronger result by assuming that the setup algorithm sends $k$ (not $\ket{\psi_k}$) to the committer.
    Let $p_b$ be the probability that the receiver $R$ outputs the bit $b\in\bit$.
    The most general attack by a malicious committer $C^*$ is as follows: it first generates a state $\ket{\Psi_k}_{\qreg{E},\qreg{R},\qreg{C}}$
    that depends on $k$
    over three registers $\qreg{E}$ (committer's environment), $\qreg{R}$ (reveal register), and $\qreg{C}$ (commit register).
    $C^*$ then sends the register $\qreg{C}$ to $R$ as the commitment.
    If $C^*$ wants to let $R$ open $b=0$, $C^*$ applies a certain unitary $U^0$ on $(\qreg{E},\qreg{R})$ and sends $\qreg{R}$ to $R$.
    If $C^*$ wants to let $R$ open $b=1$, $C^*$ applies a certain unitary $U^1$ on $(\qreg{E},\qreg{R})$ and sends $\qreg{R}$ to $R$.
    Let us define
    $\rho^{k,b}\coloneqq\Tr_{\qreg{E}}(U_{\qreg{E},\qreg{R}}^b\ket{\Psi_k}\bra{\Psi_k}_{\qreg{E},\qreg{R},\qreg{C}}U_{\qreg{E},\qreg{R}}^{b\dagger})$.
    Then
    \begin{align}
       p_0+p_1&=\frac{1}{|\mathcal{K}_\secp|^m}\sum_{k\in\mathcal{K}_\secp^{\times m}}
       \Big[\frac{1}{2^m}\sum_{S\subseteq [m]}
       \Tr[\ket{\psi_k}\bra{\psi_k}_S\rho^{k,0}_S]
       +F(\ket{\phi}^{\otimes m},\rho^{k,1})\Big]\label{HM}\\
        &\le 
       \frac{1}{|\mathcal{K}_\secp|^m}\sum_{k\in\mathcal{K}_\secp^{\times m}}
       \frac{1}{2^m}\sum_{S\subseteq [m]}
       \Big[
       F\big(\bigotimes_{i\in S}\ket{\psi_{k_i}},\rho_S^{k,0}\big)
       +F(\ket{\phi}^{\otimes |S|},\rho^{k,1}_S)
       \Big]\\
         &\le 
       \frac{1}{|\mathcal{K}_\secp|^m}\sum_{k\in\mathcal{K}_\secp^{\times m}}\frac{1}{2^m}\sum_{S\subseteq [m]}
       \left[
       F\big(\bigotimes_{i\in S}\Tr_{\qreg{R_i}}[\ket{\psi_{k_i}}],\Tr_{\qreg{R}}\rho_S^{k,0}\big)
       +F\left(\left(\frac{I^{\otimes 2\secp}}{2^{2\secp}}\right)^{\otimes |S|},\Tr_{\qreg{R}}\rho_S^{k,1}\right)
       \right]\label{simplenotation}\\
        &\le 
        \frac{1}{|\mathcal{K}_\secp|^m}\sum_{k\in\mathcal{K}_\secp^{\times m}}
       \frac{1}{2^m}\sum_{S\subseteq [m]}
       \left[
       1+
       \sqrt{F\left(\bigotimes_{i\in S}\Tr_{\qreg{R_i}}[\ket{\psi_{k_i}}], 
       \left(\frac{I^{\otimes 2\secp}}{2^{2\secp}}\right)^{\otimes |S|}\right)}\right] \label{Fidelity}\\
         &= 
        \frac{1}{|\mathcal{K}_\secp|^m}\sum_{k\in\mathcal{K}_\secp^{\times m}}
       \frac{1}{2^m}\sum_{S\subseteq [m]}
       \left[
       1+
       \prod_{i\in S}
       \sqrt{F\left(\Tr_{\qreg{R_i}}[\ket{\psi_{k_i}}], \frac{I^{\otimes 2\secp}}{2^{2\secp}}\right)}\right] \label{multi}\\
          &\le 
        \frac{1}{|\mathcal{K}_\secp|^m}\sum_{k\in\mathcal{K}_\secp^{\times m}}
       \frac{1}{2^m}\sum_{S\subseteq [m]}
       \left[
       1+
       \sqrt{\frac{1}{2^{|S|\secp}}}\right]\label{Fidelity2}\\
           &\le 
           1+
       \frac{1}{2^m}\sum_{S\subseteq [m]}
       \sqrt{\frac{1}{2^{|S|\secp}}}\\
      &=1+\frac{1}{2^m}\left(1+\frac{1}{\sqrt{2^\secp}}\right)^m \\
         &\le 1+\negl(\secp).
    \end{align}
    Here, in \cref{HM}, we have used \cref{lem:HM13}.
    $\Tr_{\qreg{R}}\rho_S^{k,b}$ in \cref{simplenotation} means 
    tracing out all registers $\qreg{R_i}$ of $\rho_S^{k,b}$ 
    such that $i\in S$.
    In \cref{Fidelity}, we have used the facts that $\Tr_{\qreg{R}}\rho_S^{k,0}=\Tr_{\qreg{R}}\rho_S^{k,1}$ and
    $F(\rho,\xi)+F(\sigma,\xi)\le1+\sqrt{F(\rho,\sigma)}$ for any states $\rho$, $\sigma$, and $\xi$.
    In \cref{multi}, we have used the multiplicativity of fidelity.
    In \cref{Fidelity2}, we have used the fact that $\Tr_{\qreg{R_i}}[\ket{\psi_{k_i}}]$ is a mixture of
    $2^\secp$ pure states and \cref{lem:fidelity}.
\end{proof}
\fi

\if0
\begin{theorem}\label{thm:equ_ext}
The above protocol satisifes $t$-copy statistical equivocality and $t$-copy statistical extractability. 
\end{theorem}
\begin{proof} For both equivocality and extractability, at most $(t+1)$ copies ($t$ copies for the adversary and one copy for an honest party) of $\ket{\psi_{k_i}}:=\frac{1}{2^{\secp/2}}\sum_{x\in \bit^\secp}\ket{H_k(x)}\ket{x\concat 0^n}$ are used in the relevant experiments. By [Zhandry12], they are perfectly indistinguishable even if we replace $f$ with a uniformly random function.\\  
\noindent\textbf{Equivocality}
The equivocator $\mathcal{Q}$ works as follows: On input $1^\secp$ and $1^t$, generate a state 
... \\
\noindent\textbf{Extractability}
\end{proof}
\takashi{I believe the above construction satisfies $t$-copy hiding and $t$-copy extractability (and hence $t$-copy sum-binding) but I'm not sure about $t$-copy equivocality. If we cannot prove it, then we may need to rely on the interactive compiler of BCKM20.}
\fi

\ifnum\submission=1
\paragraph{\bf Impossibility of Unbounded-Copy Security.}
We have constructed a quantum commitment scheme in the CRQS model that satisfies $t$-copy statistical hiding and statistical extractor-based binding for any bounded polynomial $t$. One may wonder if this can be achieved for unbounded polynomials $t$. Unfortunately, it turns out this is impossible even if we relax the security of binding to the sum-binding against adversaries that receive $t$ copies of the CRQS (instead of receiving the classical key $k$ as in \cref{def:binding}).
See 
\ifnum\inclappndx=0
the full version of this paper~\cite{cryptoeprint:2023/1844}
\else
\Cref{sec:unbounded_copy_impossibility} 
\fi
for a proof of this impossibility.
\else
\ifnum\submission=1
\section{Impossibility of Unbounded-Copy Security in the CRQS Model}
\else
\subsection{Impossibility of Unbounded-Copy Security}
\fi
\label{sec:unbounded_copy_impossibility}
In \Cref{sec:construction_CRQS}, we construct a quantum commitment scheme in the CRQS model that satisfies $t$-copy statistical hiding and statistical extractor-based binding for any bounded polynomial $t$. A natural question is if we can achieve it for unbounded polynomials $t$. Unfortunately, it turns out this is impossible even if we relax the security of binding to the sum-binding against adversaries that receive $t$ copies of the CRQS (instead of receiving the classical key $k$ as in \cref{def:binding}). 
\begin{theorem}
There is no quantum commitment scheme in the CRQS model that satisfies statistical hiding against adversaries that get
unbounded poly-copies of CRQS, and statistical sum-binding against adversaries that get unbounded poly-copies of CRQS.
\end{theorem}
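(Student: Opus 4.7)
The plan is to extend the plain-model Lo--Chau/Mayers no-go theorem to the CRQS setting: with unbounded polynomial copies of the CRQS, an unbounded committer can effectively recover the setup's key $k$, after which the standard Uhlmann-based binding attack on statistically hiding commitments applies.

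First, I would derive a ``per-key'' hiding consequence. Let $\ket{\Psi_b(\psi_k)}_{\qreg{R},\qreg{C}}$ be (a purification of) the honest committer's output on input $\ket{\psi_k}$ and bit $b$, and let $\rho_{k,b} := \Tr_{\qreg{R}}[\ket{\Psi_b(\psi_k)}\bra{\Psi_b(\psi_k)}]$. I claim unbounded-copy statistical hiding forces $\TD(\rho_{k,0},\rho_{k,1}) = \negl(\secp)$ for all but a negligible fraction of keys $k$. The argument is by contrapositive: a $t$-copy unbounded receiver could first narrow $k$ down to its fidelity-$\geq 1 - 1/\poly(\secp)$ equivalence class by performing SWAP tests between its $t$ copies of $\ket{\psi_k}$ and locally-regenerated $\ket{\psi_{k'}}$ for each candidate $k' \in \mathcal{K}_\secp$, and then apply the optimal Helstrom distinguisher for $\rho_{k,0}$ vs.\ $\rho_{k,1}$ on the commitment. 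For $t$ a sufficiently large polynomial in $\secp$, this distinguisher contradicts hiding whenever a non-negligible fraction of $k$'s carries non-negligible $\TD(\rho_{k,0},\rho_{k,1})$.

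Second, given per-key hiding, Uhlmann's theorem supplies, for each good $k$, a unitary $U_k$ on the committer's purification register $\qreg{R}$ alone such that $U_k\ket{\Psi_0(\psi_k)}$ is negligibly close to $\ket{\Psi_1(\psi_k)}$. The binding attack then proceeds: a malicious committer with $T = \poly(\secp)$ copies of $\ket{\psi_k}$ honestly commits to $b = 0$ using one copy and retains $T-1$ copies; to open as $b = 1$, it uses the remaining copies together with unbounded classical computation to identify $k$'s equivalence class via SWAP tests as above, computes $U_k$, applies it to $\qreg{R}$, and sends the result with bit $1$. Since the honest committer, viewed as an isometry, maps fidelity-close inputs to fidelity-close outputs, intra-class ambiguity contributes only negligible extra error to the overall success probability. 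Thus $p_0 + p_1 \geq 2 - \negl(\secp)$, contradicting statistical sum-binding.

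The main obstacle is cleanly formalizing the per-key hiding reduction and the identification step when the key space contains many keys whose CRQS differ only by inverse-polynomial amounts. One must pick $T$ larger than the inverse of the minimum non-trivial inter-class fidelity gap, verify that the SWAP-test-based identification succeeds with probability $1-\negl(\secp)$ at this $T$, and compose the errors from (i) the Uhlmann approximation for a fixed good $k$, (ii) the class-identification procedure, and (iii) the Lipschitz variation of $U_k$ within a class. All three are negligible by construction, but the quantitative bookkeeping---and in particular ensuring the class structure is itself well-defined and finite up to negligible measure---requires care.
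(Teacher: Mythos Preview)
Your high-level strategy matches the paper's: leverage unbounded polynomial copies to (approximately) learn $k$, then run the Lo--Chau/Mayers Uhlmann attack per-key. The paper organizes the same idea in contrapositive form (sum-binding $\Rightarrow$ average $\TD(\rho_{k,0},\rho_{k,1})$ is at least inverse-polynomial, which is then used to break hiding), but the content is the same.

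There is one genuine technical gap. Your key-identification step uses SWAP tests ``for each candidate $k'\in\mathcal{K}_\secp$,'' but each SWAP test consumes a fresh copy of $\ket{\psi_k}$, and you also need $\poly(\secp)$ repetitions per candidate to estimate $|\langle\psi_{k'}|\psi_k\rangle|^2$ to inverse-polynomial precision. With only $t=\poly(\secp)$ copies this fails whenever $|\mathcal{K}_\secp|$ is super-polynomial, which is the generic case (the setup is QPT, so $\log|\mathcal{K}_\secp|=\poly(\secp)$ but $|\mathcal{K}_\secp|$ may be exponential). The paper fixes this by invoking shadow tomography to estimate $|\langle\psi_{k'}|\psi_k\rangle|^2$ simultaneously for \emph{all} $k'\in\mathcal{K}_\secp$ to precision $1/r$ using $\poly(\log|\mathcal{K}_\secp|, r)$ copies; this is the right tool here.

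Your ``equivalence class'' worry is also avoidable and is the one place where your write-up overcomplicates things. Fidelity $\ge 1-\epsilon$ is not transitive, so the class structure is ill-defined, and your proposed dependence of $T$ on the ``minimum non-trivial inter-class fidelity gap'' is not well-posed. The paper sidesteps all of this: after shadow tomography, it simply takes the \emph{lexicographically first} $k^*$ with estimated overlap $\ge 1-1/r$, and then the malicious committer \emph{generates and commits} $\ket{\Psi_0(\psi_{k^*})}$ rather than committing honestly with $\ket{\psi_k}$. This makes the Uhlmann unitary $V^{k^*}$ exactly the one needed, and the only remaining error is $\TD(\sigma^{b,k},\sigma^{b,k^*})\le\sqrt{2/r}$, controlled directly by the tomography precision $r$. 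No Lipschitz analysis of $k\mapsto U_k$ is required.
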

\begin{proof}
Assume that after receiving $\ket{\psi_k}$ from the setup,
the honest committer applies a unitary $U$ on $\ket{b}\ket{\psi_k}\ket{0...0}$ to generate 
$\ket{\Psi(\psi_k,b)}_{\qreg{C},\qreg{R}}\coloneqq U(\ket{b}\ket{\psi_k}\ket{0...0})$ over two registers $\qreg{R}$ and $\qreg{C}$, where
$b\in\bit$ is the bit to be committed.
Define $\sigma^{b,k}_{\qreg{C}}\coloneqq \Tr_{\qreg{R}}(\ket{\Psi(\psi_k,b)}_{\qreg{C},\qreg{R}})$.
We can show the following lemma:
\begin{lemma}
\label{lem:sigma_close}
If $|\langle \psi_k|\psi_{k'}\rangle|^2\ge1-\epsilon$, then
$\TD(\sigma^{b,k},\sigma^{b,k'})\le\sqrt{\epsilon}$.
\end{lemma}
\begin{proof}[Proof of \cref{lem:sigma_close}]
\begin{align}
\TD(\sigma^{b,k},\sigma^{b,k'})
&\le\TD(\ket{\Psi(\psi_k,b)},\ket{\Psi(\psi_{k'},b)})    \\
&=\TD(U\ket{b}\ket{\psi_k}\ket{0...0},U\ket{b}\ket{\psi_{k'}}\ket{0...0})    \\
&=\TD(\ket{\psi_k},\ket{\psi_{k'}})    \\
&\le\sqrt{\epsilon}.
\end{align}    
\end{proof}

We also use the following lemma:
\begin{lemma}
\label{lem:G}
There is a polynomial $p$ such that
\begin{align}
\frac{1}{|\mathcal{K}_\secp|}\sum_{k\in \mathcal{K}_\secp}\TD(\sigma^{0,k},\sigma^{1,k})\ge\frac{1}{p(\secp)}
\end{align}
for infinitely-many $\secp\in\mathbb{N}$.
\end{lemma}

\begin{proof}[Proof of \cref{lem:G}]
Assume that
$\frac{1}{|\mathcal{K}_\secp|}\sum_{k\in \mathcal{K}_\secp}\TD(\sigma^{0,k},\sigma^{1,k})=\negl(\secp)$.
This means that except for a negligible fraction of $k$, 
$\TD(\sigma^{0,k},\sigma^{1,k})=\negl(\secp)$.
Then we can construct the following unbounded malicious committer that breaks statistical sum-binding given unbounded-poly copies of $\ket{\psi_k}$:
\begin{enumerate}
        \item
    Do the shadow tomography on $\ket{\psi_k}^{\otimes t}$ to compute $\eta_{k'}$ for all $k'\in\mathcal{K}_\secp$
    such that $|\eta_{k'}-|\langle \psi_{k'}|\psi_{k}\rangle|^2|\le\frac{1}{r}$, where $r$ is a polynomial specified later.
    \item 
    Pick up the lexicographically first $k^*$ such that $\eta_{k^*}\ge1-\frac{1}{r}$.
    \item 
    Generate $\ket{\Psi(\psi_{k^*},0)}_{\qreg{R},\qreg{C}}$, and sends the register $\qreg{C}$ to the receiver.
    It is the end of the commit phase.
    \item 
    In the reveal phase,
    to open $b=0$, send $\qreg{R}$ to the receiver.
    To open $b=1$, apply $V^{k^*}$ on the $\qreg{R}$ register and send $\qreg{R}$ to the receiver.
\end{enumerate}
Here, $V^{k^*}$ is the unitary such that
$F(\sigma^{0,k^*}_{\qreg{C}},\sigma^{1,k^*}_{\qreg{C}})=|\langle\Psi(\psi_{k^*},1)|(V^{k^*}_{\qreg{R}}\otimes I_{\qreg{C}})|\Psi(\psi_{k^*},0)\rangle_{\qreg{R},\qreg{C}}|^2$.
Let $\Pi_{b,k}$ be the POVM element that corresponds to the acceptance of $b$ by the receiver.
The probability that the receiver opens $b=0$ is
\begin{align}
p_0&\coloneqq\frac{1}{|\mathcal{K}_\secp|}\sum_{k\in\mathcal{K}_\secp}    
\langle \Psi(\psi_{k^*},0)|\Pi_{0,k}|\Psi(\psi_{k^*},0)\rangle\\
&\ge\frac{1}{|\mathcal{K}_\secp|}\sum_{k\in\mathcal{K}_\secp}    
\langle \Psi(\psi_{k},0)|\Pi_{0,k}|\Psi(\psi_{k},0)\rangle-\sqrt{\frac{2}{r}}\\
&\ge
1-\sqrt{\frac{2}{r}}.
\end{align}
The probability that the receiver opens $b=1$ is
\begin{align}
p_1&\coloneqq\frac{1}{|\mathcal{K}_\secp|}\sum_k    
\langle \Psi(\psi_{k^*},0)|(V^{k^*,\dagger}_{\qreg{R}}\otimes I_{\qreg{C}})\Pi_{1,k}
(V^{k^*}_{\qreg{R}}\otimes I_{\qreg{C}})|\Psi(\psi_{k^*},0)\rangle\\
&\ge\frac{1}{|\mathcal{K}_\secp|}\sum_k    
\langle \Psi(\psi_{k^*},1)|\Pi_{1,k}
|\Psi(\psi_{k^*},1)\rangle\\
&-\frac{1}{|\mathcal{K}_\secp|}\sum_k\TD\left((V^{k^*}_{\qreg{R}}\otimes I_{\qreg{C}})\ket{\Psi(\psi_{k^*},0)},\ket{\Psi(\psi_{k^*},1)}\right)    \\
&\ge\frac{1}{|\mathcal{K}_\secp|}\sum_k    
\langle \Psi(\psi_{k},1)|\Pi_{1,k}
|\Psi(\psi_{k},1)\rangle-\sqrt{\frac{2}{r}}\\
&-\frac{1}{|\mathcal{K}_\secp|}\sum_k
\sqrt{1-|\langle\Psi(\psi_{k^*},1)|(V^{k^*}_{\qreg{R}}\otimes I_{\qreg{C}})\ket{\Psi(\psi_{k^*},0)}|^2}   \\
&\ge1-\sqrt{\frac{2}{r}}
-\frac{1}{|\mathcal{K}_\secp|}\sum_k
\sqrt{1-F(\sigma^{0,k^*},\sigma^{1,k^*})}   \\
&\ge1-\sqrt{\frac{2}{r}}
-\frac{1}{|\mathcal{K}_\secp|}\sum_k
\sqrt{1-[1-\TD(\sigma^{0,k^*},\sigma^{1,k^*})]^2} \\
&\ge1-\sqrt{\frac{2}{r}}
-\frac{1}{|\mathcal{K}_\secp|}\sum_k
\sqrt{1-\left[1-\TD(\sigma^{0,k},\sigma^{1,k})-2\sqrt{\frac{2}{r}}\right]^2} \\
&\ge1-\sqrt{\frac{2}{r}}
-(1-\negl(\secp))
\sqrt{1-\left[1-\negl(\secp)-2\sqrt{\frac{2}{r}}\right]^2} \\
&\ge1-\sqrt{\frac{2}{r}}
-2\left(\frac{2}{r}\right)^{\frac{1}{4}}-\negl(\secp).
\end{align}
Hence if we take $r\gg1$, we have
\begin{align}
p_0+p_1&\ge    
1-\sqrt{\frac{2}{r}}
+
1-\sqrt{\frac{2}{r}}
-2\left(\frac{2}{r}\right)^{\frac{1}{4}}\\
&\ge1+\frac{1}{2},
\end{align}
which breaks the sum binding.
\end{proof}

For each $k\in\mathcal{K}_\secp$, define a POVM measurement $\{\Lambda_k,I-\Lambda_k\}$ such that
$\TD(\sigma^{0,k},\sigma^{1,k})=\Tr(\Lambda_k\sigma^{0,k})-\Tr(\Lambda_k\sigma^{1,k})$. 
Let us consider the following unbounded adversary $\cA$ that breaks the statistical hiding:
\begin{enumerate}
    \item
    Do the shadow tomography on $\ket{\psi_k}^{\otimes t}$ to compute $\eta_{k'}$ for all $k'\in\mathcal{K}_\secp$
    such that $|\eta_{k'}-|\langle \psi_{k'}|\psi_{k}\rangle|^2|\le\frac{1}{u}$, where $u$ is a polynomial specified later.
    \item 
    Pick up the lexicographically first $k^*$ such that $\eta_{k^*}\ge1-\frac{1}{u}$.
    \item 
    Measure $\sigma^{b,k}_{\qreg{C}}$, which is committed by the committer, with $\{\Lambda_{k^*},I-\Lambda_{k^*}\}$.
    \item 
    If the result $\Lambda_{k^*}$ is obtained, output 0.
    Otherwise, output 1.
\end{enumerate}
Then,
\begin{align}
&\Pr[0\gets\cA|b=0]
-\Pr[0\gets\cA|b=1]\\
&=\frac{1}{|\mathcal{K}_\secp|}\sum_{k\in\mathcal{K}_\secp}\Tr(\Lambda_{k^*}\sigma^{0,k})-
\frac{1}{|\mathcal{K}_\secp|}\sum_{k\in\mathcal{K}_\secp}\Tr(\Lambda_{k^*}\sigma^{1,k})\\
&\ge\frac{1}{|\mathcal{K}_\secp|}\sum_{k\in\mathcal{K}_\secp}\Tr(\Lambda_{k^*}\sigma^{0,k^*})-
\frac{1}{|\mathcal{K}_\secp|}\sum_{k\in\mathcal{K}_\secp}\Tr(\Lambda_{k^*}\sigma^{1,k^*})-2\sqrt{\frac{2}{u(\secp)}}\\
&=\frac{1}{|\mathcal{K}_\secp|}\sum_{k\in\mathcal{K}_\secp}\TD(\sigma^{0,k^*},\sigma^{1,k^*})-2\sqrt{\frac{2}{u(\secp)}}\\
&\ge\frac{1}{|\mathcal{K}_\secp|}\sum_{k\in\mathcal{K}_\secp}\TD(\sigma^{0,k},\sigma^{1,k})-4\sqrt{\frac{2}{u(\secp)}}\\
&\ge\frac{1}{\poly(\secp)},
\end{align}
where in the last inequality, we have used \cref{lem:G} and taken $u\gg p$.
Hence the statistical hiding is broken.
\end{proof}

\fi

\paragraph{\bf Circumventing the impossibility using stateful setup.}
Interestingly, we can circumvent the above impossibility if we allow the setup algorithm to be stateful.%
\ifnum\anonymous=0
\footnote{We thank Fermi Ma for suggesting this.}
\fi
\takashi{more explanation on the model may be useful. I don't want to write the formal description since that may look too complicated.}
To see this, we first observe that if $H_k$ is replaced with a uniformly random function from $\bit^\secp$ to $\bit^{2\secp}$ 
in the construction in \cref{sec:construction_CRQS}, then it satisfies $t$-copy statistical hiding for all polynomials (or even subexponential) $t$. However, such a modified protocol has inefficient setup algorithm since a random function cannot be computed efficiently.
A common solution in such a situation is to use pseudorandom functions, but then the hiding becomes a computational one.
This is not useful for our purpose since if we can use pseudorandom functions, we could directly construct computationally hiding and statistically binding quantum commitments in the plain model.  
Another common method to efficiently simulating a random function is to rely on lazy-sampling, i.e., instead of sampling the whole function at the beginning, we assign the function values only on queried inputs.  
Recently, Zhandry~\cite{C:Zhandry20} proposed a technique called the compressed oracle that enables us to perfectly and efficiently simulate a quantumly-accessible random oracle.     
Thus, if the setup algorithm uses the compressed oracle technique to simulate the random function, we can achieve $t$-copy statistical hiding for all polynomials $t$ and statistical extractor-based binding simultaneously,\footnote{
In the stateful setup setting, we have to slightly modify the definition of statistical extractor-based binding since the classical key $k$ no longer appears. We allow the malicious committer and extractor to receive arbitrarily many (possibly exponential number of) copies of the CRQS. The proof still works in this setting essentially in the same way.  \takashi{I added this footnote.}
} at the cost of making the setup algorithm stateful so that it can keep the quantum ``database'' needed for the simulation of the random oracle. \takashi{We may polish the explanation here later.} 

\subsection{Applications of Commitments in the CRQS model}
In the plain model (where there is no CRQS), \cite{C:BCKM21b} constructed oblivious transfers (OTs) from any post-quantum classical commitments.  
\cite{C:AnaQiaYue22} observed that the construction works based on any quantum commitments that satisfy a binding property they introduced.\footnote{Later, Yan~\cite{AC:Yan22} showed that any canonical quantum bit commitment scheme satisfies the binding property defined in \cite{C:AnaQiaYue22}.} 
Since our definition of the extractor-based binding closely follows the definition of binding in \cite{C:AnaQiaYue22},  
we observe that we can similarly plug our commitments in the CRQS model into the compiler of \cite{C:BCKM21b} to obtain a statistically secure oblivious transfer in the CRQS model. 
Thus, we obtain the following corollary. 
\begin{corollary}
For any polynomial $t$, 
there exist $t$-copy statistically maliciously simulation-secure OTs in the CRQS model. 
\end{corollary}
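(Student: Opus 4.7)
The plan is to instantiate the \cite{C:BCKM21b} compiler using our CRQS-model quantum commitment scheme of \Cref{thm:hid_bin} in place of the commitment building block, closely following the observation of \cite{C:AnaQiaYue22} that the compiler only relies on the hiding property and on an extractor-based binding property essentially identical to the one we formulated in \Cref{def:CRQS_extract}. Concretely, I would first rewrite the compiler as a protocol in the CRQS model: the setup algorithm samples $k\la\mathcal{K}_\secp$ and produces $\poly(\secp)$ copies of $\ket{\psi_k}$, of which a bounded number $t'=\poly(\secp,t)$ is used honestly by the OT sender and receiver across all invocations of the underlying commitment inside the compiler.

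Next I would verify security phase by phase. For receiver (sender-side) statistical security, the argument reduces to the statistical hiding of the commitments: a malicious sender combined with whatever copies of $\ket{\psi_k}$ it sees (bounded by some polynomial $t''$) plays exactly the role of the adversary in \Cref{def:hiding}, so the analysis of \cite{C:BCKM21b,C:AnaQiaYue22} goes through with negligible statistical loss. For sender (receiver-side) statistical simulation security, I would run the BCKM21 simulator, which in the original analysis uses the extractor of the commitment to extract the receiver's choice bits from the commit-phase messages; here I substitute the extractor $\mathcal{E}$ of \Cref{def:CRQS_extract}, which is a uniform (unbounded) algorithm taking $k$ as input. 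Crucially, the CRQS is efficiently generatable once $k$ is known, so the simulator can itself sample $k$, generate all required copies of $\ket{\psi_k}$ for the adversary and for the honest parties it emulates, and pass $k$ to $\mathcal{E}$ whenever extraction is needed. This is the reason the analogous step fails in the quantum auxiliary-input setting, as we noted in \Cref{sec:overview}, and it is exactly what is purchased by moving from the auxiliary-input model to the CRQS model.

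The remaining steps are routine hybrid arguments inherited from \cite{C:BCKM21b,C:AnaQiaYue22}: replace each commit with its extracted form via statistical extractor-based binding, invoke statistical hiding to switch committed values in the receiver's view, and conclude via Watrous rewinding where the original analysis uses it (this is available since, unlike in the quantum auxiliary-input setting, the CRQS is efficiently generatable by the simulator, so fresh copies may be prepared as needed during rewinding). The main obstacle I anticipate is bookkeeping the copy budget: one must track how many copies of $\ket{\psi_k}$ are consumed throughout the compiled OT protocol (by honest parties, by the simulator's emulation, and by the adversary) and choose the $t'$ for the underlying commitment in \Cref{thm:hid_bin} large enough so that $t'$-copy statistical hiding covers all of them, while still being a fixed polynomial in the target $t$ of the corollary. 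Since each invocation of the commitment consumes only a polynomial number of copies and the compiler makes $\poly(\secp)$ invocations, a suitable polynomial $t'$ exists, completing the proof.
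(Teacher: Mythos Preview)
Your proposal is correct and takes essentially the same approach as the paper: plug the CRQS commitment of \Cref{thm:hid_bin} into the \cite{C:BCKM21b} compiler via the \cite{C:AnaQiaYue22} observation that extractor-based binding suffices, and note that the simulator can sample $k$ itself to generate CRQS copies and to feed $\mathcal{E}$, which is what makes Watrous rewinding available here but not in the auxiliary-input setting. The paper itself states this only as a one-paragraph observation without spelling out the hybrid structure or the copy-budget accounting you mention, so your plan is in fact more detailed than the paper's own treatment; the one point you could sharpen is that in \cite{C:BCKM21b} both parties commit at various stages, so both simulators invoke extraction and both directions of security use hiding, rather than the clean hiding-for-one-side/extraction-for-the-other split you describe.
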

For clarity, we describe our definition of OTs in the CRQS model 
in 
\ifnum\inclappndx=0
the full version of the paper~\cite{cryptoeprint:2023/1844}%
\else
\Cref{def:OT_CRQS}%
\fi
. 
\begin{remark}
One may wonder why the compiler of \cite{C:BCKM21b} works in the CRQS model but does not work in the quantum auxiliary-input setting as considered in \Cref{sec:QAI}. Roughly, this is because the CRQS can be efficiently computed given the classical key $k$, but there is no way to efficiently compute the quantum auxiliary input. The lack of efficient generation of the quantum auxiliary input prevents us from applying Watrous' rewinding lemma~\cite{SIAM:Watrous09}, which is a crucial tool in the compiler in \cite{C:BCKM21b}.    
\end{remark}

Moreover it is known that OTs imply MPCs for classical functionalities~\cite{C:IshPraSah08} or even for quantum functionalities~\cite{EC:DGJMS20} in a black-box manner. Thus, we believe that similar constructions work in the CRQS model, which would lead to  statistically maliciously simulation-secure MPCs in the CRQS model. However, for formally stating it, we have to carefully reexamine these constructions to make sure that they work in the CRQS model as well. This is out of the scope of this work, and we leave it to future work. \takashi{I'm not super familiar with MPC and cannot state these results confidently.}


\ifnum\anonymous=1
\else
\ifnum\llncs=1
\begin{credits}
\subsubsection{\ackname}
\else
\paragraph{Acknowledgments.}
\fi
We thank Taiga Hiroka for helpful discussions, 
Qipeng Liu for answering questions regarding \cite{EC:Liu23}, and
Fermi Ma for many insightful comments including suggestion of an alternative proof of \Cref{thm:pq-sparse_pseudorandom} and the idea of using the compressed oracle to achieve unbounded-copy security. 
We thank Luowen Qian for sharing an early draft of \cite{Qian23} 
and providing many helpful comments on our earlier draft, especially pointing out flaws in the proofs of \Cref{thm:pq-sparse_pseudorandom,thm:QAI_hid_bin}. 
This work was done in part while T. Morimae and B. Nehoran were visiting the Simons Institute for the Theory of Computing.
This work was done in part while B. Nehoran was visiting the Yukawa Institute for Theoretical Physics.
TM is supported by
JST CREST JPMJCR23I3,
JST Moonshot R\verb|&|D JPMJMS2061-5-1-1, 
JST FOREST, 
MEXT QLEAP, 
the Grant-in Aid for Transformative Research Areas (A) 21H05183,
and 
the Grant-in-Aid for Scientific Research (A) No.22H00522.
\ifnum\llncs=1
\subsubsection{\discintname}
The authors have no competing interests to declare that are
relevant to the content of this article.
\end{credits}
\else
\fi
\fi

\ifnum\submission=0
\bibliographystyle{alpha} 
\else
\bibliographystyle{splncs04}
\fi
\bibliography{abbrev3,crypto,reference}

\ifnum\inclappndx=1
\appendix 
\ifnum\submission=1
    
\fi
\section{No-go in Classical Setup Models}
\label{sec:nogo_correlatedRS}

\begin{definition}[Commitments in the correlated randomness model]
A (non-interactive) commitment scheme in the correlated randomness model is given by a tuple of the setup algorithm 
$\setup$, committer $C$, and receiver $R$, all of which are uniform QPT algorithms, and a family $\{D_\secp\}_{\secp\in\mathbb{N}}$ of 
distributions over classical bit strings.
The scheme is divided into three phases, the setup phase, commit phase, and reveal phase as follows:  
\begin{itemize}
    \item 
    {\bf Setup phase:}
    $\setup$ takes $1^\secp$ as input and samples $(x,y)\gets D_\secp$. 
    It sends $x$ to $C$ and $y$ to $R$.  
    \item
{\bf Commit phase:} 
$C$ takes $x$ given by the setup algorithm and a bit $b\in \bit$ to commit as input, generates a quantum state $\ket{\Psi_{x,b}}_{\qreg{C},\qreg{R}}$ on registers $\qreg{C}$ and $\qreg{R}$, and 
sends the register $\qreg{C}$ to $R$. 
\item
{\bf Reveal phase:}
$C$ sends $b$ and the register $\qreg{R}$ to $R$.
$R$ does a verification POVM measurement $\{\Lambda_{y,b},I-\Lambda_{y,b}\}$ on
the state over $\qreg{C}$ and $\qreg{R}$.
If $\Lambda_{y,b}$ is obtained, $R$ outputs $b$. Otherwise, $R$ outputs $\bot$. 
\end{itemize}
\end{definition}
The correctness, statistical hiding, and statistical binding are defined in similar ways as
those of commitments in the CRQS model (\cref{def:CRQS}).

\begin{remark}
The collapsing theorem~\cite{AC:Yan22} will not hold in general in the correlated randomness model,
because, for example, the committer cannot know the value of $y$ and therefore the entire commitment phase cannot be delegated to the committer.
In the CRS model, on the other hand, we can show the collapsing theorem in a similar way as \cite{AC:Yan22}.
\end{remark}

\begin{definition}
We say that a family of distributions $\{D_\secp\}_{\secp\in\mathbb{N}}$ is $\epsilon$-correlated if
there is a (not necessarily efficient) algorithm $\cA$ such that
\begin{align}
\Pr[x\gets\cA(y):(x,y)\gets D_\secp]\ge \epsilon.
\label{correlated}
\end{align}
\end{definition}

\begin{theorem}
\label{thm:correlated}
If $\{D_\secp\}_{\secp\in\mathbb{N}}$ is $\epsilon$-correlated with $\epsilon\ge\frac{1}{2}+\frac{1}{\poly(\secp)}$,
statistically-hiding and statistically-binding (non-interactive) commitments in the correlated randomness model is impossible.
\end{theorem}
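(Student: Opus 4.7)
The plan is to argue by contradiction: assume the scheme is both statistically hiding and statistically sum-binding, and combine an Uhlmann-style attack on binding (following the Lo--Chau/Mayers template) with a receiver-side attack on hiding that exploits the predictor $\cA$. The key observation is that binding forces the receiver's reduced commitment states for $b=0$ and $b=1$ to be nearly orthogonal on average over $x$, while the threshold $\epsilon\ge 1/2+1/\poly(\secp)$ is exactly what allows a malicious receiver to convert this per-$x$ gap into a non-negligible distinguishing advantage.

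First I would consider a pair of unbounded malicious committers $C^*_0, C^*_1$ that act identically in the commit phase, both preparing $\ket{\Psi_{x,0}}_{\qreg{C},\qreg{R}}$ and sending $\qreg{C}$. In the reveal phase, $C^*_0$ opens to $0$ honestly, while $C^*_1$ applies on $\qreg{R}$ the Uhlmann unitary $U_x$ achieving $|\bra{\Psi_{x,1}}(I\otimes U_x)\ket{\Psi_{x,0}}|^2 = F(\rho_{x,0},\rho_{x,1})$, where $\rho_{x,b} := \Tr_{\qreg{R}}\ket{\Psi_{x,b}}\bra{\Psi_{x,b}}$, and opens to $1$. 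Crucially, $U_x$ depends only on $x$ (not $y$), so the committer can compute it. Perfect correctness implies $\Lambda_{y,1}\ge \ket{\Psi_{x,1}}\bra{\Psi_{x,1}}$ for every $(x,y)$ in the support of $D_\secp$, hence $p_1\ge \mathbb{E}_{x}[F(\rho_{x,0},\rho_{x,1})]$ while $p_0 = 1$. Statistical sum-binding then forces $\mathbb{E}_x[F(\rho_{x,0},\rho_{x,1})] \le \negl(\secp)$; combined with the Fuchs--van de Graaf bound $1-\TD(\rho,\sigma)\le \sqrt{F(\rho,\sigma)}$ and Jensen's inequality, this yields $\mathbb{E}_x[1-\TD(\rho_{x,0},\rho_{x,1})] \le \sqrt{\negl(\secp)}$.

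Next I would construct an unbounded malicious receiver $\A$ that breaks hiding: on input $y$ and the commitment register $\qreg{C}$, $\A$ computes $x^* := \cA(y)$ and applies to $\qreg{C}$ the optimal distinguishing POVM $\{M^{x^*}_0, M^{x^*}_1\}$ for $\rho_{x^*,0}$ versus $\rho_{x^*,1}$, outputting the guess. Its distinguishing advantage is
\[
\sum_{x,y} D_\secp(x,y)\,\Tr\!\left(M^{\cA(y)}_1(\rho_{x,1}-\rho_{x,0})\right),
\]
which I split according to whether $x=\cA(y)$. The $x=\cA(y)$ part equals $\sum_y D_\secp(\cA(y),y)\,\TD(\rho_{\cA(y),0},\rho_{\cA(y),1})$; writing $\TD = 1-(1-\TD)$, it is at least $\sum_y D_\secp(\cA(y),y) - \sum_y D_\secp(\cA(y),y)(1-\TD(\rho_{\cA(y),0},\rho_{\cA(y),1})) \ge \epsilon - \mathbb{E}_x[1-\TD(\rho_{x,0},\rho_{x,1})] \ge \epsilon - \sqrt{\negl(\secp)}$, using the $\epsilon$-correlated condition $\sum_y D_\secp(\cA(y),y)\ge \epsilon$. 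The $x\ne\cA(y)$ part has absolute value at most $1-\epsilon$ because each summand lies in $[-1,1]$. Combining, the net advantage is at least $2\epsilon-1-\sqrt{\negl(\secp)}\ge 1/\poly(\secp)$, contradicting statistical hiding.

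The main technical subtlety is transferring the per-$x$ distinguishability bound (taken over the marginal of $x$) to the reweighted sum $\sum_y D_\secp(\cA(y),y)(1-\TD(\rho_{\cA(y),0},\rho_{\cA(y),1}))$ appearing in the hiding analysis; this is handled by observing that the latter is upper-bounded by $\sum_{x,y}D_\secp(x,y)(1-\TD(\rho_{x,0},\rho_{x,1})) = \mathbb{E}_x[1-\TD]$, because one simply extends the sum to all $x$ with non-negative contributions. The threshold $\epsilon>1/2+1/\poly(\secp)$ is precisely what ensures that the diagonal gain of at least $\epsilon$ beats the worst-case off-diagonal loss of at most $1-\epsilon$ by a non-negligible margin $2\epsilon-1$.
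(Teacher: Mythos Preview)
Your argument is correct, and it uses the same two ingredients as the paper---the Uhlmann committer-side attack and the predictor-based receiver-side attack---but composes them in the opposite order, which ends up being more streamlined. The paper starts from statistical hiding, runs the receiver-side attack to deduce that $\sum_{x,y}p_{x,y}\Pr[x\gets\cA(y)]\TD(\rho_{x,0},\rho_{x,1})$ is small, converts to a lower bound on $\sum_{x,y}p_{x,y}\sqrt{F(\rho_{x,0},\rho_{x,1})}$, then applies a Markov-style averaging to carve out a polynomial-fraction ``good'' set $G$ on which $\sqrt{F}\ge 1/(2p)$, intersects with a correctness set $T$, and finally runs the Uhlmann attack on $G\cap T$ to break binding. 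You instead start from statistical sum-binding, use perfect correctness to get $\Lambda_{y,1}\ge\ket{\Psi_{x,1}}\bra{\Psi_{x,1}}$ on the support and hence $\mathbb{E}_x[F(\rho_{x,0},\rho_{x,1})]\le\negl(\secp)$ in one shot, push through Fuchs--van de Graaf and Jensen to $\mathbb{E}_x[1-\TD]\le\sqrt{\negl}$, and then break hiding via a clean diagonal/off-diagonal split yielding advantage $\ge 2\epsilon-1-\sqrt{\negl}$. Your route avoids the good-set averaging entirely; the price is that it leans on perfect correctness (which the stated definition does grant), whereas the paper's proof is written to tolerate $1-\negl$ correctness.
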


\begin{remark}
This in particular means that
statistically-hiding and statistically-binding commitments in the common reference string (CRS) model
is impossible. Here, the CRS model is a special case of the correlated randomness model where
a setup algorithm samples a bit string $x$ from a distribution, and sends the same $x$ to both the committer and the receiver.
\end{remark}

\begin{remark}
\label{rem:symmertric-shared-randomness}
The following symmetric randomness setup (which is a classical analogue of the CRQS model) is a special case of the correlated randomness model.
Therefore, statistically-secure commitments are also impossible with the following setup if it is $\epsilon$-correlated with $\epsilon\ge\frac{1}{2}+\frac{1}{\poly(\secp)}$.     
\begin{enumerate}
    \item 
    Sample $k\gets\cK_\secp$.
    \item 
    Sample $x\gets E_k$.
    Sample $y\gets E_k$.
    Here $\{E_k\}_{k\in\cK_\secp}$ is a family of certain distributions.
    \item 
    Output $(x,y)$.
\end{enumerate}
\end{remark}

\begin{remark}
For the general case, this result is essentially tight, because statistically-hiding and statistically-binding commitments are possible (even completely classically) if
the setup algorithm samples two bit strings $x_0,x_1$ and a single bit $c$ uniformly at random, and sends
$(c,x_c)$ to the committer and $(x_0,x_1)$ to the receiver.
Such a distribution is $\frac{1}{2}$-correlated.

On the other hand, this is not necessarily tight in the specific case of symmetric shared randomness, where all parties must receive samples from the same distribution (as in \Cref{rem:symmertric-shared-randomness}). While we are not aware of any construction, even if $\epsilon \le \frac{1}{2}$, we do not rule one out. We leave as an open problem to show an impossibility in this case.
\end{remark}

\begin{proof}[Proof of \cref{thm:correlated}]
For each $x$, define
$\rho_{x,b}\coloneqq\Tr_{\qreg{R}}(\ket{\Psi_{x,b}}_{\qreg{R},\qreg{C}})$.
For each $x$, let $\{\Pi_x,I-\Pi_x\}$ be a POVM measurement such that
\begin{align}
\TD(\rho_{x,0},\rho_{x,1})=\Tr[\Pi_x(\rho_{x,0}-\rho_{x,1})].    
\label{tracePOVM}
\end{align}
By assumption, there exists an algorithm $\cA$ and a polynomial $p$ such that
\begin{align}
\Pr[x\gets\cA(y):(x,y)\gets D_\secp]\ge \frac{1}{2}+\frac{1}{p(\secp)}.    
\label{correlationassumption}
\end{align}
Let us consider the following adversary for the statistical hiding.
\begin{enumerate}
    \item 
    Get $y$ and the register $\qreg{C}$ as input.
    \item 
    Run $x'\gets\cA(y)$.
    \item 
    Measure the register $\qreg{C}$ with the POVM measurement $\{\Pi_{x'},I-\Pi_{x'}\}$. 
    If $\Pi_{x'}$ is obtained, output 0.
    Otherwise, output 1.
\end{enumerate}
Let $p_{x,y}$ be the probability that $(x,y)$ is sampled from $D_\secp$.
The probability that the adversary correctly answers 0 when $b=0$ is
at least
\begin{align}
\sum_{x,y}p_{x,y}\Pr[x\gets\cA(y)]\Tr[\Pi_{x}\rho_{x,0}]    
\end{align}
and
the probability that the adversary correctly answers 1 when $b=1$ is
at least
\begin{align}
\sum_{x,y}p_{x,y}\Pr[x\gets\cA(y)]\Tr[(I-\Pi_{x})\rho_{x,1}].
\end{align}
Because of the statistical hiding, 
\begin{align}
\sum_{x,y}p_{x,y}\Pr[x\gets\cA(y)]\Tr[\Pi_{x}\rho_{x,0}]    
+\sum_{x,y}p_{x,y}\Pr[x\gets\cA(y)]\Tr[(I-\Pi_{x})\rho_{x,1}]
\le1+\negl(\secp),
\end{align}
which means
\begin{align}
\sum_{x,y}p_{x,y}\Pr[x\gets\cA(y)]\Tr[\Pi_{x}(\rho_{x,0}-\rho_{x,1})]    
\le
1-
\sum_{x,y}p_{x,y}\Pr[x\gets\cA(y)]
+\negl(\secp).
\label{aa}
\end{align}
Therefore
\begin{align}
\sum_{x,y}p_{x,y}\Pr[x\gets\cA(y)]\left(1-\sqrt{F(\rho_{x,0},\rho_{x,1})}\right)
&\le    
\sum_{x,y}p_{x,y}\Pr[x\gets\cA(y)]\TD(\rho_{x,0},\rho_{x,1})    \\
&\le1-
\sum_{x,y}p_{x,y}\Pr[x\gets\cA(y)]
+\negl(\secp),
\label{bb}
\end{align}
where we have used the relation $1-\sqrt{F(\rho,\sigma)}\le\TD(\rho,\sigma)$,
\cref{tracePOVM}, and \cref{aa}.
This means
\begin{align}
\sum_{x,y}p_{x,y}\sqrt{F(\rho_{x,0},\rho_{x,1})}&\ge    
\sum_{x,y}p_{x,y}\Pr[x\gets\cA(y)]\sqrt{F(\rho_{x,0},\rho_{x,1})}\\
&\ge2\sum_{x,y}p_{x,y}\Pr[x\gets\cA(y)]
-1
-\negl(\secp)\\
&\ge\frac{2}{p(\secp)}-\negl(\secp)\\
&\ge\frac{1}{p(\secp)}.
\end{align}
Here we have used \cref{bb} and \cref{correlationassumption}.
Define
\begin{align}
G\coloneqq\left\{(x,y):\sqrt{F(\rho_{x,0},\rho_{x,1})}\ge\frac{1}{2p}\right\}.    
\end{align}
Then the standard average argument shows that
$\sum_{(x,y)\in G}p_{x,y}\ge \frac{1}{2p}$.

Due to the correctness,
\begin{align}
\sum_{x,y}p_{x,y}\bra{\Psi_{x,1}}\Lambda_{y,1}\ket{\Psi_{x,1}}\ge1-\negl(\secp).    
\end{align}
If we define 
\begin{align}
   T\coloneqq\left\{(x,y):
\bra{\Psi_{x,1}}\Lambda_{y,1}\ket{\Psi_{x,1}}\ge \sqrt{1-\frac{1}{5p^2}}    
   \right\}, 
   \label{defT}
\end{align}
the standard average argument shows that
$\sum_{(x,y)\in T}p_{x,y}\ge1-\negl(\secp)$.
Moreover, from the union bound,
\begin{align}
\sum_{(x,y)\in G\cap T}p_{x,y}\ge \frac{1}{2p}-\negl(\secp).    
\label{union}
\end{align}

Let $U^x$ be a unitary such that
$F(\rho_{x,0},\rho_{x,1})=|\bra{\Psi_{x,1}}(U_{\qreg{R}}^x\otimes I_{\qreg{C}})\ket{\Psi_{x,0}}|^2$.
Let us consider the following adversary $\cB$ for the statistical binding.
\begin{enumerate}
\item 
Receive $x$ from the setup algorithm.
    \item 
    Commit $b=0$ honestly.
    \item 
    To open $b=0$, just send $\qreg{R}$.
    \item 
    To open $b=1$, apply $U^x$ on $\qreg{R}$ and send $\qreg{R}$.
\end{enumerate}
Due to the correctness, the probability that $R$ opens 0 is at least $1-\negl(\secp)$.
The probability that $R$ opens 1 is
\begin{align}
\sum_{x,y}p_{x,y} \bra{\Psi_{x,0}}U^{x\dagger} \Lambda_{y,1}U^x\ket{\Psi_{x,0}}   
&\ge
\sum_{(x,y)\in G\cap T}p_{x,y} \bra{\Psi_{x,0}}U^{x\dagger} \Lambda_{y,1}U^x\ket{\Psi_{x,0}}   \\
&\ge
\sum_{(x,y)\in G\cap T}p_{x,y} \left[\bra{\Psi_{x,1}} \Lambda_{y,1}\ket{\Psi_{x,1}} 
-\sqrt{1-\frac{1}{4p^2}}\right]  \\
&\ge
\sum_{(x,y)\in G\cap T}p_{x,y} \left[\sqrt{1-\frac{1}{5p^2}} 
-\sqrt{1-\frac{1}{4p^2}}\right]  \\
&\ge
\left(\frac{1}{2p}-\negl(\secp)\right) \left[\sqrt{1-\frac{1}{5p^2}} 
-\sqrt{1-\frac{1}{4p^2}}\right]\\
&\ge\frac{1}{\poly(\secp)}.
\end{align}
Here we have used \cref{defT} and \cref{union}.
Also, in the second inequality, 
we have used the fact that
\begin{align}
|\bra{\Psi_{x,0}}U^{x\dagger}\Lambda_{y,1}U^x\ket{\Psi_{x,0}} 
-\bra{\Psi_{x,1}}\Lambda_{y,1}\ket{\Psi_{x,1}}|
&\le
\TD(U^x\ket{\Psi_{x,0}}\bra{\Psi_{x,0}}U^{x\dagger},
\ket{\Psi_{x,1}}\bra{\Psi_{x,1}})\\
&\le
\sqrt{1-F(U^x\ket{\Psi_{x,0}},\ket{\Psi_{x,1}})}\\
&=
\sqrt{1-F(\rho_{x,0},\rho_{x,1})}\\
&\le
\sqrt{1-\frac{1}{4p^2}}.
\end{align}
for any $(x,y)\in G$. 
In conclusion, the sum-binding is broken by $\cB$.
\end{proof}

\begin{theorem}
Statistically-hiding and statistically-binding (non-interactive) commitments
in the correlated randomness model with $\{D_\secp\}_{\secp\in\mathbb{N}}$
is impossible if $\{D_\secp\}_{\secp\in\mathbb{N}}$
satisfies
$\|D_\secp-E_\secp\times F_\secp\|_1<\frac{5-2\sqrt{2}}{17}=0.1277...$
for some families of distributions $\{E_\secp\}_{\secp\in\mathbb{N}}$ and $\{F_\secp\}_{\secp\in\mathbb{N}}$.
\end{theorem}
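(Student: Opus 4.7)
The plan is to reduce the correlated-randomness setting to an essentially plain-model one by swapping $D_\secp$ for the nearby product $E_\secp \times F_\secp$ and then mounting a Uhlmann-style binding attack as in the Lo--Chau/Mayers proof. Set $\delta \coloneqq \|D_\secp - E_\secp \times F_\secp\|_1$, so $\TD(D_\secp, E_\secp \times F_\secp) = \delta/2$, and consider the hypothetical experiment $\mathsf{Exp}_2$ in which the setup samples $(x,y)\gets E_\secp \times F_\secp$. Any $[0,1]$-valued expectation over $(x,y)$ shifts by at most $\delta/2$ when one passes from $D_\secp$ to the product; in particular, in $\mathsf{Exp}_2$ the correctness is at least $1-\delta/2-\negl$, and the trace distance between the receiver's (classical-quantum) view for $b=0$ and $b=1$ is at most $\delta+\negl$. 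Because $y$ is independent of $x$ in $\mathsf{Exp}_2$, this is equivalent to $\TD(\rho^{(0)},\rho^{(1)})\le\delta+\negl$ with $\rho^{(b)}\coloneqq\Exp_{x\gets E_\secp}\rho_{x,b}$.

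Next I would purify the committer side by defining
$\ket{\Phi_b}_{\qreg{E},\qreg{R},\qreg{C}} \coloneqq \sum_x \sqrt{E_\secp(x)}\,\ket{x}_{\qreg{E}}\ket{\Psi_{x,b}}_{\qreg{R},\qreg{C}}$,
so that $\Tr_{\qreg{E},\qreg{R}}(\ket{\Phi_b}\bra{\Phi_b}) = \rho^{(b)}$. Fuchs--van de Graaf gives $F(\rho^{(0)},\rho^{(1)})\ge (1-\delta-\negl)^2$, and Uhlmann's theorem produces a unitary $W$ on $\qreg{E}\cup\qreg{R}$ with $|\bra{\Phi_1}(W\otimes I)\ket{\Phi_0}|^2 \ge (1-\delta-\negl)^2$. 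The cheating committer ignores its setup input, coherently prepares $\ket{\Phi_0}$ on $(\qreg{E},\qreg{R},\qreg{C})$, and sends $\qreg{C}$; to open $0$ it sends $\qreg{R}$ as is, and to open $1$ it first applies $W$ to $\qreg{E}\cup\qreg{R}$ and then sends $\qreg{R}$. Data processing combined with the pure-state identity $\TD(\ket{\phi}\bra{\phi},\ket{\psi}\bra{\psi})=\sqrt{1-|\braket{\phi|\psi}|^2}$ show that in $\mathsf{Exp}_2$ the two openings succeed with probability at least $1-\delta/2-\negl$ and $1-\delta/2-\sqrt{2\delta-\delta^2}-\negl$ respectively, so their sum is at least $2-\delta-\sqrt{2\delta-\delta^2}-\negl$.

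Because the strategy is independent of $x$, transporting it back to the real experiment only changes the $y$-marginal, and by data processing $\TD(D_{\secp,2},F_\secp) \le \delta/2$; each opening probability degrades by at most $\delta/2$. Hence the real-experiment sum is at least $2 - 2\delta - \sqrt{2\delta-\delta^2} - \negl$, contradicting statistical sum-binding whenever $2\delta + \sqrt{2\delta-\delta^2} < 1$. Solving this inequality yields an explicit positive threshold on $\delta$; the specific value $(5-2\sqrt{2})/17$ in the statement should emerge from the sharpest accounting of the three sources of loss in the argument: the hiding loss, the correctness loss, and the loss incurred by changing the receiver's $y$-marginal. The main obstacle I foresee is to balance these three error terms tightly enough to squeeze out the stated constant, and in particular to verify that the $x$-oblivious committer indeed pays only the $\delta/2$ cost predicted by data processing for the change of $y$-marginal rather than something larger arising from the (arbitrary) coupling between $x$ and $y$ under $D_\secp$.
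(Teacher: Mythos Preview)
Your approach is essentially the same as the paper's: swap $D_\secp$ for the product $E_\secp\times F_\secp$, use hiding to bound $\TD(\rho^{(0)},\rho^{(1)})$, purify, apply Uhlmann to get the cheating unitary on the committer's side, and then transport back. The paper carries out exactly this argument and arrives at $p_0+p_1\ge 2-4\epsilon-\sqrt{2\epsilon-\epsilon^2}-\negl(\secp)$ with $\epsilon=\|D_\secp-E_\secp\times F_\secp\|_1$, which is where the threshold $(5-2\sqrt{2})/17$ comes from (it solves $4\epsilon+\sqrt{2\epsilon-\epsilon^2}=1$).

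Your accounting is actually tighter than the paper's, so do not expect to recover that specific constant. The paper bounds each expectation shift by $\epsilon=\|D_\secp-E_\secp\times F_\secp\|_1$, whereas you (correctly) bound it by $\TD(D_\secp,E_\secp\times F_\secp)=\epsilon/2$, using that $|\mathbb{E}_P[f]-\mathbb{E}_Q[f]|\le\TD(P,Q)$ for $f\in[0,1]$. With that sharper step your inequality $2\delta+\sqrt{2\delta-\delta^2}<1$ is exactly right and gives threshold $\delta<1/5$, strictly improving on the paper's $0.1277\ldots$. So your worry about squeezing out the stated constant is unfounded: your bookkeeping overshoots it. Your final concern about the $x$-oblivious committer paying only $\delta/2$ is also fine; since the strategy depends only on $y$, only the $y$-marginal matters, and data processing gives $\TD(D_{\secp,2},F_\secp)\le\TD(D_\secp,E_\secp\times F_\secp)=\delta/2$ regardless of the $(x,y)$-coupling.
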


\begin{proof}
Assume that there is a statistically-hiding and statistically-binding (non-interactive) commitment scheme
in the correlated randomness model with $\{D_\secp\}_{\secp\in\mathbb{N}}$ such that
$\|D_\secp-E_\secp\times F_\secp\|_1\le\epsilon$
for some families of distributions $\{E_\secp\}_{\secp\in\mathbb{N}}$ and $\{F_\secp\}_{\secp\in\mathbb{N}}$,
where $\epsilon=\frac{5-2\sqrt{2}}{17}$.

Let $p_{x,y}$ be the probability that $(x,y)$ is sampled from $D_\secp$.
Let $e_x$ be the probability that $x$ is sampled from $E_\secp$.
Let $f_y$ be the probability that $y$ is sampled from $F_\secp$.
The assumption
$\|D_\secp-E_\secp\times F_\secp\|_1\le\epsilon$
means that 
\begin{align}
\sum_{x,y}|p_{x,y}-e_xf_y|\le\epsilon.
\label{nocorrelation}
\end{align}

Let $\rho_{x,b}\coloneqq \Tr_{\qreg{R}}(\ket{\Psi_{x,b}}_{\qreg{R},\qreg{C}})$,
where $\ket{\Psi_{x,b}}_{\qreg{R},\qreg{C}}$ is the state that the honest committer generates to commit $b$.
Due to the statistical hiding,
\begin{align}
1-\sqrt{F\left(\sum_xe_x\rho_{x,0},\sum_xe_x\rho_{x,1}\right)}
&\le
\frac{1}{2}\left\|\sum_{x}e_x \rho_{x,0}-\sum_x e_x \rho_{x,1} \right\|_1  \\
&=
\frac{1}{2}\left\|(\sum_{x}e_x \rho_{x,0})\otimes(\sum_y f_y \ket{y}\bra{y})
-(\sum_x e_x \rho_{x,1})\otimes(\sum_y f_y \ket{y}\bra{y}) \right\|_1  \\
&=
\frac{1}{2}\left\|\sum_{x,y}e_xf_y \rho_{x,0}\otimes\ket{y}\bra{y}
-\sum_{x,y} e_xf_y \rho_{x,1}\otimes \ket{y}\bra{y} \right\|_1  \\
&\le
\sum_{x,y}|p_{x,y}-e_xf_y|\\
&~~~+\frac{1}{2}\left\|\sum_{x,y}p_{x,y} \rho_{x,0}\otimes\ket{y}\bra{y}
-\sum_{x,y} p_{x,y} \rho_{x,1}\otimes \ket{y}\bra{y} \right\|_1  \\
&\le \epsilon+\negl(\secp),
\end{align}
which means
\begin{align}
   F\left(\sum_x e_x\rho_{x,0} ,\sum_x e_x\rho_{x,1}\right)
   \ge (1-\epsilon)^2-\negl(\secp).
   \label{Flowerbound}
\end{align}
Define the state
    \begin{align}
    \ket{\Phi_b}_{\qreg{A},\qreg{R},\qreg{C}}\coloneqq   \sum_{x}\sqrt{e_{x}}\ket{x}_{\qreg{A}}\otimes\ket{\Psi_{x,b}}_{\qreg{R},\qreg{C}}.
    \end{align}
Due to the Uhlmann's theorem and \cref{Flowerbound}, there exists a unitary $U$ acting on registers $\qreg{A}$ and $\qreg{R}$ such that
\begin{align}
|\bra{\Phi_1}    U_{\qreg{A},\qreg{R}} \ket{\Phi_0}_{\qreg{A},\qreg{R},\qreg{C}}|^2\ge (1-\epsilon)^2-\negl(\secp).
\label{innerproduct}
\end{align}

Let us consider the following malicious committer:
\begin{enumerate}
    \item 
    Get $x$ as input (and ignore it).
    \item 
    Generate the state $\ket{\Phi_0}_{\qreg{A},\qreg{R},\qreg{C}}$,
    and send the register $\qreg{C}$ to the receiver as the commitment.
    \item 
    To open $b=0$, just send the register $\qreg{R}$ to the receiver.
    \item 
    To open $b=1$, apply $U$ on the registers $\qreg{A}$ and $\qreg{R}$, and send the register $\qreg{R}$ to the receiver.
\end{enumerate}
The probability that the receiver opens $b=0$ is
\begin{align}
   \sum_{x,y}p_{x,y}\sum_{x'}e_{x'}\bra{\Psi_{x',0}}\Lambda_{y,0}\ket{\Psi_{x',0}} 
   &\ge
   \sum_{x,y}e_xf_y\sum_{x'}e_{x'}\bra{\Psi_{x',0}}\Lambda_{y,0}\ket{\Psi_{x',0}} -\epsilon\\
   &=
   \sum_{x,y}e_xf_y\bra{\Psi_{x,0}}\Lambda_{y,0}\ket{\Psi_{x,0}} -\epsilon\\
   &\ge
   \sum_{x,y}p_{x,y}\bra{\Psi_{x,0}}\Lambda_{y,0}\ket{\Psi_{x,0}} -2\epsilon\\
   &\ge1-\negl(\secp)-2\epsilon,
\end{align}
where we have used the correctness and \cref{nocorrelation}.
The probability that the receiver opens $b=1$ is
\begin{align}
   \sum_{x,y}p_{x,y}\bra{\Phi_0}U^\dagger \Lambda_{y,1}U\ket{\Phi_{0}} 
   &\ge
   \sum_{x,y}p_{x,y}\bra{\Phi_1} \Lambda_{y,1}\ket{\Phi_1} 
   -\sqrt{2\epsilon-\epsilon^2}\\
    &\ge
   \sum_{x,y}e_xf_y\bra{\Phi_1} \Lambda_{y,1}\ket{\Phi_1} 
   -\sqrt{2\epsilon-\epsilon^2}-\epsilon\\
     &=
   \sum_{y}f_y\bra{\Phi_1} \Lambda_{y,1}\ket{\Phi_1} 
   -\sqrt{2\epsilon-\epsilon^2}-\epsilon\\
      &=
   \sum_{y}f_y
   \sum_x e_x
   \bra{\Psi_{x,1}} \Lambda_{y,1}\ket{\Psi_{x,1}} 
   -\sqrt{2\epsilon-\epsilon^2}-\epsilon\\
       &\ge
   \sum_{x,y}p_{x,y}
   \bra{\Psi_{x,1}} \Lambda_{y,1}\ket{\Psi_{x,1}} 
   -\sqrt{2\epsilon-\epsilon^2}-2\epsilon\\
        &\ge
   1-\negl(\secp)-\sqrt{2\epsilon-\epsilon^2}-2\epsilon,
\end{align}
where we have used the correctness, \cref{nocorrelation}, and \cref{innerproduct}.
Therefore if $\epsilon<\frac{5-2\sqrt{2}}{17}$, the statistical sum-binding is broken.
\end{proof}

\if0
\mor{There is a bug}
\subsection{A special case}
Let us consider the following distribution $\{\tilde{D}_\secp\}_\secp$.
\begin{enumerate}
    \item 
    Sample $x\gets\bit^\secp$.
    \item 
    Sample $y$ from $E_{x}$, where $\{E_x\}_x$ is a family of certain distributions.
    \item 
    Output $(x,y)$.
\end{enumerate}

\begin{theorem}
\label{thm:Rsample}
Statistically-hiding and statistically-binding (non-interactive) commitments in the correlated randomness model
with $\{\tilde{D}_\secp\}_\secp$ 
is impossible.
\end{theorem}

\begin{remark}
It is interesting to notice that
as is explained in \cref{rem:stronger},
its quantum analogue (where the setup algorithm samples $k\gets\bit^\secp$, and
sends $k$ to the committer and $\ket{\psi_k}$ to the receiver)
enables statistically-hiding and statistically-binding non-interactive commitments.
\end{remark}

\begin{proof}
Assume that \cref{thm:Rsample} is wrong. Then
there exists a statistically-hiding and statistically-binding 
commitment scheme in the correlated randomness model with $\{\tilde{D}_\secp\}_\secp$.
Let $\{\Lambda_{y,b},I-\Lambda_{y,b}\}$ be the receiver's verification POVM measurement
when it receives $y$ from the setup and $b$ from the committer.

Then let us consider a non-interactive commitment scheme in the CRS model where
the setup algorithm distributes $x\gets\bit^\secp$, and the receiver's verification POVM
measurement is $\{\Lambda'_{x,b},I-\Lambda'_{x,b}\}$ when the receiver receives $x$ from the setup
and $b$ from the committer.
Here, $\Lambda_{x,b}'\coloneqq \sum_{y}p_y^x\Lambda_{y,b}$,
where $p_y^x$ is the probability that $y$ is sampled from $E_x$.

If is easy to verify that such a commitment in the CRS model satisfies
the correctness, statistical hiding and statistical binding.
However, it contradicts \cref{thm:correlated}.
Therefore, \cref{thm:Rsample} is correct.
\end{proof}
\fi
\ifnum\submission=1
    
\fi
\section{Definition of OTs in the CRQS model}\label{def:OT_CRQS}
We define OTs in the CRQS model.
\begin{definition}[OTs in the CRQS model] 
An oblivious transfer (OT) in the CRQS model is given by a tuple of the setup algorithm $\setup$, sender $S$, and receiver $R$, all of which are uniform QPT algorithms. 
The scheme is divided into two phases, the setup phase and execution phase as follows:   
\begin{itemize}
    \item 
    {\bf Setup phase:}
    $\setup$ takes $1^\secp$ as input, uniformly samples a classical key $k\la \mathcal{K}_{\secp}$, generates two copies of the same pure state $\ket{\psi_k}$ and sends one copy each to $S$ and $R$.  
    \item
{\bf Execution phase:} 
In addition to $\ket{\psi_k}$ sent from the setup algorithm,  
$S$ takes a pair of classical messages $(m_0,m_1)$
and $R$ takes a bit $b$ as their private inputs, respectively. 
$S$ and $R$ interact through quantum channel, after which 
$S$ outputs $\top$ or $\bot$ and 
$R$ outputs a classical message $m'$ or $\bot$. 
\end{itemize}
As correctness, we require that 
$S$ outputs $\top$ and 
$R$ outputs $m_b$ with probability $1$ if the protocol is run honestly.
\end{definition}

\begin{definition}[Security of OTs in the CRQS model]
We say that an OT in the CRQS model $(\setup,S,R)$ is $t$-copy statistically maliciously simulation-secure if it satisfies the following.

\paragraph{\bf $t$-copy statistical receiver security:}
There is a uniform QPT algorithm $\mathsf{Sim}$ (called the simulator) such that for any non-uniform unbounded-time sender $S^*=\{S^*_\secp,\rho_\secp\}_{\secp\in \mathbb{N}}$ and any bit $b$,  
$$
\TD(\mathsf{Real}\text{-}\mathsf{R}_\secp^{S^*,b,t},\mathsf{Ideal}\text{-}\mathsf{R}_\secp^{S^*,\mathsf{Sim},b,t})\le \negl(\secp) 
$$
where the experiments $\mathsf{Real}\text{-}\mathsf{R}_\secp^{S^*,b,t}$ and $\mathsf{Ideal}\text{-}\mathsf{R}_\secp^{S^*,\mathsf{Sim},b,t}$
are defined as follows.
\begin{itemize}
\item $\mathsf{Real}\text{-}\mathsf{R}_\secp^{S^*,b,t}$: 
The experiment chooses $k\gets \mathcal{K}_{\secp}$ and sends $t$ copies of $\ket{\psi_k}$ to both $S^*$ and $R$.  
The malicious sender
$S^*$ interacts with the honest receiver $R$ with the private input $b$ in the execution phase. 
Let $m'$ be the output of $R$ and $\tau_{S^*}$ be the final state of $S^*$. The experiment outputs a tuple $(\tau_{S^*},m')$. 
\item $\mathsf{Ideal}\text{-}\mathsf{R}_\secp^{S^*,\mathsf{Sim},b,t}$: 
$\mathsf{Sim}$ takes $(S^*_\secp,\rho_\secp)$ as input and
outputs a state $\tilde{\tau}_{S^*}$ along with a pair of messages $(m_0,m_1)$ or $\mathsf{abort}$. 
In the case of $\mathsf{abort}$, then 
the experiment outputs a tuple  $(\tilde{\tau}_{S^*},\bot)$ 
and otherwise outputs $(\tilde{\tau}_{S^*},m_b)$.  
\end{itemize}

\paragraph{\bf $t$-copy statistical sender security:}
There is a uniform QPT algorithm $\mathsf{Sim}$ (called the simulator) such that for any non-uniform unbounded-time malicious receiver $R^*=\{R^*_\secp,\rho_\secp\}_{\secp\in \mathbb{N}}$ and any pair of messages  $(m_0,m_1)$,  
$$
\TD(\mathsf{Real}\text{-}\mathsf{S}_\secp^{R^*,(m_0,m_1),t},\mathsf{Ideal}\text{-}\mathsf{S}_\secp^{\mathsf{Sim},(m_0,m_1),t})\le \negl(\secp) 
$$
where the experiments $\mathsf{Real}\text{-}\mathsf{S}_\secp^{R^*,(m_0,m_1),t}$ and $\mathsf{Ideal}\text{-}\mathsf{S}_\secp^{R^*,\mathsf{Sim},(m_0,m_1),t}$
are defined as follows.
\begin{itemize}
\item $\mathsf{Real}\text{-}\mathsf{S}_\secp^{R^*,(m_0,m_1),t}$: 
The experiment chooses $k\gets \mathcal{K}_{\secp}$ and sends $t$ copies of $\ket{\psi_k}$ to both $S$ and $R^*$.  
The malicious sender
$R^*$ interacts with the honest sender $S$ with the private input $(m_0,m_1)$ in the execution phase. 
Let 
$d\in \{\top,\bot\}$ be the output of $S$ and  $\tau_{R^*}$ be the final state of $R^*$. The experiment outputs a tuple $(\tau_{S^*},d)$. 
\item $\mathsf{Ideal}\text{-}\mathsf{S}_\secp^{R^*,\mathsf{Sim},(m_0,m_1),t}$: 
$\mathsf{Sim}$ takes $(R^*_\secp,\rho_\secp)$ as input and
outputs a state $\tilde{\tau}_{R^*}$ and $\tilde{d}\in \{\top,\bot\}$
The experiment outputs a tuple  $(\tilde{\tau}_{R^*},\tilde{d})$. 
\end{itemize}
\end{definition}

\fi

\end{document}